\title[Adiabatic approach for bulk-defect correspondences]{A space-adiabatic approach for bulk-defect correspondences in lattice models of topological insulators}
\author[D. Ojito]{Danilo Polo Ojito}
\address{Department of Physics and Department of Mathematical Sciences, Yeshiva University 
	\\New York, NY 10016, USA \\
	\href{mailto:danilo.poloojito@yu.edu}{danilo.poloojito@yu.edu}}
\author[E. Prodan]{Emil Prodan}
\address{Department of Physics and Department of Mathematical Sciences 
	\\Yeshiva University 
	\\New York, NY 10016, USA \\
	\href{mailto:prodan@yu.edu}{prodan@yu.edu}}
\author[T. Stoiber]{Tom Stoiber}
\address{Department of Physics and Department of Mathematical Sciences, Yeshiva University 
	\\New York, NY 10016, USA \\
	\href{mailto:tom.stoiber@yu.edu}{tom.stoiber@yu.edu}}
\date{\today}
\newtheorem{theorem}{Theorem}[section]
\newtheorem{definition}[theorem]{Definition}
\newtheorem{proposition}[theorem]{Proposition}
\newtheorem{lemma}[theorem]{Lemma}
\newtheorem{remark}[theorem]{Remark}
\newtheorem{example}[theorem]{Example}
\newcommand{\CM}{{\mathbb C}}
\newcommand{\RM}{{\mathbb R}}
\newcommand{\SM}{{\mathbb S}}
\newcommand{\TM}{{\mathbb T}}
\newcommand{\ZM}{{\mathbb Z}}
\newcommand{\KM}{{\mathbb K}}
\newcommand{\DM}{{\mathbb D}}
\newcommand{\Aa}{{\mathcal A}}
\newcommand{\Pp}{{\mathcal P}}
\newcommand{\Bb}{{\mathcal B}}
\newcommand{\Oo}{{\mathscr O}}
\newcommand{\Tt}{{\mathcal T}}
\newcommand{\Cc}{{\mathscr C}}
\newcommand{\Ll}{{\mathcal L}}
\newcommand{\one}{{\bf 1}}
\newcommand{\Tr}{\mbox{\rm Tr}}
\newcommand{\ev}{{\mbox{\rm ev}}}
\newcommand{\Ch}{{\rm Ch}} 
\newcommand{\Ker}{{\rm Ker}} 
\newcommand{\sgn}{{\rm sgn}} 
\newcommand{\diag}{{\rm diag}}
\newcommand{\difd}{\textup{d}}
\providecommand{\abs}[1]{\left \lvert#1 \right \rvert} 
\providecommand{\norm}[1]{\left \lVert#1 \right \rVert}
\begin{document}

\maketitle

%%%%%%%%%%%%%%%%%%%%%%%%%%%%%%%%%%%%%%%%%%%%%%%%%%%%
\begin{abstract}
In space-adiabatic approaches one can approximate Hamiltonians that are modulated slowly in space by phase-space functions that depend on position and momentum. In this paper, we establish a rigorous relation between this approach and the operator-theoretic approach for topological insulators with defects, which employs  $C^*$-algebras and operator K-theory. Using such tools, we show that by quantizing phase-space functions one can construct lattice Hamiltonians  which are gapped at certain spatial limits and carry protected states at defects such as boundaries, hinges, and corners. Moreover, we show that the topological invariants that protect the latter can be computed in terms of the symbol functions. This enables us to compute boundary maps in K-theory that are relevant for bulk-defect correspondences. 
\end{abstract}
%%%%%%%%%%%%%%%%%%

\tableofcontents
%%%%%%%%%%%%%%%%%%%%%%%%%%%%%%%%%%
\section{Introduction}\label{sec:intro}

Let $\Omega$ be a locally compact separable Hausdorff space supplied with a $\ZM^d$-action $(x,\omega) \mapsto x\triangleright\omega$, and $\alpha$ the induced $\ZM^d$-action  on $C_0(\Omega)$ given by $(\alpha_{-x}(f))(\omega)=f(x\triangleright \omega)$.  We can consider the crossed product algebra $C_0(\Omega)\rtimes_\alpha \ZM^d$ as the universal $C^*$-algebra generated by formal Fourier series
$$a \;=\; \sum_{q\in \ZM^d} a_q u^q$$
with commuting unitaries $u^q=u_1^{q_1}\cdots u_d^{q_d}$ representing the generators of the $\ZM^d$-action and coefficients $a_q\in C_0(\Omega)$ satisfying the commutation relation $u^x a_q u^{-x}=\alpha_x(a_q)$. If $\KM(\mathcal{H})$ is the algebra of compact operators over a separable Hilbert space $\mathcal{H}$, then the stabilization $\KM(\mathcal{H}) \otimes (C_0(\Omega)\rtimes \ZM^d)$ of this algebra accepts a family of left-regular representations on $\mathcal{H} \otimes \ell^2(\ZM^d)$ indexed by $\Omega$,\footnote{Below, the Fourier coefficients $a_q$ belong to $\KM(\mathcal{H}) \otimes C_0(\Omega)$.}
\begin{equation}\label{Eq:PiOmega}
	\pi_\omega(a)= \sum_{q \in \ZM^d}  A_q S_q, \quad 
A_q = \sum_{x \in \ZM^d} a_q(x \triangleright \omega) \otimes |x\rangle \langle x|, \quad S_q |x\rangle = |x+q\rangle.
\end{equation}
These representations depend continuously on $\omega$ in the strong operator topology and satisfy the covariance relation $S_x \pi_\omega(a)S_x^\ast = \pi_{x\triangleright \omega}(a)$. Due to the universal property of crossed products, the image of $\pi_\omega$ is therefore itself a crossed product algebra, namely $\pi_\omega(C_0(\Omega)\rtimes_\alpha \ZM^d)= C_0(\overline{\Oo(\omega)})\rtimes_\alpha \ZM^d$ where $\Oo(\omega)=\ZM^d\triangleright\omega$ is the orbit under the $\ZM^d$-action and one takes the closure in $\Omega$. If $\Omega$ is larger than the closure of a single orbit then there is no faithful irreducible representation on $\ell^2(\ZM^d)$, however, the direct integral of $(\pi_\omega)_{\omega\in \Omega}$ w.r.t. a Radon measure with full support on $\Omega$ results in a faithful representation of the crossed product. It is important to note that a representation $\pi_\omega$ or a family of such representations enables one to localize an element of the crossed product to $\ZM^d$-invariant subsets of $\Omega$.

Representations~\eqref{Eq:PiOmega} are close in spirit to the tight-binding lattice models with a space $\mathcal{H}$ of local degrees of freedom, used in condensed matter physics. However, as it is quite apparent in Eq.~\eqref{Eq:PiOmega}, the crossed product enables one to enforce a space dependence on the coupling matrices $W_q$. It is then of no surprise that such crossed products proved useful in the analysis of disordered tight-binding models \cite{SSt, PSbook, Dani, Deni1, Dani2}. In this paper, however, we will emphasize their use in the analysis of defects, understood as space-modulations of the coupling matrices. Since these physical situations do not appear very often in the framework of crossed product algebras, a few examples are in place.
\begin{example}
	{\rm
		A translation-invariant Hamiltonian is described by the one-point space $\Omega=\{*\}$ in which case $C_0(\Omega)\rtimes \ZM^d=\CM\rtimes \ZM^d \simeq C(\TM^d)$ by Fourier transform, with the continuous functions on the torus $\TM^d=\RM^d/\ZM^d$. }
\end{example} 
\begin{example}\label{Ex:Interface1}
	{\rm 
		A flat interface between two materials produces a domain wall. If the interface is perpendicular to the first direction of real-space $\RM^d$, then a general interface Hamiltonian can always be written in the form $H=\sum_{q \in \ZM^d}  W_q S_q $, with
		\begin{equation}\label{Eq:CouplingMat}
			W_q = \sum_{x\in \ZM^d} f_q(x_1) \otimes |x \rangle \langle x|, \quad f_q: \ZM \to \KM(\mathcal{H}), \quad \lim_{s \to \pm \infty} f_q(s)= w_q^\pm.
		\end{equation}
		Here, $w_q^\pm \in M_n(\CM)$ are coupling matrices for the periodic Hamiltonians $H^\pm= \sum_{q \in \ZM^d} w_q^\pm \otimes S_q$ of the two materials. Taking $\Omega= \ZM \sqcup \{\infty\} \sqcup \{-\infty\}$ to be the two-point-compactification of $\ZM$ with the $\ZM^d$-action $x\triangleright y = y-x_1$, then $H=\pi_0(h)$ for $h =\sum_{q \in \ZM^d} f_q \otimes S_q\in \KM\otimes (C(\Omega)\rtimes \ZM^d)$. Evaluation at the infinite points recovers the two asymptotic translation-invariant Hamiltonians $H_\pm = \pi_{\pm\infty}(h)$.}
\end{example}

\begin{example}\label{Ex:PointDefect}{\rm Let $\DM^d$ be the $d$-dimensional closed ball with boundary $\partial \DM^d=\SM^{d-1}$. Let $\chi : \RM^d \to \DM^d\setminus \SM^{d-1}$ be the homeomorphism $\chi(x)=x/(1+|x|)$ with inverse $\chi^{-1}: \DM^d \setminus \SM^{d-1} \to \RM^d$,  $\chi^{-1}(\zeta)=\zeta/(1-|\zeta|)$. Moreover, let $\tau$ represent the action of $\RM^d$ on itself and define the action $x \triangleright \zeta = (\chi \circ \tau_x \circ \chi^{-1})(\zeta)$ on $\DM^d\setminus \SM^{d-1}$. Then this action extends uniquely to an $\RM^d$-action on $\DM^d$ which acts trivially on $\partial \DM^d$. We can thus define a configuration space $\Omega:= \ZM^d \cup \SM^{d-1}$ by including $\ZM^d\to \RM^d\to \DM^d\setminus \SM^{d-1}$ with the $\ZM^d$-action obtained by restriction of the above $\RM^d$-action. By examining~\eqref{Eq:PiOmega}, we see that the representations $\pi_\omega$ of some $h \in C(\Omega)\rtimes \ZM^d)$ generate for $\omega\in \ZM^d$ models with point defects which for $\omega\to \infty$ in a fixed radial direction converge to a translation-invariant Hamiltonian $\pi_{\omega_\infty}(h)$ with $\omega_\infty\in \SM^{d-1}$. See subsection~\ref{sec:ex_defect} for generalizations and further analysis.
	}
\end{example}

In this paper, we will take the following view:

\begin{definition}
	On the $\ZM^d$-lattice, a material with defects is encoded in a self-adjoint element of a cross product $C(\Omega)\rtimes \ZM^d$ and its matrix amplifications, where $\Omega$ is a compact $\ZM^d$-space. Its closed subset $\Omega_B$ of points fixed by the $\ZM^d$-action represents the bulk phases of the inhomogeneous material.
\end{definition}

The above definition certainly covers many of the standard defects, such as the point and interface ones, but it also gives one the means to construct interesting and possibly new textures on regular lattices via the left regular representations~\eqref{Eq:PiOmega}.

In the field of topological insulators, the bulk-defect correspondence problem for inhomogeneous materials described by a configuration space $\Omega$ consists of identifying and classifying self-adjoint Hamiltonians $h\in \KM \otimes (C(\Omega)\rtimes \ZM^d)$ that have a spectral gap when restricted to a distinguished closed $\ZM^d$-invariant subset $\Omega_\infty$ with $\Omega_B \subseteq \Omega_\infty\subset \Omega$, but have protected gapless states when restricted to the complement $\Omega\setminus \Omega_\infty$. The bulk-defect correspondence is naturally described by complex K-theory: If $\Omega_\infty \subset \Omega$ is a closed $\ZM^d$-invariant subset, then the restriction map induces a surjective $*$-homomorphism $R_{\Omega_\infty}\colon C(\Omega)\rtimes \ZM^d\to  C(\Omega_\infty)\rtimes \ZM^d$, leading to the exact sequence
\begin{equation}\label{eq: 1}
C_0(\Omega\setminus\Omega_\infty)\rtimes \ZM^d \hookrightarrow C(\Omega)\rtimes \ZM^d \stackrel{R_{\Omega_\infty}}{\twoheadrightarrow} C(\Omega_\infty)\rtimes \ZM^d.
\end{equation}
The stable homotopy classes of Hamiltonians that are gapped when localized at $\Omega_\infty$ are classified by the K-groups of $C(\Omega_\infty)\rtimes \ZM^d$, and \eqref{eq: 1} implies that there are associated boundary maps $$\partial\colon K_i(C(\Omega_\infty)\rtimes \ZM^d)\to K_{1-i}(C_0(\Omega\setminus\Omega_\infty)\rtimes \ZM^d),$$
mapping invariants of Hamiltonians that are gapped on $\Omega_\infty$ to obstructions to gap-opening for Hamiltonians on $\Omega$. The tasks are then to compute the supports of the boundary maps and to generate explicit representatives for the K-theoretic classes inside the supports. 

The K-theory classes are connected to Hamiltonians via functional calculus:
\begin{definition}
\label{def:ktheory} 
Let $h\in  M_N(C(\Omega)\rtimes \ZM^d)$ be  a self-adjoint matrix with entries in $C(\Omega)\rtimes \ZM^d$ and let its image $h\rvert_{\Omega_\infty} := R_{\Omega_\infty}(h)\in M_N(C(\Omega_\infty)\rtimes \ZM^d)$ have a spectral gap $\Delta$ containing $0$. We associate to $h$ the following $K$-theory classes:
\begin{enumerate}
    \item[(i)] The bulk  K-theory class  $$\Psi_0(h\rvert_{\Omega_\infty}) \;:=\; [P_{\leq 0}(h\rvert_{\Omega_\infty})]_0 \in K_0(C(\Omega_\infty)\rtimes\ZM^d),$$ the class of the spectral projection onto the spectrum below $0$.
    \item[(ii)] The boundary class $$\Theta_1(h)\; := \;[e^{\imath \pi f(h)}]_1 \in K_1(C_0(\Omega\setminus \Omega_\infty)\rtimes\ZM^d),$$ where $f\in C(\RM)$ is any function which takes the constant values $-1${\rm(}$+1${\rm)} below{\rm(}above{\rm)} $\Delta$.
\end{enumerate}
We say that $h$ is chirally symmetric if $N$ is even and $h$ anti-commutes with the chiral symmetry operator $J=\diag(\one_{N/2},-\one_{N/2})$. In that case, we can further define 
\begin{enumerate}
    \item[(iii)] The bulk class $\Psi_1(h\rvert_{\Omega_\infty})=[u_F]_1\in K_1(C(\Omega_\infty)\rtimes\ZM^d)$ where the so-called Fermi unitary is defined by the off-diagonal part of $\sgn(h\rvert_{\Omega_\infty})$ in the grading provided by $J$, i.e.
    $$\sgn(h\rvert_{\Omega_\infty})\;=\; \begin{pmatrix}
        0 & u_F^* \\ u_F & 0
    \end{pmatrix}, \quad J\;=\; \begin{pmatrix}
        \one & 0 \\ 0 & -\one
    \end{pmatrix}.$$
    \item[(iv)] The boundary class $$\Theta_0(h)\;=\; \left[\frac{1}{2}(Je^{\imath \pi f(h)}+\one_N)\right]_0 - \left[\frac{1}{2}(J+\one_N)\right] \in K_0(C_0(\Omega\setminus \Omega_\infty)\rtimes\ZM^d)$$ for $f\in C(\RM)$ is any antisymmetric function which takes the constant values $-1$ below $\Delta$ and $+1$ above $\Delta$.
\end{enumerate}
\end{definition}

\begin{proposition}[\cite{PSbook}, Sec.~4.3] Those K-theoretic classes are by construction related via the boundary maps of the exact sequence \eqref{eq: 1} 
\begin{equation}
\label{eq:bbc}
\partial(\Psi_i(h))\; = \;\Theta_{1-i}(h), \qquad i \in \ZM_2.
\end{equation}
%hence the task is to identify all those $h$ that make~\eqref{eq:bbc} non-trivial.
\end{proposition}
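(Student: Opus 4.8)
The strategy is to recognize $\partial$ as the two connecting homomorphisms of the six-term exact sequence attached to \eqref{eq: 1} — the exponential map $\exp\colon K_0(C(\Omega_\infty)\rtimes\ZM^d)\to K_1(C_0(\Omega\setminus\Omega_\infty)\rtimes\ZM^d)$ for $i=0$ and the index map $\Ind\colon K_1(C(\Omega_\infty)\rtimes\ZM^d)\to K_0(C_0(\Omega\setminus\Omega_\infty)\rtimes\ZM^d)$ for $i=1$ — and to evaluate each on the stated bulk representative using a lift built by continuous functional calculus from the \emph{full} Hamiltonian $h$ over $\Omega$. Two facts are used throughout: first, $R_{\Omega_\infty}$ intertwines functional calculus, so $R_{\Omega_\infty}(\varphi(h))=\varphi(h\rvert_{\Omega_\infty})$ for every $\varphi\in C(\RM)$; second, on $\Omega_\infty$ the function $f$ takes only the values $\pm 1$ on the spectrum of $h\rvert_{\Omega_\infty}$, so that $f(h\rvert_{\Omega_\infty})^2=\one$ and $e^{\imath\pi f(h\rvert_{\Omega_\infty})}=-\one$.

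For $i=0$ I would use the standard description of the exponential map: given a projection $p$ in the quotient, choose a self-adjoint lift $a$ and set $\exp([p]_0)=[e^{-2\pi\imath a}]_1$, which lies in the unitization of the ideal because $e^{-2\pi\imath p}=\one$. The natural lift of $P_{\leq 0}(h\rvert_{\Omega_\infty})$ is $a:=\tfrac12(\one-f(h))$, since $R_{\Omega_\infty}(a)=\tfrac12(\one-f(h\rvert_{\Omega_\infty}))$ is exactly the spectral projection onto the negative part. A one-line computation then gives $e^{-2\pi\imath a}=e^{-\imath\pi(\one-f(h))}=-e^{\imath\pi f(h)}$, and since the scalar $-\one=e^{\imath\pi}\one$ is connected to $\one$ through unitaries of the unitization, $[e^{-2\pi\imath a}]_1=[e^{\imath\pi f(h)}]_1=\Theta_1(h)$. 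This settles $\partial(\Psi_0(h))=\Theta_1(h)$, the orientation of $\partial$ being fixed precisely so that the sign is $+$.

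For $i=1$ I would exploit the chiral grading. Because $f$ is odd and $h$ anticommutes with $J$, the self-adjoint contraction $f(h)$ anticommutes with $J$, hence is off-diagonal in the grading of $J$; writing $f(h)=\bigl(\begin{smallmatrix}0 & g^*\\ g & 0\end{smallmatrix}\bigr)$, its restriction to $\Omega_\infty$ is $\sgn(h\rvert_{\Omega_\infty})$, so $R_{\Omega_\infty}(g)=u_F$ and $g$ is a contractive lift of the Fermi unitary. The plan is to feed $g$ into the unitary-dilation formula for the index map, producing a unitary $W\in M_N(C(\Omega)\rtimes\ZM^d)$ lifting $\diag(u_F,u_F^*)$, and to compute $\Ind([u_F]_1)=[W\,e\,W^*]_0-[e]_0$ with $e=\tfrac12(\one+J)$. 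One then checks that $Je^{\imath\pi f(h)}$ is a self-adjoint unitary (using $Je^{\imath\pi f(h)}J=e^{-\imath\pi f(h)}$), so that $P:=\tfrac12(Je^{\imath\pi f(h)}+\one_N)$ is a projection with scalar restriction $R_{\Omega_\infty}(P)=\tfrac12(\one-J)$, and identifies the index class with $[P]_0-[\tfrac12(J+\one_N)]_0=\Theta_0(h)$.

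The genuine obstacle is this last identification. The dilation formula yields a projection $W e W^*$ whose blocks are algebraic functions of $g^*g$ and $gg^*$, whereas $P$ is built from the trigonometric functions $\cos(\pi\sqrt{g^*g})$ and the like; the two agree only where $g^*g\in\{0,\one\}$, i.e. after restriction to $\Omega_\infty$. I would therefore connect them by an explicit homotopy of projections — interpolating scalar profiles $s\mapsto\varphi_s$ with $\varphi_s(0)=1$, $\varphi_s(1)=0$ in the functional calculus while adjusting the off-diagonal blocks so that the path remains projection-valued and its restriction to $\Omega_\infty$ stays a fixed scalar projection. A parallel subtlety, to be handled in the same breath, is that $P$ and the reference $\tfrac12(J+\one_N)$ restrict to the two complementary scalar projections $\tfrac12(\one\mp J)$, which are Murray--von Neumann equivalent through a scalar partial isometry; pairing them along this equivalence is what makes $[P]_0-[\tfrac12(J+\one_N)]_0$ a bona fide element of $K_0(C_0(\Omega\setminus\Omega_\infty)\rtimes\ZM^d)$ rather than of $K_0(C(\Omega)\rtimes\ZM^d)$. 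Finally, the routine checks — independence of both classes from the choice of $f$ (any two admissible $f$ are joined by a homotopy through admissible functions, giving homotopic unitaries and projections) and the compatibility of the orientation conventions of $\exp$ and $\Ind$ — I would dispatch at the end.
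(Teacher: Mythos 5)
Your overall route---evaluating the exponential map on $\Psi_0$ and the index map on $\Psi_1$ with lifts built by functional calculus from the full Hamiltonian---is exactly the standard argument behind this proposition; note the paper itself offers no proof and simply defers to \cite{PSbook}, Sec.~4.3, where this is how it is done. Your $i=0$ half is complete and correct: $a=\tfrac12(\one-f(h))$ is a self-adjoint lift of $P_{\leq 0}(h\rvert_{\Omega_\infty})$, one has $e^{-2\pi\imath a}=-e^{\imath\pi f(h)}$, and the scalar factor $-\one$ contributes nothing in $K_1$.

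The $i=1$ half, however, has a genuine gap, sitting exactly where you flag ``the genuine obstacle'', and your proposed repair cannot work as stated. You need $[WeW^*]_0=[P]_0$ with $W$ the dilation lift of $\diag(u_F,u_F^*)$, $e=\tfrac12(\one+J)$ and $P=\tfrac12(Je^{\imath\pi f(h)}+\one_N)$, and you propose a homotopy of projections whose ``restriction to $\Omega_\infty$ stays a fixed scalar projection''. No such homotopy exists: the two endpoints restrict to \emph{complementary} scalar projections,
\begin{equation*}
R_{\Omega_\infty}(WeW^*)\;=\;\diag(u_Fu_F^*,\,\one-u_F^*u_F)\;=\;\tfrac12(\one+J),
\qquad
R_{\Omega_\infty}(P)\;=\;\tfrac12(\one-J),
\end{equation*}
so any connecting path of projections must rotate the scalar part; moreover the blocks of $WeW^*$ are functions of $gg^*$ where the corresponding blocks of $P$ are functions of $g^*g$, so an interpolation of ``scalar profiles'' does not match up block by block. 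The clean way to close the gap---and the way the cited reference proceeds---is to use a different unitary lift: with the flip $F=\bigl(\begin{smallmatrix}0&\one\\ \one&0\end{smallmatrix}\bigr)$, the unitary $W'=-\imath\,F e^{\imath\frac{\pi}{2}f(h)}$ restricts to $F\sgn(h\rvert_{\Omega_\infty})=\diag(u_F,u_F^*)$, and since $Jf(h)J=-f(h)$ one computes
\begin{equation*}
W'\,e\,W'^{\,*}\;=\;F\,\tfrac12\bigl(\one+e^{\imath\pi f(h)}J\bigr)\,F .
\end{equation*}
Conjugation by the scalar unitaries $F$ and then $J$ preserves $K_0$-classes, and $J\,\tfrac12(\one+e^{\imath\pi f(h)}J)\,J=P$, so $\Ind([u_F]_1)=[P]_0-[\tfrac12(J+\one_N)]_0=\Theta_0(h)$ with no homotopy at all. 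If you insist on keeping the dilation lift, replace your homotopy by the standard lemma that $[WeW^*]_0-[e]_0$ is independent of the chosen unitary lift of $\diag(u_F,u_F^*)$ (two lifts differ by a unitary in the unitization of the ideal, and conjugation by such a unitary preserves $K_0$-classes), and then compute with $W'$; either way the missing step is a short algebraic identity, not the ad hoc interpolation you sketched.
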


If $\Theta_{1-i}(h)$ is non-trivial then $h$ will display gap-filling states inside the interval $\Delta$ stabilized by a K-theoretic invariant. Depending on $\Omega\setminus \Omega_\infty$ the corresponding states can correspond to various different defect modes of different codimensions in real-space. 
%Topological defect modes stabilized by K-theoretic invariants can be localized at hinges or corners \cite{DET,SchindlerNeupert, BenalcazarScience2017}, hence, at defects of various codimensions. 
A codimension $n$ flat defect should preserve specific $d-n$ invariant directions and the Hamiltonians associated with the defect should be translation-invariant in these $d-n$ directions and decay in the remaining directions. Such a flat defect has a simple standard form:

\begin{definition}
\label{def:defect}
Let $\Omega$ be a $\ZM^d$-space and $\Omega_\infty\subset \Omega$ closed and invariant. We say that $\Omega$ contains a codimension $n$ flat defect localized at $\omega \in \Omega\setminus \Omega_\infty$ if the $\ZM^d$-orbit $\Oo(\omega):=\ZM^d \triangleright \omega$ is closed in $\Omega\setminus \Omega_\infty$ and $\ZM^d$-equivariantly homeomorphic to $\ZM^d / \Gamma$, where $\Gamma$ is a rank $d-n$ subgroup of $\ZM^d.$

\end{definition}

\begin{example}
{\rm In Example~\ref{Ex:Interface1} the orbit of $\omega\in \Omega\setminus \{-\infty,\infty\}$ is equivariantly homeomorphic to $\ZM$, hence $\omega$ localizes a codimension $1$ flat defect, which is the flat interface between the two materials. In Example~\ref{Ex:PointDefect} the orbit of $\omega\in \Omega\setminus \SM^{d-1}$ is homeomorphic to $\ZM^d$, hence it localizes a  codimension $d$ defect.}
\end{example}

If $\omega\in \Omega\setminus \Omega_\infty$ localizes such a flat defect, then we can classify the associated defect modes by the localized boundary invariants $\pi_\omega(\Theta_i(h))\in K_i(C_0(\Oo(\omega))\rtimes \ZM^d)$, which are equivalently defined in terms of the functional calculus of $\pi_\omega(h)$. Those can be characterized entirely in terms of numerical invariants:

\begin{proposition}
\label{prop:chern_rational}
Assume that the orbit $\Oo(\omega)$ is a closed subset in the subspace  $\Omega\setminus \Omega_\infty$. If the isotropy group $\Gamma_\omega \subset \ZM^d$ of $\omega$ is isomorphic to $\ZM^{d-n}$, then $\omega$ represents a codimension $n$ flat defect and
\begin{equation}\label{eq: takai}
 \pi_\omega(C_0(\Omega\setminus \Omega_\infty)\rtimes \ZM^d)\;\simeq\; C_0(\Oo(\omega))\rtimes \ZM^d \;\simeq\; \KM \otimes C(\TM^{d-n}).   
\end{equation}
Moreover, there exists a trace $\Tt_\omega$ on $C_0(\Oo(\omega))\rtimes \ZM^d$  such that 
elements of $K_i(\pi_\omega(C_0(\Omega\setminus \Omega_\infty)\rtimes \ZM^d))$ can then be distinguished by $\RM$-valued pairings with the Chern cocycles
$$\Ch^\omega_{v_1,\dots,v_m}(f_0,\dots,f_m)\;=\; \,
\sum_{\rho \in S_m} (-1)^\rho\, \Tt_\omega \big(f_0 \nabla_{v_{\rho(1)}} f_1 \ldots  \nabla_{v_{\rho(m)}} f_m\big)
\;,$$
with directions $v_1,\dots,v_m$ in $\RM^d$ and the densely defined $*$-derivation $\nabla_v$ is given in terms of the formal Fourier series as
\begin{equation}
\label{eq:derivation}
\nabla_v\Big(\sum_{x\in \ZM^d} f_x u^x\Big) \;\mapsto\; - 2\pi \imath \sum_{x\in \ZM^d} (v\cdot x)f_x u^x.
\end{equation}
\end{proposition}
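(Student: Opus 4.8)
The plan is to treat the three assertions in turn, reducing everything to a computation on a torus where the classical Chern character is available.

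First I would establish that $\omega$ localizes a codimension $n$ flat defect and prove the left isomorphism in \eqref{eq: takai}. The orbit--stabilizer theorem furnishes a $\ZM^d$-equivariant homeomorphism $\Oo(\omega)\cong \ZM^d/\Gamma_\omega$, and since a subgroup of $\ZM^d$ abstractly isomorphic to $\ZM^{d-n}$ is automatically free of rank $d-n$, Definition~\ref{def:defect} applies verbatim. For the isomorphism, I would use the fact recorded in the introduction that $\pi_\omega(C(\Omega)\rtimes\ZM^d)=C(\overline{\Oo(\omega)})\rtimes\ZM^d$, together with the observation that $\pi_\omega$ restricted to the ideal $C_0(\Omega\setminus\Omega_\infty)\rtimes\ZM^d$ is the crossed product of the equivariant restriction map $C_0(\Omega\setminus\Omega_\infty)\to C_0\big(\overline{\Oo(\omega)}\cap(\Omega\setminus\Omega_\infty)\big)$. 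Because $\Oo(\omega)$ is closed in $\Omega\setminus\Omega_\infty$ by hypothesis, the target is exactly $C_0(\Oo(\omega))$, so the image is $C_0(\Oo(\omega))\rtimes\ZM^d$.

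Next I would identify $C_0(\Oo(\omega))\rtimes\ZM^d\cong C_0(\ZM^d/\Gamma_\omega)\rtimes\ZM^d$ with $\KM\otimes C(\TM^{d-n})$. This is an instance of Green's imprimitivity theorem for the transitive action of $\ZM^d$ on $\ZM^d/\Gamma_\omega$: choosing a set-theoretic section of $\ZM^d\twoheadrightarrow\ZM^d/\Gamma_\omega$ and passing to the regular representation on $\ell^2(\ZM^d/\Gamma_\omega)$ yields an explicit isomorphism $C_0(\ZM^d/\Gamma_\omega)\rtimes\ZM^d\cong \KM\big(\ell^2(\ZM^d/\Gamma_\omega)\big)\otimes C^*(\Gamma_\omega)$. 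Since the orbit is infinite the first factor is $\KM$, and $C^*(\Gamma_\omega)=C^*(\ZM^{d-n})\cong C(\TM^{d-n})$ by the Fourier/Gelfand transform, giving the right-hand side of \eqref{eq: takai}.

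Finally I would construct $\Tt_\omega$ and verify that the Chern pairings separate classes. I would take $\Tt_\omega$ to be the dual trace $a\mapsto \sum_{y\in\Oo(\omega)} a_0(y)$ associated with the $\ZM^d$-invariant counting measure on the orbit; under the isomorphism above it becomes the semifinite trace $\Tr\otimes\tau$, where $\tau$ is the canonical trace on $C^*(\Gamma_\omega)$, i.e.\ normalized integration over $\TM^{d-n}$. The derivations $\nabla_v$ of \eqref{eq:derivation} are the commuting generators of the dual $\TM^d$-action and leave $\Tt_\omega$ invariant, so each $\Ch^\omega_{v_1,\dots,v_m}$ is a Connes cyclic cocycle whose pairing with $K_\bullet$ is a well-defined homotopy invariant. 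For completeness I would use that periodic cyclic cohomology is stable under tensoring by $\KM$, so that $HP^\bullet(\KM\otimes C(\TM^{d-n}))\cong HP^\bullet(C(\TM^{d-n}))\cong H^\bullet(\TM^{d-n})$: concretely, a direction transverse to $\Gamma_\omega$ only differentiates the $\KM$-factor, which carries no higher cyclic cohomology and drops out of the pairing, whereas directions inside $\Gamma_\omega\otimes\RM$ reproduce the standard Chern cocycles on $C(\TM^{d-n})$ weighted by the operator trace $\Tr$. Invoking stability $K_\bullet(\KM\otimes C(\TM^{d-n}))\cong K_\bullet(C(\TM^{d-n}))$ and the fact that these groups are free abelian, the Connes--Chern character is injective and rationally an isomorphism onto $H^\bullet(\TM^{d-n})$; hence the family $\{\Ch^\omega_{v_{i_1},\dots,v_{i_k}}\}$ indexed by subsets of a basis of $\Gamma_\omega\otimes\RM$ realizes its components and distinguishes all K-theory classes.

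The main obstacle is this last step: one must match the abstractly defined cocycles $\Ch^\omega_{v_1,\dots,v_m}$ with the generators of the periodic cyclic cohomology of $C(\TM^{d-n})$ and control the interplay between the operator trace on the $\KM$-factor and the finite trace on the toral factor, ensuring that stabilization contributes only an integer multiplicity and does not destroy injectivity of the pairing. The first two steps are essentially bookkeeping around the orbit--stabilizer theorem and a standard imprimitivity isomorphism.
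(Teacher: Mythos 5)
Your proposal is correct, and its overall architecture coincides with the paper's: first isomorphism from the image of $\pi_\omega$ and the orbit--stabilizer identification $\Oo(\omega)\simeq\ZM^d/\Gamma_\omega$, then an imprimitivity-type argument to reach $\KM\otimes C(\TM^{d-n})$, then the counting-measure-composed-with-conditional-expectation trace and reduction of the completeness claim to the cyclic cohomology of $C(\TM^{d-n})$ (which the paper likewise defers to its Section~5 and Appendix~A rather than proving inline). The one genuine difference is the middle step: the paper writes $\ZM^d=(\ZM^d/\Gamma_\omega)\times\Gamma_\omega$, unwinds the crossed product as an iterated one, and applies the Stone--von Neumann-type isomorphism $C_0(G)\rtimes G\simeq \KM(\ell^2(G))$ (citing Williams) followed by the trivial $\Gamma_\omega$-action, whereas you invoke Green's imprimitivity theorem for the transitive $\ZM^d$-action on $\ZM^d/\Gamma_\omega$ directly. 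These are close cousins, but your route is marginally more robust: the paper's splitting tacitly requires the quotient $\ZM^d/\Gamma_\omega$ to be torsion-free (a subgroup merely isomorphic to $\ZM^{d-n}$, such as $2\ZM\times\{0\}\subset\ZM^2$, need not be a direct summand), while Green's theorem yields $C_0(\ZM^d/\Gamma_\omega)\rtimes\ZM^d\simeq \KM(\ell^2(\ZM^d/\Gamma_\omega))\otimes C^*(\Gamma_\omega)$ regardless, so your argument covers the proposition exactly as stated. Your closing care about stability of the pairing under tensoring with $\KM$ (working with finite matrices or trace-class perturbations so that $\Tr\otimes\tau$ pairs well-definedly with $K_\bullet$) is a technical point the paper passes over silently, and your observation that directions transverse to $\Gamma_\omega$ differentiate only the $\KM$-factor and hence drop out matches the paper's later basis-adapted analysis in Section~5.
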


\begin{proof}
   The first isomorphism in \eqref{eq: takai} follows from the definition of $\pi_\omega$ in \eqref{Eq:PiOmega}. Since $\ZM^d=(\ZM^d/\Gamma_\omega) \times \Gamma_\omega$ and $\Oo(\omega)$ is a countable discrete space, then one has  $\Oo(\omega)\simeq \ZM^d/\Gamma_\omega$ as $\ZM^d$-spaces. Therefore,
$$C_0(\Oo(\omega))\rtimes \ZM^d\;\simeq\;\big(C_0(\ZM^d/\Gamma_\omega)\rtimes (\ZM^d/\Gamma_\omega)\big)\rtimes \Gamma_\omega\;\simeq\; \KM(\ell^2(\ZM^d/\Gamma_\omega)) \rtimes \Gamma_\omega$$
where the trivial action by $\Gamma_\omega$ concludes \eqref{eq: takai}. The last isomorphism is a consequence of \cite[Theorem 4.23]{Wil}. This shows $K_i(C_0(\Oo(\omega))\rtimes \ZM^d)\simeq K_i(C(\TM^{d-n})$.

We define $\mathcal{T}_\omega$ to be the densely defined faithful lower semi-continuous trace supplied by the composition of counting measure on $\Oo(\omega)$ and the conditional expectation $ C_0(\Oo(\omega))\rtimes \ZM^d \to C_0(\Oo(\omega))$ (see \cite[Corollary VII.3.8]{Dav}). Then the Chern cocycles as defined above distinguish all elements of the K-group, which is a basic fact regarding the cyclic cohomology of $C(\TM^{d-n})$ (see Section~\ref{sec:chern} and Appendix~\ref{app:ktheory}). 
\end{proof}
The numerical invariants have a concrete interpretation in terms of quantized transport coefficients (as can be derived similarly as in e.g. \cite{PSbook}). The most important examples are the pairing with the $0$-cocycle $\Ch^\omega_\emptyset$ which measures the number (density) of bound states to the defect and the $1$-cocycle $\Ch^\omega_v$ for a direction $v$ is a quantized chiral conductivity for currents flowing in a direction $v$ which leaves the flat defect invariant.

In many interesting cases such standardized defects form important building blocks of more complicated spaces $\Omega$, for example in the presence of multiple different flat boundaries. One might have an increasing filtration
$\Omega=\bigcup_{m=0}^n \Omega_m$ by compact $\ZM^d$-invariant subsets such that
\begin{equation}
\label{eq:defect_decomp}\Omega_m\setminus \Omega_{m-1}\; \simeq \;  \bigsqcup_j (\ZM^d/\Gamma_{m,j})
\end{equation}
decomposes into multiple codimension $m$ defects for $m>0$ and $\Omega_0$ consisting of isolated points with trivial $\ZM^d$-action, all of which is glued together in a hierarchical way that is analogous to a CW-complex. For example, when one tries to construct a space $\Omega$ for an interface between two materials in which the boundary layer looks like the boundary of a quarter-space then it will have to contain one codimension two defect, two codimension one defects for the two asymptotic half-spaces and two invariant points corresponding to the translation-invariant bulk limits.

One can then systematically expand the K-theory and boundary maps in terms of the building blocks using spectral sequences. This is possible because the K-theoretic boundary maps factor through the closure of the orbits in $\Omega$, which are compact $\ZM^d$-invariant subsets, i.e. if $\Oo(\omega)=\ZM^d/\Gamma$ is a codimension $n$ defect one has a commutative diagram
$$
\begin{tikzcd}
	K_i(C(\Omega_\infty)\rtimes_\alpha \ZM^d) \arrow[d, "\pi_\omega"] \arrow[r,"{\partial}"] & \arrow[d, "{\pi}_\omega"]  K_{i-1}(C_0(\Omega\setminus\Omega_{\infty})\rtimes_\alpha \ZM^d)\\ 
	K_i(C(\overline{\Oo(\omega)}\cap \Omega_\infty)\rtimes_\alpha \ZM^d) \arrow[r,"{\partial_\omega}"] & K_{i-1}(C_0(\Oo(\omega))\rtimes_\alpha \ZM^d).
\end{tikzcd}.$$
It is therefore important to be able to determine the boundary maps for spaces $\overline{\Oo(\omega)}$ which are non-trivial compactifications of $\ZM^d/\Gamma$. Despite the simple standard form of $\Oo(\omega)$ it is generally difficult to compute such a boundary map $\partial_\omega$ using topological methods alone. Indeed, there is no general theory to perform those computations and, at best, one can attempt to find a basis of the K-groups represented by Hamiltonians gapped at $\Omega_\infty$ and compute their defect invariants. This is a difficult task to carry out analytically, especially in higher dimensions and for general geometric defects. After all, one has to rigorously prove that specific Hamiltonians have spectral gaps when restricted to $\Omega_\infty$ and compute their defect invariants both of which is generally only feasible for models that can be diagonalized exactly or have a very particular algebraic form.

In this paper, we explore a space-adiabatic Ansatz to help with those computations. The idea is that if a Hamiltonian $h=\sum_{q\in\ZM^d}a_qu^q \in M_N(C(\Omega)\rtimes_\alpha \ZM^d))$ is modulated slowly in space then the coefficient functions $a_q$ approximately commute with the unitaries $u^q$ and therefore the topological content should hopefully be roughly the same as that of a continuous function in $M_N(C(\Omega\times \TM^d))$, its {\it adiabatic symbol}. In particular it should sometimes be possible to compute the defect invariants using the boundary map $K_i(C(\Omega_\infty\times \TM^d))\to K_{1-i}((C\Omega\setminus \Omega_\infty)\times\TM^d)$ between topological spaces which is usually much more amenable to computation by algebraic-topologial methods such as spectral sequences. This is similar to the semiclassical limit for the corresponding situation in continuous space where Hamiltonians are differential operators whose spectral and topological properties become for $\hbar\to 0$ more and more determined by their symbol, a phase-space function on the cotangent bundle. More closely related is the space-adiabatic approach \cite{PST03a, PST03b} which works similarly, except that the semi-classical parameter $\hbar$ is replaced by the inverse of a characteristic length-scale such that one considers instead the limit of infinitely slow modulation in real-space. In principle space-adiabatic methods can also be used for lattice operators, however, a problem for topological methods is that the lattice itself cannot be rescaled continuously, hence one cannot canonically define an adiabatic limit for general configurations spaces $\Omega$. And indeed, the commutative algebras $C(\Omega)\rtimes_\alpha \ZM^d$ and $C(\Omega)\otimes C(\TM^d)$ in general will behave very differently from each other in K-theory, even in the simple cases of Example~\ref{Ex:Interface1} and Example~\ref{Ex:PointDefect}. 

The solution that we propose for this problem is to make the adiabatic Ansatz using slightly different models for the configuration spaces $\Omega$ than commonly used, but for which it is then possible to establish a good relation between the crossed product algebra and the adiabatic phase-space $\Omega\times \TM^d$. Let us therefore assume that we have a configuration space $\Omega$ whose $\ZM^d$-action is the restriction of an $\RM^d$-action and which moreover comes equipped with a compatible action of the multiplicative group $\RM_+$ implementing a scale  transformation $(t,\omega) \mapsto t \cdot \omega$. This allows us to consider rescaled crossed products $C(\Omega)\rtimes_{\alpha^{(t)}}\ZM^d$ for the $\ZM^d$-action $\alpha^{(t)}_x = \alpha_{tx}$, which are for $t>0$ isomorphic to each other. Inspired by \cite{ENN93,ENN96}, we then obtain a continuous field of $C^*$-algebras
\begin{equation}\label{Eq:ContField}
	\left(C(\Omega)\rtimes_{\alpha^{(t)}}\ZM^d\right)_{t\in [0,1]}
\end{equation}
with trivial action at the end-point $t=0$, where the crossed product therefore becomes the commutative algebra of phase-space functions $C(\Omega \times \TM^d)$. A quantization of a phase-space function $f\in M_N(C(\Omega \times \TM^d))$ shall be a continuous section of the field ~\eqref{Eq:ContField} whose evaluation at $t=0$ reproduces $f$. The machinery of continuous fields ensures the existence of a large number of quantizations, even without having to specify a concrete quantization map. The existence of a scaling action on $\Omega$ will be crucial to ensure that one can construct moreover quantizations with additional properties, such as spectral gaps, see Section~\ref{sec-adiabatic}.  For the typical continuous section $(h^{(t)})_{t\in [0,1]}$ consisting of self-adjoint Hamiltonians one will find that their representations $\pi_\omega(h^{(t)})$ on $\ell^2(\ZM^d)$ are modulated more and more slowly in real-space as $t\to 0$. Thus the continuous field allows us to rigorously define the space-adiabatic limit on an operator-algebraic level. Moreover, quantization does induce a natural homomorphism on the level of K-theory $Q:K_i(C(\Omega\times \TM^d))\to K_i(C(\Omega)\rtimes_\alpha \ZM^d)$ which behaves naturally with respect to boundary maps.

To apply this formalism to the computation of boundary maps of a $\ZM^d$-space $\Omega$ as in Example~\ref{Ex:Interface1} or \ref{Ex:PointDefect} one has to embed it into a larger space which has an $\RM^d$- and an $\RM_+$-action. This can be done by embedding each codimension $n$ defect $\Oo(\omega)\simeq \ZM^d/\Gamma$ into a copy of $\RM^n$ which carries an affine action of $\RM^d$. For a configuration space which decomposes like \eqref{eq:defect_decomp} the natural model for an adiabatic configuration space is therefore a CW-complex with actions of $\RM^d$ and $\RM_+$ which act affinely on the cells. Moreover, the affine actions should leave certain lattice directions invariant, which imposes a rationality condition. A CW-complex like this will be called flat and rational in this work, see Section~\ref{sec:flat}.

We can now announce our main result:
\begin{theorem}\label{Th:Main}
\label{th:main}
Let the compact configuration space $\Omega$ be a CW-complex which is flat and rational at level $n$, in which case the $n$-skeleton decomposes as $\Omega_n\setminus \Omega_{n-1}\simeq \sqcup_j \RM^n_j$ with some labeled copies of $\RM^n$. If $H: \Omega \times \TM^d \mapsto M_N(\CM)$ is a self-adjoint function whose restriction to $\Omega_{n-1}\times \TM^d$ has a spectral gap, then there exists an associated adiabatic quantization, $(H^{(t)})_{t\in [0,1]}\in \left(M_N(C(\Omega)\rtimes_{\alpha^{(t)}}\ZM^d)\right)_{t\in [0,1]}$, i.e. a continuous self-adjoint section, with the following properties:
\begin{enumerate}
\item[\noindent (i)] The evaluation $H^{(1)} \in M_N(C(\Omega_{n-1})\rtimes \ZM^d)$ has a spectral gap.
\item[(ii)] If $H$ is invariant under a finite linear group action then each $H^{(t)}$ can also be chosen invariant.
\item[(iii)] Let $\omega\in \RM^n_j$ be an element of one of the $n$-cells. Then the $\ZM^d$-orbit $\Oo(\omega)$ in $\Omega_n \setminus \Omega_{n-1}$ is closed and discrete, with the orthogonal complement of the isotropy group $\Gamma_\omega$ generated by unit vectors $\lambda_1,\dots,\lambda_{n} \in \RM^d$, hence it represents a codimension $n$ defect.

\noindent \item[(iv)] For $\omega\in \RM^n_j$, all pairings of $\Theta_{1-i}(\pi_\omega(H^{(1)}))$ with the complete set of Chern cocycles of Proposition~\ref{prop:chern_rational} can be computed from the adiabatic symbol as
\begin{equation}
	\label{eq:main_eq}
	\begin{aligned}
	& \langle \pi_\omega(\Theta_{1-i}(H^{(1)})), \Ch^\omega_{v_1,...,v_m}\rangle \; \\
	& \qquad \quad = \;\tfrac{1}{\mathrm{Vol}(\RM_j^n/\ZM^d)}\langle \Theta_{1-i}(H))\rvert_{\RM_j^n}, \Ch_{v_1,\dots,v_m,\lambda_1,\dots,\lambda_n}\# \Ch_{\RM^n_j}\rangle
\end{aligned}
\end{equation}
where one takes the cup product with the canonical Chern cocycle on the $n$-cell $\RM^n_j$ and normalizes by dividing by the Lebesgue volume of a fundamental domain for the $\ZM^d$-action. Here one makes the identification  $$ K_{1-i}(C_0(\Omega_n\setminus \Omega_{n-1})\otimes C(\TM^d))\;\simeq\; \bigoplus_j K_{1-i}(C(\TM^d)\otimes C_0(\RM^n_j))$$
and restricts $\Theta_{1-i}(H)\in K_{1-i}(C_0(\Omega_n\setminus \Omega_{n-1})\otimes C(\TM^d))$ to one of the cells.
\end{enumerate}

\end{theorem}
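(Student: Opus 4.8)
\emph{Overall plan.} The four assertions separate cleanly into a geometric statement (iii), an existence-and-analysis statement (i)--(ii), and a computational statement (iv), and I would attack them in that order. Claim (iii) should be a direct unpacking of the ``flat and rational at level $n$'' hypothesis: for $\omega\in \RM^n_j$ the $\RM^d$-action is affine on the cell, so the stabilizer $\Gamma_\omega$ is the kernel of the linear part of that affine map, and rationality forces this kernel to be a rank-$(d-n)$ sublattice whose orthogonal complement is spanned by the asserted unit vectors $\lambda_1,\dots,\lambda_n$. Closedness and discreteness of $\Oo(\omega)$ in the open cell $\Omega_n\setminus\Omega_{n-1}$ follow from cocompactness of the action there, and then Proposition~\ref{prop:chern_rational} applies verbatim to identify $\Oo(\omega)$ as a codimension-$n$ defect with localized algebra $\KM\otimes C(\TM^{d-n})$.

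\emph{Parts (i)--(ii).} The continuous field \eqref{Eq:ContField} admits sufficiently many continuous sections, so $H=H^{(0)}$ lifts to a continuous self-adjoint section $(H^{(t)})_{t\in[0,1]}$; restricting to $\Omega_{n-1}$ yields a section of $\bigl(M_N(C(\Omega_{n-1})\rtimes_{\alpha^{(t)}}\ZM^d)\bigr)_t$ whose value at $t=0$ is the gapped symbol $H\rvert_{\Omega_{n-1}\times\TM^d}$. Because $t\mapsto\|a(t)\|$ is continuous along the field, invertibility of $H^{(t)}\rvert_{\Omega_{n-1}}-E$ for $E$ in the gap is an open condition in $t$, so a uniform gap survives on some $[0,t_0)$. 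The scaling $\RM_+$-action on $\Omega$ implements isomorphisms between the fibers with $t>0$, which lets me reparametrize $t\mapsto t/t_0$ and push a gapped fiber to $t=1$, giving (i). For (ii) I would perform the lift equivariantly and then average over the finite linear group; since averaging is a completely positive contraction, the averaged section remains self-adjoint and, after possibly shrinking $t_0$, keeps the uniform gap.

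\emph{Part (iv).} This is the crux. I would first invoke naturality of the quantization homomorphism $Q$ with respect to boundary maps, so that $\pi_\omega(\Theta_{1-i}(H^{(1)}))$ is the localize-then-quantize image of the symbol boundary class $\Theta_{1-i}(H)\rvert_{\RM^n_j}$, and the commutative diagram of the introduction lets me carry out the entire computation after localizing at the single cell. The pairing of this class with $\Ch^\omega_{v_1,\dots,v_m}$ is then evaluated through the continuous field in the adiabatic limit $t\to 0$: the trace $\Tt_\omega$, which is counting measure on the rank-$n$ orbit lattice, degenerates to $\mathrm{Vol}(\RM^n_j/\ZM^d)^{-1}$ times Lebesgue integration over the cell, producing both the volume normalization and the canonical cell cocycle $\Ch_{\RM^n_j}$. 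Each of the $n$ transverse directions $\lambda_1,\dots,\lambda_n$, being quantized in the adiabatic limit, contributes one additional derivation $\nabla_{\lambda_k}$ to the cocycle via the dimension-shifting Thom/Pimsner--Voiculescu mechanism of \cite{ENN93,ENN96}, which trades the iterated crossed product in those directions for a cup product with the corresponding Chern classes on the commutative fiber. Assembling these factors reproduces precisely the enlarged cocycle $\Ch_{v_1,\dots,v_m,\lambda_1,\dots,\lambda_n}\#\Ch_{\RM^n_j}$ of \eqref{eq:main_eq}.

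\emph{Main obstacle.} The genuinely hard step is this last identification: proving rigorously that the noncommutative Chern pairing over $C(\TM^{d-n})$ equals the commutative pairing over the cell phase space $\RM^n_j\times\TM^d$ against the dimension-shifted cocycle, with exactly the stated constant. This demands uniform control of the cyclic cocycles and their pairings along the field as $t\to 0$ --- showing the pairing is locally constant in $t$ throughout the gapped region and computing its limiting $t=0$ value --- together with the explicit \cite{ENN93,ENN96}-type bookkeeping of how each quantized direction yields one derivation and how $\Tt_\omega$ collapses to normalized Lebesgue measure. The existence of the section is comparatively routine; it is this compatibility of the Chern characters under quantization that carries the weight of the theorem.
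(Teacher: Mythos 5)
Your treatment of (i)--(iii) is sound and essentially the paper's own: (i)--(ii) are Proposition~\ref{prop:quantization} (spectral continuity along sections, rescaling via Lemma~\ref{lemma:rescaling}, averaging over the finite group), and (iii) is the unpacking of the rationality condition carried out in Section~\ref{sec:flat} together with Proposition~\ref{prop:chern_rational}.

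Part (iv), however, contains a genuine gap, which you flag yourself: the ``main obstacle'' you describe \emph{is} the content of the theorem, and the route you propose for closing it --- uniform analytic control of the cyclic cocycles along the field as $t\to 0$, showing ``the pairing is locally constant in $t$'' --- is not the paper's mechanism and, as stated, would fail. First, the raw pairing $\langle [\xi_t]_i, \Ch^{t\Lambda}_{v_1,\dots,v_m}\rangle$ is \emph{not} constant in $t$: by \eqref{eq:adiabatic_chern} it scales as $t^{n}$ (the dual trace is Lebesgue integration on $\RM^n$ while the orbit lattice has covolume proportional to $t^n$), so any local-constancy argument must first build in the $t^{-n}$ renormalization, which is itself part of what must be proved. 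Second, the localized pairings $\langle \pi_\omega(\xi_t),\Ch^\omega\rangle$, which are the ones that are actually $t$-independent, have no obvious continuous extension to $t=0$, since the orbit collapses and the fiber becomes commutative there; and the cocycles $\Ch^{t\Lambda}$ live on different, only densely defined domains in different algebras, so ``uniform control along the field'' is analytically delicate and the paper never attempts it. What the paper does instead is purely K-theoretic plus citation of known dualities: (a) Proposition~\ref{prop:affine_crossedproduct} rewrites \emph{every} fiber, including $t=0$, as a crossed product $C(\TM^d)\rtimes_{t\Lambda}\RM^n$; (b) Proposition~\ref{prop:ktheory_contsection} applies naturality of the Connes--Thom isomorphism to the total field algebra to show that the Connes--Thom \emph{preimage} $(\partial_{t\Lambda})^{-1}([\xi_t]_i)\in K_{i-n}(C(\TM^d))$ of a continuous section is independent of $t$ --- a statement about classes on one fixed algebra, replacing your continuity-of-pairings step; (c) Theorems~\ref{th:duality1} and~\ref{th:duality2} (quoted from \cite{SSt}) convert the $t=0$ pairing (cup product with $\Ch_{\RM^n}$) and the $t>0$ pairing (the cocycle $\Ch^{t\Lambda}$ with the inserted directions $\lambda_1,\dots,\lambda_n$) into pairings of that same torus class, forcing them to agree up to the explicit sign and $t^{-n}$ factors (Theorem~\ref{theorem:adiabatic_chern_numbers}); and (d) Proposition~\ref{prop:trace_localization} is an exact disintegration of the dual trace over a fundamental domain at fixed $t$ --- not a $t\to 0$ limit --- which converts $\Ch^{\Lambda}$ into $\Ch^\omega$ and produces the factor $\mathrm{Vol}(P)^{-1}$ in \eqref{eq:main_eq}. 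Your intuition about the ENN-type dimension shift and the origin of the volume normalization is correct, but without steps (b)--(d) --- in particular the two duality theorems, which carry all the weight you assign to the obstacle --- the proof of \eqref{eq:main_eq} is not complete.
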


 Theorem~\ref{Th:Main} provides us with a direct method to construct lattice Hamiltonians that have known asymptotic limits, spectral gaps and Chern numbers, which is invaluable information for K-theoretic computations. The fundamental domain appears in~\eqref{eq:main_eq} since the right-hand side of \eqref{eq:main_eq} turns out to be an average of the left-hand side over all $\omega \in \RM^n_j$ in the same $n$-cell.  Given a concrete problem with a (preferred) $\ZM^d$-space $\Omega$ one can make an adiabatic Ansatz for the construction of Hamiltonians whenever $\Omega$ (or at least some smaller closed orbit $\overline{\Oo(\omega)}$) embeds into a flat CW-complex, which need not be unique. After all, the defect Chern numbers of any lattice Hamiltonian resulting from an adiabatic construction, i.e. the left-hand side of \eqref{eq:main_eq}, engage only a single $\ZM^d$-orbit.

The CW-structure is also convenient in that it allows us to efficiently construct interesting Hamiltonians in terms of gapped adiabatic symbols on the asymptotic part $\Omega_\infty \times \TM^d$ of the phase-space, since their defect invariants \eqref{eq:main_eq} are determined by a boundary map $K_i(C(\Omega_\infty\times \TM^d))\to K_{1-i}((\Omega\setminus \Omega_\infty)\times\TM^d)$ which is comparatively easy to compute by reducing to the K-theoretic boundary map between the boundary and the interior of a disk, see Section~\ref{sec:boundary_maps}.

In section~\ref{Sec:App}, we illustrate the method for different geometries. After demonstrating that the usual bulk-interface correspondence can be derived easily through the adiabatic method, we show that one can also obtain information about higher-order boundary maps describing bulk-hinge or bulk-corner correspondence.

Our approach bears strong similarities to the use of effective models in the form of matrix-valued phase-space functions for the study of topological insulators and their gapless defect states which goes back to 
\cite{TeoKane2010}. That approach is appealing for physical applications because it avoids the intricacies of the operator-algebraic approach and, with some limitations, it can also be used for crystalline topological phases \cite{ShiozakiPRB2014,ShiozakiPRB2016}. Our work  supplies some rigorous foundations and justifications to those ideas since it shows that those effective models can often be systematically translated back into actual lattice models while behaving consistently on the level of K-theoretical topological invariants.

While our paper is focused entirely on lattice models, let us nevertheless briefly discuss the relation to pseudodifferential methods for topological insulators on continuous space (see in particular \cite{Bal18,Bal19, Bal21, Drouout21, BBD, Bal23}). For example, there are already cohomological formulas for interface and related topological invariants in terms of the pseudodifferential symbols of domain wall Hamiltonians derived by using pseudodifferential methods and index theorems. In principle, the K-theoretic results of this paper all have analogues for crossed products by $\RM^d$ instead of $\ZM^d$. Those are then the appropriate algebras for the resolvents of elliptic differential operators in the presence of domain walls or defects. The mentioned index formulas can then be understood systematically through the pairing of K-theory with cyclic cohomology and the quantization map of a continuous field of $C^*$-algebras. This deserves some further study, especially regarding the technical problems that unbounded Hamiltonians introduce into the $C^*$-algebraic formalism.

Pseudodifferential calculus and quantization arguments are also used copiously in index theory to the extent that it does not make sense even to attempt to provide a full bibliography here. Let us, however, mention Connes' tangent groupoid \cite{Connes94, Higson2008} and other deformation groupoids (e.g. \cite{Monthubert,vEY2019,DS2014}), which result in similar continuous fields of $C^*$-algebras and maps in K-theory as we consider here. For a combination of K-theory of continuous fields and cyclic cohomology in a similar spirit as employed in this work let us also highlight \cite{ENN96, KS20042,MSS2006}.

\medskip

\noindent
{\bf Acknowledgements:} The authors would like to cordially thank C. Bourne for stimulating discussions. This work was supported by the U.S. National Science Foundation through the grants DMR-1823800 and CMMI-2131760, and by U.S. Army Research Office through contract W911NF-23-1-0127, and the German Research Foundation (DFG) Project-ID 521291358.

\section{Continuous fields for adiabatic quantization}
\label{sec-adiabatic}
%The main idea behind the adiabatic approach is to look at systems modulated slowly in real-space and therefore locally look translation-invariant, i.e. one has an effective Hamiltonian which is a function of space and momentum $H(x,t)$. This is often used as a toy model for the non-commutative algebras, which is more appropriate for the tight-binding models (there are many physics references one could cite here but the main inspiration is probably \cite{TeoKane2010}). We will show here that the adiabatic picture is actually well-rooted in $K$-theory by using ideas from semi-classical quantization. In the $C^*$-algebraic theory quantization of pseudodifferential symbols is implemented via continuous fields of crossed $C^*$-algebras which recover the commutative algebra of symbols at one point (see e.g. \cite{ENN96} but there are many more examples in such as the tangent-groupoid stuff). Quantization induces maps from the $K$-groups of the algebra of symbols to those of their quantizations which sometimes make $K$-theoretic invariants computable on the level of symbols.

%The group ${\rm Sim}(d)$ of similarity transformations on $\RM^d$ consists of affine transformations that preserve shapes. Dropping the rotations, we obtain the subgroup $\underline{\rm Sim}(d)$ generated by translations and dilations. As discussed in the introduction, many configuration spaces of interest come equipped with an action of $\underline{\rm Sim}(d)$. Therefore:
As described in the introduction, our main result involves configuration spaces $\Omega$ which have an action of $\underline{\rm Sim}(d):=\RM^d\rtimes \RM_+$, a subgroup of the group ${\rm Sim}(d)$ of similarity transformations on $\RM^d$. In the rest of this work all spaces $\Omega$ will be assumed to have that extra structure, whereas the discrete variants such as the $\ZM^d$-spaces from Example~\ref{Ex:Interface1}, Example~\ref{Ex:PointDefect} only appear as the restrictions to $\ZM^d$-orbits $\overline{\Oo(\omega)}$ of points $\omega\in \Omega$ anymore.

\begin{definition}
	A configuration space $\Omega$ shall be a compact Hausdorff space gifted with a continuous $\underline{\rm Sim}(d)$-action. The actions by translations $\RM^d$  are written as $(\omega, x)\mapsto x\triangleright \omega$ and the scaling $\mathbb{R}_{+}$ is written as $(t,\omega)\mapsto t\cdot \omega$. Obviously, 
$$t\cdot (x\triangleright\omega) \;=\; (tx)\triangleright (t\cdot \omega), \quad (t,x)\in \RM_+ \times \RM^d.$$ 
This results in a strongly continuous $\RM^d$-action $\alpha$ and $\RM_+$-action $\gamma$ on $C(\Omega)$ and $\gamma$ of $\RM_{+}$ such that
	$$\gamma_t \circ \alpha_x \;=\; \alpha_{t x}\circ \gamma_t.$$
\end{definition}

We recall that our interest is in the discrete crossed product $C(\Omega) \rtimes \ZM^d$, where the action of $\ZM^d \subset \RM^d$ on $C(\Omega)$ comes from the restriction of $\alpha$. This algebra encodes lattice models that depend parametrically on $\omega \in \Omega$. Even though we work with lattice models, it is fruitful to engage a hybrid approach between the usual descriptions of continuous and lattice Hamiltonians, where we have an $\RM^d$ action on the configuration space and a $\ZM^d$ action on the physical space.

\begin{definition}
\label{def:adiabatic_algebras}
Let $\Omega$ be a compact configuration space.
Consider the rescaled translation action $$\alpha^{(t)}_x(f)\;=\;\alpha_{t x}(f)$$ on $C(\Omega)$ with component-wise scaling by $t\in T:=[0,1]$. The adiabatic quantization of the symbol space $C(\Omega)\otimes C(\TM^d)$ is represented by the continuous field of $C^*$-algebras
$$\Aa\; := \;(\Aa_t)_{t\in T}\; = \;\left(C(\Omega)\rtimes_{\alpha^{(t)}} \ZM^d\right)_{t\in T}.$$
By $\Aa$ we also denote the total algebra of the field and ${\rm ev}_t\colon \Aa\to \Aa_t$ stands for the evaluation at $t\in T$. The fiber at $t=0$ is isomorphic to $C(\Omega)\otimes C(\TM^d)$.
\end{definition}

Recall that a continuous field of $C^*$-algebras $(\Aa_t)_{t\in T}$ is itself a $C^*$-algebra obtained by completing a set of distinguished sections of $\sqcup_{t\in T}\Aa_t$ with certain properties \cite[Chapter 10]{Dixmier1982} in the natural $C^*$-norm. In the present case, any family of crossed products with continuously varied action by an amenable group has a canonical structure as a continuous field for which the total algebra of the field satisfies (e.g. \cite[Theorem 2.5]{ENN93})
\begin{equation}
\label{eq:field_crossed_product} \left(C(\Omega)\rtimes_{\alpha^{(t)}} \ZM^d\right)_{t\in T}\; =\; \left(C(T)\otimes C(\Omega)\right)\rtimes_{\alpha^{(T)}} \ZM^d\end{equation}
with the action $\alpha^{(T)}_x(f)(t) = \alpha^{(t)}_x(f(t))$.
Hence a dense subset of continuous sections is precisely given by the formal Fourier series
$$t \in T \;\mapsto\; \sum_{q\in \ZM^d} f_q(t) u_t^q, \qquad f_q \in C(T, C(\Omega))$$
whose finitely many non-vanishing coefficient functions depend continuously on a parameter and with unitary generators satisfying $u_t^q f u_t^{-q}=\alpha^{(t)}_q(f)$. In the present case, the field is trivial away from $0$ in the sense that all fibers for $t>0$ are isomorphic to each other due to the scaling action of $\gamma$, moreover:
\begin{lemma}
\label{lemma:rescaling}
For any $0 < t_0 < 1$ the field 
$$\left(C(\Omega)\rtimes_{\alpha^{(t)}} \ZM^d\right)_{t\in [0,t_0]}$$
is isomorphic to $\Aa$ as a continuous field.
\end{lemma}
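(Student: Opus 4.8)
The plan is to exploit the scaling action $\gamma$, which for any fixed $t_0>0$ provides an honest $*$-automorphism $\gamma_{t_0}$ of $C(\Omega)$ that intertwines the rescaled actions and so lets one reparametrize the interval. Concretely, from the compatibility relation $\gamma_t\circ\alpha_x=\alpha_{tx}\circ\gamma_t$ and the definition $\alpha^{(u)}_x=\alpha_{ux}$ one reads off, for every $u>0$ and $x\in\ZM^d$,
$$\gamma_{t_0}\circ\alpha^{(u)}_x\;=\;\alpha^{(t_0 u)}_x\circ\gamma_{t_0},$$
so that conjugation by $\gamma_{t_0}$ carries the action $\alpha^{(u)}$ into $\alpha^{(t_0 u)}$. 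This single ingredient is the whole mechanism behind the claimed rescaling isomorphism.

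Using the description \eqref{eq:field_crossed_product} of the total algebras as crossed products of the form $(C(T)\otimes C(\Omega))\rtimes\ZM^d$, I would first fix the base homeomorphism $\lambda\colon[0,t_0]\to[0,1]$, $\lambda(s)=s/t_0$, which fixes the degenerate endpoint $0$, and then define a $*$-isomorphism on the coefficient algebras
$$\Psi\colon C([0,1],C(\Omega))\longrightarrow C([0,t_0],C(\Omega)),\qquad \Psi(F)(s)=\gamma_{t_0}\big(F(s/t_0)\big).$$
The central computation is to verify that $\Psi$ is equivariant for the two families of actions, $\Psi\circ\alpha^{(T)}_x=\alpha^{([0,t_0])}_x\circ\Psi$; after evaluating at $s$ and unwinding $\alpha^{(T)}_x(F)(t)=\alpha^{(t)}_x(F(t))$ this reduces exactly to the intertwining relation $\gamma_{t_0}\circ\alpha^{(s/t_0)}_x=\alpha^{(s)}_x\circ\gamma_{t_0}$ noted above. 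Being $\ZM^d$-equivariant, $\Psi$ then extends canonically to a $*$-isomorphism of the crossed products, $\Aa\to\big(C(\Omega)\rtimes_{\alpha^{(t)}}\ZM^d\big)_{t\in[0,t_0]}$.

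The step requiring the most care, and the main obstacle, is to confirm that this map is an isomorphism of continuous fields rather than merely of the total $C^*$-algebras. For this I would check that $\Psi(\phi\cdot F)=(\phi\circ\lambda)\cdot\Psi(F)$ for $\phi\in C([0,1])$, so that the extended map is $C(T)$-linear over the base homeomorphism $\lambda$; this makes it intertwine the evaluations $\mathrm{ev}_s$ and $\mathrm{ev}_{s/t_0}$ and thus send continuous sections bijectively to continuous sections. Continuity is preserved because $\gamma_{t_0}$ is a single (strongly continuous action) value applied coefficientwise to the Fourier series, composed with the continuous reparametrization $\lambda$. This is precisely where the hypothesis $t_0>0$ enters: $\gamma_{t_0}$ is invertible with inverse $\gamma_{1/t_0}$, so the construction degenerates only at the fiber $t=0$, which is left fixed by $\lambda$ and where $\gamma_{t_0}$ acts as the harmless automorphism $\gamma_{t_0}\otimes\mathrm{id}$ of $C(\Omega)\otimes C(\TM^d)$.
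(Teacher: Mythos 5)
Your proof is correct, and it relies on the same mechanism as the paper's: the scaling automorphisms intertwine the rescaled actions, $\gamma_c\circ\alpha^{(u)}_x=\alpha^{(cu)}_x\circ\gamma_c$, and hence induce isomorphisms of crossed products, while your check of $C(T)$-linearity over the base homeomorphism is a legitimate way to promote this to an isomorphism of continuous fields rather than merely of total algebras. The difference lies in the choice of reparametrization. The paper defines fiberwise maps $\rho_{s,t}\big(\sum_q a_q u_s^q\big)=\sum_q \gamma_{t/s}(a_q)\,u_t^q$ and composes them along a nonlinear increasing $g:[0,t_0]\to[0,1]$ that equals the identity on $[0,\tfrac12 t_0]$, so the scaling ratio $g(s)/s$ equals $1$ near $s=0$ and the field isomorphism restricts to the \emph{identity} on all fibers over $[0,\tfrac12 t_0]$, in particular at $t=0$. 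You instead take the linear base map with the single constant automorphism $\gamma_{t_0}$; this is cleaner to verify (one equivariant automorphism of the total algebra, no varying family of automorphisms, no continuity question as $s\to 0$), but your isomorphism acts on the fiber at $t=0$ by $\gamma_{t_0}\otimes\mathrm{id}$ rather than trivially. For the lemma as stated this is immaterial. It does matter, however, for how Lemma~\ref{lemma:rescaling} is used downstream: in the stretching arguments (producing projection/unitary-valued quantizations of a \emph{given} $\xi_0\in\Aa_0$, and in Proposition~\ref{prop:quantization}, where $H^{(0)}$ must remain $\epsilon$-close to the given symbol $H$), one needs a section over $[0,t_0]$ with prescribed value at $t=0$ to stretch to a section over $[0,1]$ with the \emph{same} value at $t=0$. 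With your isomorphism the stretched section evaluates at $0$ to $(\gamma_{t_0}^{-1}\otimes\mathrm{id})(\xi_0)$; this is repairable — quantize the translate $(\gamma_{t_0}\otimes\mathrm{id})(\xi_0)$ instead, which is legitimate because $\Omega_\infty$ is invariant under the scaling action, so gaps at $\Omega_\infty$ are preserved — but the paper's choice of $g$ sidesteps the issue entirely.
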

\begin{proof}
For any $s,t\neq 0$ the isomorphism $\rho_{s,t}:C(\Omega)\rtimes_{\alpha^{(s)}}\ZM^d \to C(\Omega)\rtimes_{\alpha^{(t)}}\ZM^d$ is given by
$$\rho_{s,t}\Big(\sum_{q\in \ZM^d} a_q u_s^q\Big) \; = \;\sum_{q\in \ZM^d} \gamma_{t/s}(a_q) u_t^q.$$
Choosing any continuous increasing function $g: [0,t_0]\to [0,1]$ which is equal to the identity on $[0,\frac{1}{2}t_0]$ the isomorphisms $\rho_{s,g(s)}$ together define an isomorphism of the continuous fields.
\end{proof}
%This property implies, among other things, that any continuous section defined in a neighborhood of $t=0$ can be extended to a continuous section of the field.

\medskip

%In view of the crossed product structure on $\Aa_t$, any continuous section $(a_t)_{t\in T}$ in the field can be written as formal Fourier series
%$$a_t \;=\; \sum_{x\in \ZM^d} f^{(t)}_x u^x$$
%with $u^x$ representing the generators of the $\ZM^d$-action and $f_x^{(\cdot)}\in C(T, C_0(\Omega))$. One obtains for each $\omega\in \Omega$ an induced representation on $\ell^2(\ZM^d)$ with the matrix elements
%\begin{equation}
%\label{eq:representation}
%\langle \delta_x, \pi_\omega^{(t)}(a_t) \delta_y\rangle \;=\; f_{x-y}\big(t(x-y)\triangleright\omega\big), \qquad \forall x,y\in \ZM^d
%\end{equation}
%where $x\triangleright\omega$ denotes the action on $\Omega$. The direct sum or integral of those representations overall $\omega$ results in a faithful representation.

\begin{definition}
Consider a continuous field as in Definition~\ref{def:adiabatic_algebras}. We say that a continuous section $\hat{\xi}=(\xi_t)_{t\in T}$ of the field with $\ev_0(\hat{\xi})=\xi_0=\xi$ is an adiabatic quantization of $\xi \in \Aa_0$. For the particular case when $\xi$ is a projection/unitary/self-adjoint an adiabatic quantization shall be a continuous section which consists of projections/unitaries/self-adjoints. 
\end{definition}
For any continuous field there is by definition at least a norm-dense subalgebra of elements of $\Aa_0$ which have an adiabatic quantization and for self-adjoint elements the section can be made self-adjoint. To construct sections with additional properties we recall that sections satisfy spectral continuity, i.e. given any self-adjoint continuous section $(\xi_t)_{t\in T}$ the spectra $\sigma(\xi_t)\subset \RM$ will depend on $t$ continuously in the sense that if $\sigma(\xi_t)$ is contained in an open set for some $t$, then this will also be true for an open neighborhood of $t$ \cite[Proposition 10.3.6]{Dixmier1982}. In particular, if $\xi_t$ is invertible then this will also hold for a small neighborhood of $t$. It is well-known that a selfadjoint element of a $C^*$-algebra with a spectral gap can be deformed to a projection via functional calculus and also that continuous functional calculus maps continuous sections to continuous section \cite[Proposition 10.3.3]{Dixmier1982}. Likewise, any invertible which commutes with its adjoint up to a small enough error $\epsilon$ is close to an actual unitary by polar decomposition. Therefore one can replace any adiabatic quantization of a projection/unitary $\xi_0$ with another adiabatic quantization such that $\xi_t$ is a projection/unitary at least for small enough $t$. This is enough since triviality away from $0$ implies that one can stretch any such section to cover all of $[0,1]$, see Lemma~\ref{lemma:rescaling}. Hence any projection/unitary has an adiabatic quantization consisting of projections/unitaries. With this argument, one can show that, for any continuous field over $[0,1]$ that is trivial away from $0$, there exists a canonical homomorphism \cite{ENN93}
\begin{equation}\label{eq: Q}
Q_i\colon [\xi_0]_i \in K_i(\Aa_0) \;\mapsto\; [\xi_1]_i \in K_i(\Aa_1)
\end{equation}
which does not depend on the continuous section used. In general, it need not be injective or surjective.

As motivated in the introduction, we want to quantize functions on the phase space $\Omega \times \TM^d$ to obtain lattice Hamiltonians that have spectral gaps at specified asymptotic limits and computable Chern numbers.

%Along this work for a $C^*$-algebra $\Aa$ we shall denote its standard unitization by $\Aa^\sim.$
\begin{definition}
Let $\Omega$ be a configuration space. An adiabatic symbol is a self-adjoint function $H$ in $C(\Omega\times \TM^d) \otimes M_N(\CM)$. For a compact invariant subset  $\Omega_\infty$ of $\Omega$ let $R^{(0)}_{\Omega_\infty}\colon C(\Omega\times \TM^d) \to C(\Omega_\infty \times \TM^d)$ be the restriction map. We say that $H$ is gapped on $\Omega_\infty\subset \Omega$ if $H|_{\Omega_\infty}:=R^{(0)}_{\Omega_\infty}(H)$ has a spectral gap around $0$, i.e. there is a compact interval $\Delta$ containing $0$ so that $\sigma\big(H|_{\Omega_\infty}\big)\cap \Delta =\emptyset$.
\end{definition}

%\begin{remark}
%    The notions can also be used for non-compact $\Omega$ by going over to the one-point compactification of $C_0(\Omega\times \TM^d)$ (one needs to adjoin a unit since $C_0$-functions cannot have spectral gaps on non-compact spaces). \hfill 
%\end{remark}

%In terms of $C^*$-algebras $C_0(\Omega\times \TM^d)^\sim$ is a conditional unitization of $C_0(\Omega\times \TM^d)$, which adjoins an additional unit only if the algebra is not already unital. We are forced to extend functions in this way, since $C_0$-functions cannot have spectral gaps on non-compact spaces. Correspondingly we set $(C_0(\Omega)\rtimes_t \ZM^d)^\sim = C(\Omega)\rtimes_t \ZM^d$ if $\Omega$ is compact and $(C_0(\Omega)\rtimes_t \ZM^d)^\sim =(C_0(\Omega)\rtimes_t \ZM^d)^+$, the minimal unitization, if $\Omega$ is non-compact.
Here and in the remainder of this work, we say that a subset of $\Omega$ is invariant if it is mapped to itself under both the $\RM^d$- and $\RM_+$-actions.
\begin{proposition}
\label{prop:quantization}
Given an adiabatic symbol $H$ and any $\epsilon>0$, there exists a continuous section  $(H^{(t)})_{t\in T}$ of the field $(M_N(C(\Omega)\rtimes_{\alpha^{(t)}}\ZM^d))_{t\in T}$ consisting of self-adjoint operators such that $\norm{H^{(0)}-H}<\epsilon$. Let $\Omega_\infty\subset \Omega$ be closed and invariant. If $H$ is gapped on $\Omega_\infty$, then the section can be chosen such that: 
\begin{enumerate}[i.]
    \item[(i)] $H^{(t)}$ is gapped in $M_N(C(\Omega_\infty)\rtimes_{\alpha^{(t)}} \ZM^d)$ for all $t$;
    \item[(ii)] Let $\Omega$ carry a linear $G$-action for some finite subgroup $G$ of $O(d)$ that leaves $\Omega_\infty$ invariant and induces an action $\eta: C(\Omega)\times G \to C(\Omega)$, then the quantization can additionally be chosen to also be $G$-invariant. Here, linear means that $\eta_g \circ \alpha_{x} = \alpha_{gx} \circ \eta_g$ on $C(\Omega)$ and $\eta_g \circ \gamma_t = \gamma_t \circ \eta_g$, which implies that the action extends to the crossed product.
\end{enumerate}

\end{proposition}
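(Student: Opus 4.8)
The plan is to work with the crossed-product description of the field given in \eqref{eq:field_crossed_product}, where a self-adjoint element of the total algebra $\Aa = (C(T)\otimes C(\Omega))\rtimes_{\alpha^{(T)}}\ZM^d$ is exactly the same data as a self-adjoint continuous section. First I would expand the adiabatic symbol $H(\omega,k)=\sum_{q\in\ZM^d} a_q(\omega)\,e^{2\pi\imath q\cdot k}$ and, using Fej\'er summation on $\TM^d$, replace it by a trigonometric polynomial $H_\epsilon$ with finitely many terms and $\norm{H-H_\epsilon}<\epsilon$. Lifting the now-finite coefficient data to $\tilde H = \sum_q a_q u^q\in\Aa$ and setting $H^{(\cdot)} = \tfrac12(\tilde H + \tilde H^\ast)$ produces a self-adjoint continuous section. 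Since $\alpha^{(0)}$ is trivial, the symmetrization does nothing at $t=0$, so $\ev_0(H^{(\cdot)}) = H_\epsilon$, giving the required approximation. The symmetrization is necessary because $\sum_q a_q u_t^q$ is not self-adjoint for $t>0$, as $(u_t^q)^\ast = u_t^{-q}$ conjugates the coefficients by $\alpha^{(t)}$.

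For (i), restriction along $C(\Omega)\to C(\Omega_\infty)$ is equivariant for both $\alpha$ and $\gamma$ because $\Omega_\infty$ is invariant, hence induces a morphism of continuous fields sending $H^{(\cdot)}$ to a self-adjoint section of $(C(\Omega_\infty)\rtimes_{\alpha^{(t)}}\ZM^d)_{t}$. At $t=0$ this section is $H_\epsilon|_{\Omega_\infty}$, which inherits the spectral gap of $H|_{\Omega_\infty}$ once $\epsilon$ is smaller than half the gap width. By spectral continuity of continuous fields \cite[Proposition 10.3.6]{Dixmier1982} the gap persists on an open neighbourhood of $0$, hence on some $[0,t_0]$. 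The essential remaining step is to extend the gap to all of $[0,1]$: here I would invoke Lemma~\ref{lemma:rescaling}, transporting the section on $[0,t_0]$ by the field isomorphism $\Phi\colon\Aa|_{[0,t_0]}\to\Aa$. Because $\Phi$ is fiberwise a $\ast$-isomorphism it preserves both self-adjointness and spectral gaps; because it equals the identity near $0$ (where $g$ is the identity) it preserves the value $H_\epsilon$ at $t=0$; and because it is built from the scaling $\gamma$ it is compatible with the restriction to $\Omega_\infty$. Renaming $\Phi(H^{(\cdot)})$ as the new section then yields a gap on $\Omega_\infty$ for every $t\in[0,1]$.

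For (ii) the naive fix of averaging a gapped section over $G$ must be avoided, since a convex combination of gapped self-adjoints is generally not gapped. Instead I would make every step above $G$-equivariant. One averages the trigonometric approximant, $H_\epsilon\mapsto\tfrac{1}{|G|}\sum_{g\in G}\eta_g(H_\epsilon)$, which is again a trigonometric polynomial and, because $H$ itself is $G$-invariant, still satisfies $\norm{H-H_\epsilon}<\epsilon$. The quantization $\tilde H=\sum_q a_q u^q$ and its symmetrization are then $G$-invariant, as $\eta_g$ is a $\ast$-automorphism permuting the $u^q$ according to $g\ZM^d=\ZM^d$, and the rescaling isomorphism $\Phi$ is $G$-equivariant precisely because $\eta_g\circ\gamma_t=\gamma_t\circ\eta_g$. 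Thus the gapped section produced in (i) is automatically $G$-invariant, the gap being obtained solely from spectral continuity and stretching rather than from any averaging of the Hamiltonian.

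I expect the main obstacle to be the passage from a gap on a neighbourhood of $t=0$ to a gap on the whole interval $[0,1]$: spectral continuity alone only delivers the former, and it is the $\RM_+$-scaling action, packaged in Lemma~\ref{lemma:rescaling}, that supplies the latter. The secondary difficulty is retaining the gap while enforcing $G$-invariance, which is resolved by building equivariance into the construction from the symbol level onward rather than symmetrizing at the end.
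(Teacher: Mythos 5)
Your proof is correct and follows the same skeleton as the paper's: an approximating self-adjoint section, the commutative diagram showing that restriction to $\Omega_\infty$ is a morphism of continuous fields, spectral continuity to get the gap near $t=0$, and Lemma~\ref{lemma:rescaling} to stretch that gap over all of $[0,1]$. Your explicit Fej\'er-plus-symmetrization construction is precisely the explicit quantization the paper writes down immediately \emph{after} its proof; inside the proof the paper simply invokes the general theory of continuous fields for the existence of the approximating section. The one point of genuine divergence is item (ii): the paper does exactly what you say ``must be avoided'' --- it averages the section over $G$ --- and this is legitimate because the averaging is performed \emph{before} the gap argument: since $\eta_g(H)=H$, each $U_g^*\eta_g(H^{(0)})U_g$ is still within $\epsilon$ of $H$, so the averaged endpoint is too, and the gap on $\Omega_\infty$ is only derived \emph{afterwards} from endpoint closeness, spectral continuity and rescaling, none of which interacts with the averaging. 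Your worry about convex combinations of gapped self-adjoints would apply only if one averaged a section whose gap had already been arranged by some other means; in both your argument and the paper's, equivariance is secured first (you at the symbol level, the paper at the section level) and the gap second, so the two proofs are logically equivalent. Your additional observations --- that the rescaling isomorphism is $G$-equivariant because $\eta_g\circ\gamma_t=\gamma_t\circ\eta_g$, and that it is compatible with restriction to $\Omega_\infty$ because invariant subsets are by definition scaling-invariant --- are correct and make explicit two details the paper leaves implicit.
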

\begin{proof}
As discussed above, there is always a self-adjoint section whose endpoint $H^{(0)}$ is arbitrarily close to $H$.
By $G$-invariant we mean here that $\eta_g(H^{(t)})=U_g H^{(t)}U_g^*$ for a finite-dimensional representation $U:G\to M_N(\CM)$. By averaging over $G$ one can produce a $G$-invariant self-adjoint section $(H^{(t)})_{t\in T}$ such that $H^{(0)}$ is arbitrarily close to $H$. \\  
  The restriction map fits into the commutative diagram with exact rows
\begin{equation*}
	\begin{tikzcd}[column sep=small]
	\left(C_0(\Omega\setminus \Omega_\infty)\rtimes_{t} \ZM^d\right)_{t\in T} \arrow[d]\arrow[r,hook,""] & \left(C(\Omega)\rtimes_{\alpha^{(t)}} \ZM^d\right)_{t\in T} \arrow[d]\arrow[r,two heads,"R_{\Omega_\infty}^{(t)}"] & \left(C(\Omega_\infty)\rtimes_{\alpha^{(t)}} \ZM^d\right)_{t\in T} \arrow[d]\\
	C_0(\Omega\setminus \Omega_\infty)\otimes C(\TM^d)
	\arrow[r,hook,""]  & C(\Omega)\otimes C(\TM^d) \arrow[r,two heads,"R^{(0)}_{\Omega_\infty}"] & C(\Omega_\infty)\otimes C(\TM^d)
	\end{tikzcd}
\end{equation*} 
which implies that the restriction $R_{\Omega_\infty}^{(t)}(H^{(t)})\in M_N(C(\Omega_\infty)\rtimes_{t}\ZM^d)$ for any adiabatic symbol is again a self-adjoint $G$-invariant continuous section.  Since $R^{(0)}_{\Omega_\infty}(H^{(0)})$ has a spectral gap around $0$ and one has continuity of spectra along a continuous section one knows that  $R^{(t)}_{\Omega_\infty}(H^{(t)})$ also has a spectral gap around $0$ for small enough $t$. One can then rescale $t$ as in Lemma~\ref{lemma:rescaling} to obtain a section which is gapped at $\Omega_\infty$ for all $t\in [0,1]$.  
\end{proof} 
\begin{remark}{\rm
Proposition \ref{prop:quantization} is also true for  a $\phi$-twisted invariant adiabatic symbol $H$, i.e. $\eta_g(H^{(t)})=(-1)^{\phi(g)}U_g H^{(t)}U_g^*$ where $\phi\colon G \to \ZM_2$  is a homomorphism. This is in particular relevant for a chiral symmetric Hamiltonians which needs to satisfy an anti-symmetry $H=-JHJ$. }   \hfill    $\blacktriangleleft$  
\end{remark}

%What is not immediate is that the topological invariants of the quantization, in particular, Chern numbers are related to those of the symbol. For a locally trivial field of $C^*$-algebras $(\Aa_t)_{t\in [0,1]}$ there is a canonical homomorphism $K_i(\Aa_0)\to K_i(\Aa_1)$ \cite{ENN93}, but it need not be injective nor surjective. In the next section we discuss a class of configuration spaces for which one can say more.

For a general configuration space $\Omega$ we just saw that it is not difficult to prove the existence of adiabatic quantizations with nice properties. Let us make the construction of such a quantization more explicit. For simplicity, let $H: \Omega \times \TM^d \to M_N(\CM)$ have uniformly finite hopping range in the sense that there is a Fourier expansion
$$H(\omega, k) \;= \;\sum_{q\in \ZM^d} f(\omega) e^{\imath k \cdot q}$$
with only finitely many non-vanishing coefficient functions $f_q\in M_N(C(\Omega))$. An adiabatic quantization is generated by the family of finite Fourier series 
$$H^{(t)} \;=\; \sum_{q\in \ZM^d} f^{(t)}_q u_t^q \in M_N(C(\Omega)\rtimes_{\alpha^{(t)}}\ZM^d)$$
with coefficients given by finitely many non-vanishing continuous functions $t\in [0,1]\mapsto f^{(t)}_q \in C(\Omega)$, $f^{(0)}_q=f_q$, and $(u^q_t)_{q\in \ZM^d}$ the unitary generators of the crossed product $C(\Omega)\rtimes_{\alpha^{(t)}}\ZM^d$, hence satisfying the commutation relation $u_t^yf^{(t)}_q u_t^{-y}  = \alpha_y^{(t)}(f^{(t)}_q)$. After symmetrizing one can read off functions $f_q^{(t)}$ which define a self-adjoint continuous section from
$$H^{(t)} \;=\; \tfrac{1}{2}\Big(\sum_{q\in \ZM^d} f_q u_t^q + u_t^{q} f^*_{-q}\Big)=\sum_{q\in \ZM^d} \tfrac{1}{2}\left(f_q + \alpha^{(t)}_{q} (f^*_{-q})\right) u_t^q \;.$$ %  \in M_N(C(\Omega)\rtimes_{\alpha^{(t)}}\ZM^d).$$
In any  representation on $\ell^2(\ZM^d)$ the associated matrix elements  take the form
\begin{align*}
\langle \delta_x, \pi^{(t)}_\omega(H^{(t)}) \delta_y\rangle \;&=\; f^{(t)}_{x-y}\big(tx\triangleright\omega\big)\\
\;&=\; \tfrac{1}{2}\left(f_{x-y}\big(tx\triangleright\omega)+f^*_{y-x}\big(ty\triangleright\omega)\right), \qquad \forall x,y\in \ZM^d.
\end{align*}
Thus, they depend on the real-space location, but any characteristic length scale becomes proportional to $t^{-1}$. Hence it is justified to identify the multiplication operator $H^{(0)}\in M_N(C(\Omega) \otimes C(\TM^d))$ as the adiabatic limit of $H^{(1)} \in C(\Omega) \rtimes_\alpha \ZM^d$. To make sure that $H^{(1)}$ varies slowly enough in space for the additional properties discussed in Proposition~\ref{prop:quantization}, one may have to rescale the coefficient functions as in Lemma~\ref{lemma:rescaling} such that $t=1$ represents a large enough scale. 

\section{Flat configuration spaces}
\label{sec:flat}

Recall that a CW-complex structure on a topological space $\Omega$ is given by an increasing filtration $\Omega=\bigcup_{-1\leq d} \Omega_n$ by closed sets, $\Omega_{-1}=\emptyset$, such that every $\Omega_n$ is obtained as an iterated pushout of cells
\begin{equation*}
	\begin{tikzcd}
		\coprod_i \SM^{n-1}\arrow{r} \arrow{d} & \Omega_{n-1} \arrow[d] \\
		\coprod_i \DM^{n}  \arrow{r} & \Omega_{n}
	\end{tikzcd}
\end{equation*}
%Each $\Omega_n$ are closed subsets in $\Omega$ and, in our context, they are also compact.

In this paper, it will be most convenient to identify the interior of the $n$-dimensional closed ball with $\RM^n$. We also give $\RM^n$ a canonical vector space structure. An action on $\RM^n$ can then be called affine if it acts via affine transformations.% Recall that $\Omega$ is equipped with an $\RM^d$ action.

\begin{definition}
\label{def:flat_cw}
	Let a configuration space $\Omega$  be a CW-complex with skeleton filtration $(\Omega_n)_{0\leq n\leq d}$, hence there is a homeomorphism
	$$\Omega_n\setminus \Omega_{n-1}\; \stackrel{\rho_{n}}{\longrightarrow}\; \bigsqcup_{i=1}^{N_n} \RM_i^n$$
	with labeled copies of $\RM^n$. We say that $\Omega$ is flat at level $n$ if $\rho_n$ can be chosen equivariant w.r.t. the translation and scaling action for affine actions on the copies $\RM^n_i$, i.e. for every $i$ there is a $n\times d$ matrix $\Lambda$ such that
	\begin{align*}
		x\triangleright \rho_n^{-1}(y) \;=\; \rho_n^{-1}(y + \Lambda x),\quad
		t\cdot \rho_n^{-1}(y) \;=\; \rho_n^{-1}(t y), \quad (t,x)\in \RM_+ \times \RM^d,
	\end{align*}
for $y\in \RM^n_i$. We also impose that $\Lambda$ has maximal rank. We will say, moreover, that the action is rational if the kernel of $\Lambda$ is spanned by $d-n$ linearly independent vectors in $\ZM^d$ or, equivalently, $\Lambda$ is proportional to a matrix in $M_{d\times n}(\mathbb{Q})$.
\end{definition}\label{Def:FlatCW}
If $\Omega$ is flat at every level, we will simply call it a flat configuration space. The rationality condition is important in view of Definition~\ref{def:defect}, since it implies that the $\ZM^d$-orbit of any $\omega \in \Omega_n\setminus \Omega_{n-1}$ is discrete with finite co-volume in $\Omega_n\setminus \Omega_{n-1}$ and, therefore, it represents a codimension $n$ defect for the pair $\Omega=\Omega_n$ and $\Omega_\infty=\Omega_{n-1}$. Note that the rationality condition is un-affected by scaling.

As a simple example the configuration space from Example~\ref{Ex:PointDefect} includes into the disk $\DM^d=\RM^d\sqcup \SM^{d-1}$ with the affine action of $\RM^d$ on itself. This obviously has a CW-structure, which is flat at the top level. That will also be true for the generalizations in section~\ref{sec:ex_defect}.

In the rest of the section, we show that the structure introduced in Definition~\ref{Def:FlatCW} appears quite naturally when considering geometric boundaries defined by polyhedral cones. 

Let us first consider the compact metric space $\mathscr{C}(\RM^d)$  of closed subsets of $\RM^d$ endowed with Fell topology \cite{FellPAMS1962}, i.e. the topology induced by the Hausdorff metric on $\mathscr{C}(\SM^d)$ upon including $\mathscr{C}(\RM^d)\to \mathscr{C}(\SM^d)$ via stereographic projection \cite{FHK, LS}. If $\mathcal{L}\in \mathscr{C}(\RM^d)$ we consider its orbit space $ \mathscr{O}_{\RM^d}(\mathcal{L}):=\{\mathcal{L}-x\;|\;x\in  \RM^d\}$. If $\Ll$ is scale-invariant then a configuration space is defined by the compact space
\begin{equation}
    \Omega_\mathcal{L}\;:=\;\overline{\mathscr{O}_{\RM^d}(\mathcal{L})}
\end{equation}
where the closure is taken with respect to the Fell topology and the translation and scaling action are inherited from $\RM^d$.

% \begin{proposition}\label{Prop:OmegaI} Let $\{x_n\} \subset \RM^d$ be a sequence confined in one of the connected domains carved by $\Ii$, such that $\lim_{n\to \infty}d(x_n,\Ii) =0$, where $d(\cdot)$ is the ordinary distance on $\RM^d$. Then $\{f(x_n):= \phi(x_n \triangleright \omega)\}$ has a limit for any $\phi \in C(\Omega_\Ii)$ and $\omega \in \Omega_\Ii$. Furthermore, this limit is entirely determined by $\phi$ and by the connected component hosting ${x_n}$.
% \end{proposition}

% \begin{proof} Obviously, $x_n \triangleright \omega$ contains a converging sub-sequence in $\Omega_\Ii$. In fact, the entire sequence is convergent because, as already mentioned, the Fell topology on $\Cc(\RM \times \RM^d)$ is metrizable and, in the stated conditions, the sequence $x_n \triangleright \omega$ is Cauchy. The limit of $x_n \triangleright \omega$ is the colored $\RM^d$ set, where the color is set by the connected component hosting $\{x_n\}$. Since this limit is the same for any such sequence, the last statement also follows.
% \end{proof}

% Interface models were discusses in section~\ref{sec:intro} and a concrete example was supplied in \ref{Ex:Interface1}. Proposition~\ref{Prop:OmegaI} assures us that $\ZM^d \ni x \mapsto \phi(x \triangleright \omega)$ are admissible interpolation functions for such models across the interface. Therefore, we have:

% \begin{corollary} $\KM \otimes (C(\Omega_\Ii) \rtimes \ZM^d)$ generates interface models for $\Ii$ via the left-regular representations~\eqref{Eq:PiOmega}.
% \end{corollary}

For each $v\in \SM^d$, we define the half-plane $\mathcal{L}_v := \{x \in \RM^d \;|\; v\cdot x\geq 0\}$. A convex cone can be defined as a non-empty intersection of half-spaces $\mathcal{L}=\bigcap_{i=1}^m \mathcal{L}_{\lambda_i}$. If $m=d$ and the normal vectors $\lambda_i$ are linearly independent, then one speaks of a simplicial cone.

% Obviously, if $\Ii = \partial \Ll$, the boundary of such simplicial cone, then $\Omega_\Ii$ is homeomorphic to  $\Omega_\mathcal{L}=\overline{\Oo(\Ll)}$, the hull of $\Ll$ in $\Cc(\RM^d)$.

\begin{proposition}
\label{prop:hull}
	 For $\Ll$ a simplicial cone $\Omega_\mathcal{L}$ is homeomorphic to $\tilde{\Omega}:=[0,1)^d\cup \{*\}$, the one-point compactification of a half-open parallelepiped.
\end{proposition}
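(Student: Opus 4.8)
The plan is to reduce everything to the model cone $\Ll_0=[0,\infty)^d$ and then to exhibit the homeomorphism by hand. Since $\Ll=\bigcap_{i=1}^d\mathcal{L}_{\lambda_i}$ is simplicial, the normals $\lambda_1,\dots,\lambda_d$ are linearly independent, so the linear isomorphism $T\colon x\mapsto(\lambda_1\cdot x,\dots,\lambda_d\cdot x)$ of $\RM^d$ satisfies $T\Ll=[0,\infty)^d$. Any linear isomorphism of $\RM^d$ induces a homeomorphism $C\mapsto TC$ of $\mathscr{C}(\RM^d)$ in the Fell topology, and it intertwines the translation action $x\triangleright(\cdot)$ with $(Tx)\triangleright(\cdot)$ while commuting with scaling. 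Hence it maps the translation orbit $\mathscr{O}_{\RM^d}(\Ll)$ onto $\mathscr{O}_{\RM^d}(\Ll_0)$ and therefore $\Omega_{\mathcal{L}}$ homeomorphically onto $\Omega_{\Ll_0}$. From here on I would assume $\Ll=[0,\infty)^d$.

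For this model the translates split as products of half-lines, $\Ll-x=\prod_{i=1}^d[-x_i,\infty)$, so that $\mathscr{O}_{\RM^d}(\Ll)=\{\prod_i[a_i,\infty):a\in\RM^d\}$ is parametrized bijectively by the apex $a=-x$. The key structural observation is that Fell convergence of such products decouples across the $d$ factors: a sequence $\prod_i[a_i^{(n)},\infty)$ converges if and only if every endpoint $a_i^{(n)}$ converges in $[-\infty,+\infty]$, a half-line $[a_i,\infty)$ degenerating to the full line $\RM$ when $a_i\to-\infty$ and to $\emptyset$ when $a_i\to+\infty$, and the whole product collapses to $\emptyset$ as soon as one factor does. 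Granting this, the closure $\Omega_{\mathcal{L}}$ is exactly $\{\emptyset\}$ together with all products $\prod_i C_i$ in which each $C_i$ is a closed half-line $[c_i,\infty)$, $c_i\in\RM$, or the degenerate line $\RM$; the nonempty configurations are thereby parametrized by $(c_1,\dots,c_d)\in[-\infty,+\infty)^d$, the value $c_i=-\infty$ recording the degenerate factor.

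With this picture I would fix an increasing homeomorphism $\phi\colon[0,1)\to[-\infty,+\infty)$ with $\phi(0)=-\infty$ and define $\Phi\colon\tilde{\Omega}=[0,1)^d\cup\{*\}\to\mathscr{C}(\RM^d)$ by $\Phi(s)=\prod_i C_i(s_i)$, where $C_i(s_i)=[\phi(s_i),\infty)$ for $s_i>0$ and $C_i(0)=\RM$, and $\Phi(*)=\emptyset$. By construction $\Phi$ is a bijection onto $\Omega_{\mathcal{L}}$: its restriction to $(0,1)^d$ recovers the orbit, the remaining points of $[0,1)^d$ (those with some vanishing coordinates) hit the degenerate products, which are limits of orbit points obtained by sending the corresponding endpoints to $-\infty$, and $*$ maps to $\emptyset$. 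It then remains to check that $\Phi$ is continuous. Away from $*$ this is coordinatewise continuity of the endpoint-to-half-line assignment together with continuity of finite products in the Fell topology. At $*$ one uses that $s^{(n)}\to*$ in $\tilde{\Omega}$ means $\max_i s_i^{(n)}\to1$, hence some $\phi(s_i^{(n)})\to+\infty$ and $\Phi(s^{(n)})\to\emptyset$; this is precisely the statement that $\emptyset$ is the point at infinity of the locally compact space $\Omega_{\mathcal{L}}\setminus\{\emptyset\}\cong[0,1)^d$, matching the basic Fell neighborhoods $\{F:F\cap K=\emptyset\}$ of $\emptyset$. Since $\tilde{\Omega}$ is compact and $\Omega_{\mathcal{L}}\subset\mathscr{C}(\RM^d)$ is Hausdorff, the continuous bijection $\Phi$ is automatically a homeomorphism.

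I expect the only real obstacle to be the Fell-topology bookkeeping concentrated in the second paragraph: proving rigorously, from the ``hit'' subbasis $\{F:F\cap U\neq\emptyset\}$ and the ``miss'' subbasis $\{F:F\cap K=\emptyset\}$, that convergence of the products $\prod_i[a_i,\infty)$ decouples into convergence of the endpoints in $[-\infty,+\infty]$, and that the single point $\emptyset$ glues the entire missing boundary $\{\max_i s_i=1\}$ of $[0,1)^d$ to one point. Once these two facts are in hand, the reduction in the first paragraph and the compact-to-Hausdorff argument in the third are routine.
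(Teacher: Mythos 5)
Your proposal is correct, and its core is the same as the paper's: the same reduction to the orthant cone $(\RM_{\geq 0})^d$, literally the same map (products of half-lines, with a full-line factor at parameter $0$ and $*\mapsto\emptyset$), and the same compact-to-Hausdorff endgame. The one genuine difference is organizational. You front-load an explicit characterization of the Fell closure of the orbit via your endpoint-decoupling claim, and only then check that $\Phi$ is a bijection onto $\Omega_{\Ll}$; the paper sidesteps exactly the bookkeeping you flag as the main obstacle in your second paragraph. Indeed, once the map is continuous and injective on the compact space $\tilde{\Omega}$, its image is compact, hence closed in the Hausdorff space $\mathscr{C}(\RM^d)$; the image contains the orbit $\Oo_{\RM^d}(\Ll)$ (so it contains its closure $\Omega_{\Ll}$), and since $(0,1)^d$ is dense in $\tilde{\Omega}$ the image is also contained in $\overline{\Oo_{\RM^d}(\Ll)}=\Omega_{\Ll}$; hence image $=\Omega_{\Ll}$ with no prior knowledge of what the closure looks like. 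The explicit cell description you prove as an input is then harvested afterwards as a corollary, which is how the paper obtains the CW decomposition in its subsequent proposition. Your route buys the closure description up front, but note that your decoupling statement is false as written: if $a_1^{(n)}\to+\infty$ while $a_2^{(n)}$ oscillates boundedly, the products converge to $\emptyset$ although not every endpoint converges, so the ``only if'' direction fails. This is harmless for your argument --- you only need the ``if'' direction together with compactness of $[-\infty,+\infty]^d$ to pass to endpoint-convergent subsequences when identifying the closure --- but the cleanest fix is simply to drop the closure characterization and run the density argument above.
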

\begin{proof}
	W.l.o.g. we may assume that $\Ll=(\RM_{\geq 0})^d$ since any simplicial cone takes that form after a change of coordinates. Choose an increasing homeomorphism $t\colon [0,1) \to  \{-\infty\}\cup  \RM$ and define
	$$\imath\colon  x\in [0,1)^d \;\mapsto\;[t(x_1), \infty) \times ...\times [t(x_d), \infty) \in \mathscr{C}(\RM^d)$$
	and $\imath(*)=\emptyset$ the empty set. The map $\imath\colon \tilde{\Omega}\to \mathscr{C}(\RM^d)$ is clearly injective and the interior $(0,1)^d$ is mapped bijectively to $\Oo_{\RM^d}(\Ll)$.  Moreover, $\imath$ is continuous since its image consists of cartesian products of intervals and it is easy to verify sequential continuity using the well-known convergence criteria for sequences in the Fell topology (e.g. \cite[E.1.2.]{BenedettiPetronio}). From the compactness of $\tilde{\Omega}$, one concludes that $\imath$ is a homeomorphism onto its image. This implies that $\imath$ is a homeomorphism onto $\Omega_\Ll$ because the image of $\imath$ has $\Oo_{\RM^d}(\Ll)$ as a dense subset. 
\end{proof}

One can give $\tilde{\Omega}$ the structure of a CW-complex such that each $n$-face corresponds to an $n$-cell, which also gives $\Omega$ the structure of a CW-complex. By reducing to a subcomplex this also applies more generally to non-pointed cones which are intersections of $m\leq d$ half-spaces:

\begin{proposition}\label{prop: CW}
	Let $\{\lambda_i\}_{i=1}^m\subset \RM^{d}$ be a set of normalized linearly independent vectors and $\mathcal{L}=\bigcap_{i} \mathcal{L}_{\lambda_i}$. For $I\in \Pp(\{1,...,m\})$, the power set, denote $\Ll_I:= \cap_{i\in I} \Ll_{\lambda_i}$ with $\Ll_\emptyset= \RM^d$. Then 
	\begin{equation}\label{eq: hull1}
		\Omega_\mathcal{L}\;=\; \bigsqcup_{I\in \Pp(\{1,...,d\})} \mathscr{O}_{\RM^d}(\mathcal{L}_I)\sqcup\{\emptyset\}.
	\end{equation}
	Moreover, $\Omega_\Ll$ is a CW-complex in such a way that each $\Oo_{\RM^d}(\Ll_I)$ is an $n$-cell with $n=\abs{I}$ via a homeomorphism
	$$x \in \mathrm{span}(\lambda_i)_{i\in I} \;\mapsto\; \Ll_I - x.$$
	There is a linear homeomorphism of $\mathrm{span}(\lambda_i)_{i\in I}$ to $\RM^n$ such that the translation and scaling action on $\mathscr{C}(\RM^d)$ become an affine action on $\RM^n$ as in Definition~\ref{def:flat_cw} with $\Lambda$ a matrix with columns $(\lambda_i)_{i\in I}$.
	
\end{proposition}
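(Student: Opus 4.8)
The plan is to prove the three assertions—the orbit decomposition \eqref{eq: hull1}, the CW-structure, and the affine form of the actions—in that order, bootstrapping the general $m\leq d$ case off the simplicial case already settled in Proposition~\ref{prop:hull}.

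First I would make the translation orbit of $\mathcal{L}$ explicit. Since $\mathcal{L}-x=\bigcap_{i=1}^m\{\,y:\lambda_i\cdot y\geq -\lambda_i\cdot x\,\}$ and the $\lambda_i$ are linearly independent, the map $x\mapsto(-\lambda_i\cdot x)_{i=1}^m$ is a surjection $\RM^d\to\RM^m$; hence $\mathscr{O}_{\RM^d}(\mathcal{L})$ is precisely the family of cones $C(c):=\bigcap_i\{\,y:\lambda_i\cdot y\geq c_i\,\}$ with $c$ ranging over $\RM^m$. To compute the Fell-closure I would take an arbitrary sequence $C(c^{(k)})$ and, using compactness of $[-\infty,+\infty]^m$, pass to a subsequence along which each coordinate $c_i^{(k)}$ converges in $[-\infty,+\infty]$. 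Reading off the limit with the convergence criteria used in the proof of Proposition~\ref{prop:hull}: a coordinate with a finite limit keeps its constraint, a coordinate tending to $-\infty$ deletes its half-space (which expands to all of $\RM^d$), and a single coordinate tending to $+\infty$ forces the intersection to recede past every compact set, so the limit is $\emptyset$. Thus every subsequential limit is a translate of some $\mathcal{L}_I$, with $I$ the set of finite-limit indices, or else $\emptyset$, which yields \eqref{eq: hull1} (the index set being $\{1,\dots,m\}$). The union is disjoint because linear independence forbids facet redundancy, so $\{\lambda_i:i\in I\}$ is a translation invariant of $\mathcal{L}_I$ and distinct $I$ give distinct orbits; here $\mathscr{O}_{\RM^d}(\mathcal{L}_\emptyset)=\{\RM^d\}$ and $\{\emptyset\}$ are the two $0$-dimensional pieces.

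For the CW-structure I would reduce to the simplicial case. Completing $\{\lambda_i\}_{i=1}^m$ to a linearly independent family $\{\lambda_i\}_{i=1}^d$ and setting $\mathcal{L}'=\bigcap_{i=1}^d\mathcal{L}_{\lambda_i}$, Proposition~\ref{prop:hull} gives a homeomorphism $\Omega_{\mathcal{L}'}\cong[0,1)^d\cup\{*\}$. I would then observe that this space is the closed cube $[0,1]^d$ with the subcomplex $A=\{x:\max_i x_i=1\}$ of ``far'' faces collapsed to the point $*$; since $A$ deformation retracts linearly onto $(1,\dots,1)$ it is contractible, and collapsing a subcomplex of a CW-complex yields a CW-complex. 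Its open cells are the faces $\{x_i=0\ (i\notin I),\ x_i\in(0,1)\ (i\in I)\}$, carried by $\imath$ to $\mathscr{O}_{\RM^d}(\mathcal{L}'_I)$ for $I\subseteq\{1,\dots,d\}$, together with $*=\emptyset$. Because $\mathcal{L}_I=\mathcal{L}'_I$ for $I\subseteq\{1,\dots,m\}$, the set $\Omega_\mathcal{L}$ is exactly the union of the cells $\mathscr{O}_{\RM^d}(\mathcal{L}_I)$, $I\subseteq\{1,\dots,m\}$, and $\{\emptyset\}$ inside $\Omega_{\mathcal{L}'}$; since the closure of each such cell consists only of cells $\mathscr{O}_{\RM^d}(\mathcal{L}_J)$ with $J\subseteq I$ together with $\{\emptyset\}$, this union is closed under passing to cells in closures and hence is a subcomplex. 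This is what ``reducing to a subcomplex'' means, and it endows $\Omega_\mathcal{L}$ with the asserted cell structure.

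It remains to exhibit the cell homeomorphisms and linearize the actions. For fixed $I$ with $n=|I|$ set $V_I=\mathrm{span}(\lambda_i)_{i\in I}$ and $W_I=V_I^\perp$; the stabilizer of $\mathcal{L}_I$ is exactly $W_I$, since $\mathcal{L}_I-z=\mathcal{L}_I$ iff $\lambda_i\cdot z=0$ for all $i\in I$. Hence $z\mapsto\mathcal{L}_I-z$ descends to a continuous bijection $V_I\cong\RM^d/W_I\to\mathscr{O}_{\RM^d}(\mathcal{L}_I)$, which is a homeomorphism by the same compactness-plus-Fell-convergence argument as in Proposition~\ref{prop:hull}. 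Using the linear isomorphism $V_I\to\RM^n$, $z\mapsto(\lambda_i\cdot z)_{i\in I}$, the translation $z\mapsto z+x$ becomes $y\mapsto y+\Lambda x$ with $\Lambda$ the $n\times d$ matrix whose rows are the $\lambda_i$, i.e.\ the transpose of the matrix with columns $(\lambda_i)_{i\in I}$; and homogeneity of the cone, $t\cdot\mathcal{L}_I=\mathcal{L}_I$, turns scaling into $y\mapsto ty$. These are precisely the affine actions of Definition~\ref{def:flat_cw}, and $\ker\Lambda=W_I$ has the expected rank $d-n$.

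The main obstacle is the middle step: upgrading the set-level decomposition \eqref{eq: hull1} to an honest CW-complex, in particular verifying that the collapsed ``far'' faces form a contractible subcomplex and that all the identifications are homeomorphisms for the Fell topology rather than mere continuous bijections. Everything else—the orbit computation and the linearization of the actions—is bookkeeping with the offset coordinates $c_i=\lambda_i\cdot z$, once the Fell convergence criteria and the compactness of $\mathscr{C}(\RM^d)$ are in hand.
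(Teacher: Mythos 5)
Your proof is correct and follows essentially the same route as the paper's: reduce to the simplicial case by completing $\{\lambda_i\}_{i=1}^m$ to a linearly independent family of $d$ vectors, invoke Proposition~\ref{prop:hull} to identify $\Omega_{\Ll'}$ with the collapsed cube, realize the cells $\Oo_{\RM^d}(\Ll_I)$ as images of cube faces under the map $\imath$, and pass to the subcomplex indexed by $I\subseteq\{1,\dots,m\}$. The paper compresses all of this into two sentences, and your write-up soundly supplies what it leaves implicit — the coordinate-wise Fell-limit computation establishing \eqref{eq: hull1}, the verification that the union of cells is closed under closures (hence a subcomplex), and the linearization of the translation and scaling actions, including the correct reading of the $n\times d$ action matrix of Definition~\ref{def:flat_cw} as having rows $(\lambda_i)_{i\in I}$.
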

\begin{proof}
W.l.o.g. one can assume $m=d$ and reduce to the subcomplex generated by the intersection of the appropriate half-spaces. The cells $\Oo(\Ll_I)$ are precisely the images of parts of the boundary of $[0,1)^d$ under the map $\imath$ from the proof of Proposition~\ref{prop:hull} taking into account the necessary linear transformation.
\end{proof}

\begin{figure}[t!]
	\center
	\includegraphics[width=0.6\textwidth]{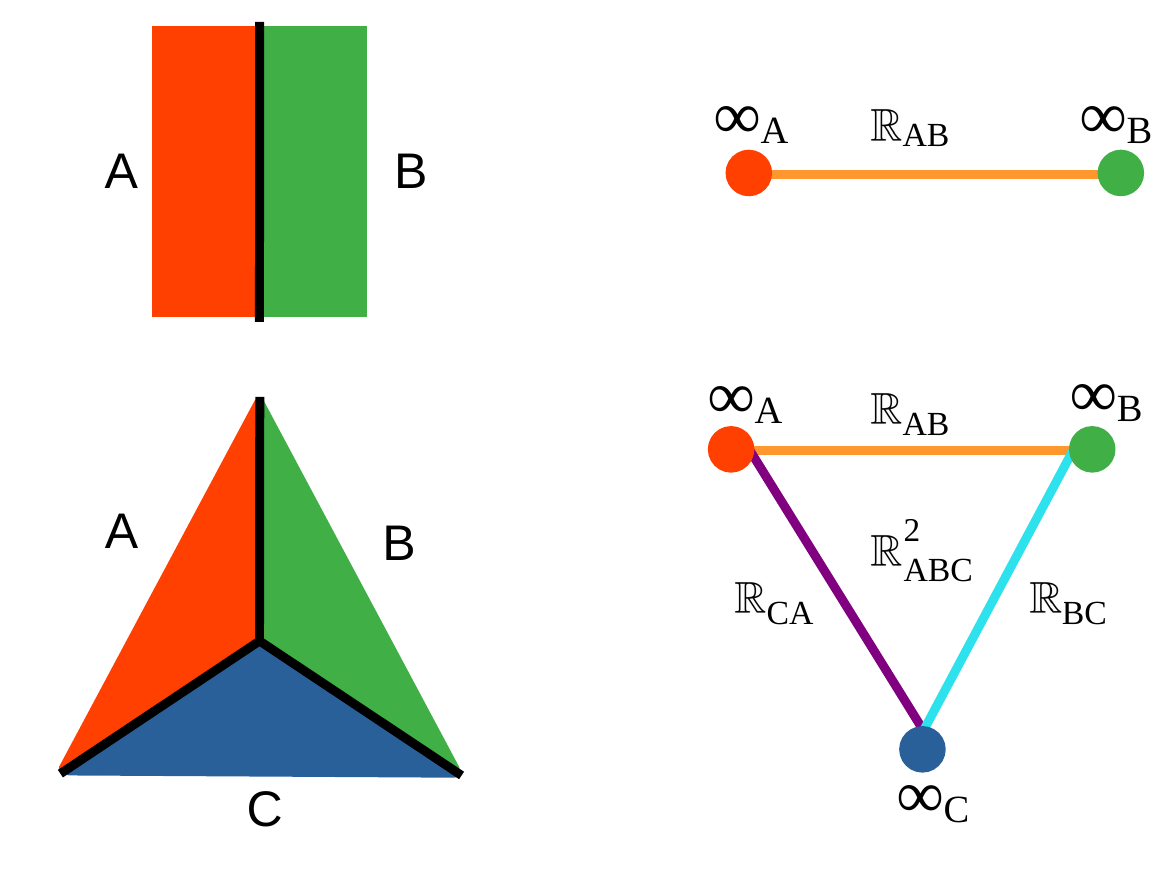}\\
	\caption{\small Examples of real-space geometries and configuration spaces. Top: The configuration space for an interface between two materials A, B consists of a single one-cell $\RM_{AB}$ with two infinite points attached $\Omega=\RM_{AB}\sqcup\{\infty_A\}\sqcup\{\infty_B\}$. Bottom: The configuration space for a Y-type interface between materias A, B and C, has the cell decomposition shown at the bottom right. The interface is invariant only w.r.t. translations orthogonal to the displayed plane, hence corresponding to a codimension $2$ defect which gives rise to a $2$-cell $\RM^2_{ABC}$. The one-cells $\RM_{XY}$ each correspond to the asymptotic interfaces of two materials only and the three points $\infty_A, \infty_B, \infty_C$ of the $0$-cell are the different bulk limits.
	}
	\label{fig:configuration_spaces}
\end{figure}

\begin{remark}
{\rm That $\Omega_\Ll$ can be rather naturally identified with the one-point compactification of $\Ll$ itself is related to the fact that a simplicial cone is self-dual; from computations of Wiener-Hopf compactifications for more general cones \cite{MuhlyRenault,Alldridge,Nica} the structure of the hull as a flat CW-complex is more naturally described in terms of the face lattice of the dual cone.}
\end{remark}

Assuming that the vectors $\lambda_1,...,\lambda_m$ are multiples of lattice vectors then the configuration space is flat and rational.
Let us discuss how the representations $\pi_\omega(h)$ for a self-adjoint Hamiltonian $C(\Omega_\Ll)\rtimes_\alpha \ZM^d$ look in real-space. For $\omega = \Ll-x \in \Oo_{\RM^d}(\Ll)$, which lies in the cell of top degree in the CW-complex, the representation describes two asymptotically translation-invariant materials which are interpolated across an interface layer which looks like the boundary of $\Ll$ translated by a vector $x$. The representations for $\omega$ in the different lower-dimensional cells describe different asymptotic limits, corresponding to cones which are intersections of fewer half-spaces. Note finally that the orbit $\overline{\Oo_{\ZM^d}(\omega)}$ for each $\omega\in \Oo_{\RM^d}(\Ll)$ decomposes precisely as in \eqref{eq:defect_decomp} with the groups $\Gamma_{m,j}$ iterating over the isotropy groups of the cones $\Ll_I$ under the $\ZM^d$-action. 
\begin{remark}
{\rm 
Configuration spaces for interfaces of more than two materials, as in Figure \ref{fig:configuration_spaces} can be obtained from partitions of $\RM^d$ into a finite number of polyhedral regions $\Ll_1,\dots ,\Ll
_k$. We decorate those connected components by assigning labels $\tilde{\Ll}_i:=\{i\}\times \Ll_i$ and consider $\Ll:=\cup_{i=1}^k\tilde{\Ll}_i \in \mathscr{C}(\{1,...,k\}\times \RM^{d})$. There is an obvious action of $\RM^d$ on $ \mathscr{C}(\{1,...,k\}\times \RM^{d})$ and one can again consider the closure in the Fell topology. The resulting configuration space is again a CW-complex in which the $0$-skeleton consists of $k$ points representing the different bulk limits.  \hfill   $\blacktriangleleft$ }
\end{remark}
\section{Connes-Thom isomorphism at work}

The Connes-Thom isomorphism \cite{Connes81} is a natural isomorphism between the K-theory of a crossed product algebra $\Cc\rtimes \RM^n$ with that of $\Cc$. It comes into play due to the following key observation:

\begin{proposition}
\label{prop:affine_crossedproduct}
If $\RM^n$ is equipped with the affine $\RM^d$ action 
$$\RM^d \times \RM^n \ni (x,y) \mapsto x\triangleright y = y + \Lambda^T x = y+ \begin{pmatrix}
    \lambda_1 & \dots & \lambda_n
\end{pmatrix}^T x,$$
with column vectors $\lambda_1,\dots,\lambda_n\in \RM^d$, then one can rewrite
$$C_0(\RM^n) \rtimes_{\alpha} \ZM^d\;\simeq \;C(\TM^d)\rtimes_{\hat{\alpha}} \RM^n,$$
with the $\RM^n$-action $\hat{\alpha}$ induced by the action $\RM^n\times \TM^d(y,k) \mapsto k-\Lambda y$ on $\TM^d = \RM^d/\ZM^d$.
\end{proposition}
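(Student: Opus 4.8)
The plan is to prove the isomorphism by exhibiting a common presentation of both crossed products through Pontryagin duality, so that the affine matrix $\Lambda$ enters symmetrically on the two sides and the whole statement reduces to matching two universal properties. The starting point is the Fourier transform $\mathcal{F}\colon C_0(\RM^n)\xrightarrow{\ \sim\ } C^*(\RM^n)$, which turns pointwise multiplication of position functions into the convolution algebra of the dual (momentum) group $\RM^n$, generated by a strongly continuous unitary family $(U_p)_{p\in\RM^n}$ with $U_pU_{p'}=U_{p+p'}$. Under $\mathcal{F}$ the translation action $\alpha_x$, $y\mapsto y-\Lambda^T x$, becomes the dual gauge action that multiplies each generator by a character, sending $U_p$ to $e^{2\pi\imath\langle p,\Lambda^T x\rangle}U_p=e^{2\pi\imath\langle\Lambda p,\,x\rangle}U_p$ (the overall sign being fixed by the orientation convention of $\mathcal{F}$). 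Hence $C_0(\RM^n)\rtimes_\alpha\ZM^d\simeq C^*(\RM^n)\rtimes_{\widehat{\alpha}}\ZM^d$ for this gauge action.

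Next I would identify $C^*(\ZM^d)\simeq C(\TM^d)$ by Fourier series, writing $(W_x)_{x\in\ZM^d}$ for the unitary generators of the $\ZM^d$-crossed product, which correspond to the characters $k\mapsto e^{2\pi\imath\langle k,x\rangle}$ on $\TM^d$. The covariance relation of the gauge crossed product then reads
$$W_x\,U_p\,W_x^{*}\;=\;e^{2\pi\imath\langle\Lambda p,\,x\rangle}\,U_p,\qquad x\in\ZM^d,\ p\in\RM^n.$$
The key observation is that this single relation, together with the two group laws, presents the algebra symmetrically: reading it as $\ZM^d$ acting on $C^*(\RM^n)=\overline{\mathrm{span}}\,\{U_p\}$ recovers the left-hand side, while reading it as $\RM^n$ acting on $C^*(\ZM^d)=\overline{\mathrm{span}}\,\{W_x\}=C(\TM^d)$ gives a crossed product $C(\TM^d)\rtimes\RM^n$. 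Indeed, in $C(\TM^d)\rtimes_{\widehat{\alpha}}\RM^n$ the implementing unitaries $(\tilde U_y)_{y\in\RM^n}$ satisfy $\tilde U_y W_x\tilde U_y^{*}=\widehat{\alpha}_y(W_x)=e^{2\pi\imath\langle\Lambda y,\,x\rangle}W_x$, which is exactly the relation above after the reflection $U_p\leftrightarrow\tilde U_{-p}$, i.e. $p=-y$; this sign is forced by the two opposite conventions $y\mapsto y+\Lambda^Tx$ on $\RM^n$ and $k\mapsto k-\Lambda y$ on $\TM^d$. Matching $W_x\mapsto W_x$ and $U_p\mapsto\tilde U_{-p}$ and using $\langle\Lambda p,x\rangle=\langle p,\Lambda^T x\rangle$ therefore produces the desired $*$-isomorphism, with $\Lambda$ and $\Lambda^T$ exchanging roles under the duality.

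The step I expect to be the main obstacle is to make ``the same generators and relations'' rigorous at the $C^*$-level rather than merely formal. Concretely I would verify that both crossed products satisfy the same universal property for covariant representations of the relation above: any pair consisting of a strongly continuous unitary group $(U_p)$ and a $\ZM^d$-representation $(W_x)$ obeying it integrates to a representation of each algebra, and the regular representation is faithful. The analytic care lies in the continuity of $p\mapsto U_p$ (so that $C^*(\RM^n)$, and not just its algebraic span, is generated) and in checking that no additional relations are imposed; both are controlled because a crossed product is by construction universal for its covariant representations, and because $\ZM^d$ and $\RM^n$ are amenable, so full and reduced norms coincide and the integrated forms are isometric \cite{Wil}. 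Once faithfulness and the absence of extra relations are secured, the reflection $p=-y$ pins down all signs and the isomorphism follows; the essential content remains the single covariance relation displayed above.
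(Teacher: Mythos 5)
Your proof is correct, and it rests on the same two pillars as the paper's argument---Pontryagin/Fourier duality and the single covariance relation---but it is organized quite differently. The paper works concretely: it fixes the regular covariant representation of $C(\TM^d)\rtimes_{\hat\alpha}\RM^n$ on $L^2(\TM^d)\otimes L^2(\RM^n)$, identifies $\int_{\RM^n}g(x)U_x\,\difd x$ with $\hat g(p)$ so that the finite series $\sum_q u^q\hat g_q(p)$ with $\hat g_q\in C_0(\RM^n)$ are norm-dense, and then uses the relation $u^qU_xu^{-q}=e^{\imath x\cdot(\Lambda^Tq)}U_x$ together with the universal property of the discrete crossed product to obtain a single homomorphism $C_0(\RM^n)\rtimes_\alpha\ZM^d\to C(\TM^d)\rtimes_{\hat\alpha}\RM^n$; surjectivity is the density statement just quoted, and injectivity follows because $C_0(\RM^n)$ is embedded isometrically and the conditional expectation onto it is faithful. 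You instead exhibit both algebras as completions of one and the same Weyl system and match universal properties; this is more symmetric and conceptual, it explains structurally why $\Lambda$ and $\Lambda^T$ trade places under the duality, and it would apply verbatim to any pair of locally compact abelian groups coupled by a bicharacter. The cost sits exactly where you placed it: making the ``same presentation'' idea rigorous requires (a) the SNAG/Stone theorem to convert strongly continuous unitary $\RM^n$-groups into nondegenerate representations of $C_0(\RM^n)$, and (b) the verification---left implicit in your sketch---that under this correspondence the integrated forms of $C_c(\ZM^d,C_0(\RM^n))$ and $C_c(\RM^n,C(\TM^d))$ have the same closure in every representation, so that the resulting homomorphism maps onto the crossed product rather than merely into its multiplier algebra; this is precisely the Fourier-plus-commutation computation that gives the paper its dense range, so it cannot be skipped in either route. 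Your appeal to amenability (full $=$ reduced, faithfulness of the regular representation) is a legitimate replacement for the paper's conditional-expectation proof of injectivity. One practical advantage of the paper's representation-based route is that the explicit dense subalgebra $\sum_q u^q\hat g_q(p)$ is reused later, for instance for the dual trace $\hat\Tt_\Lambda$ in Theorem~\ref{th:duality2}, which a purely abstract universal-property argument would not furnish directly.
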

\begin{proof}
 The crossed product $C(\TM^d)\rtimes_{\hat{\alpha}} \RM^n$ is the non-unital $C^\ast$-algebra defined by norm-closure of the linear span of operators of the form $\rho(f)\int_{\RM^n} g(x) U_x \difd{x}$ on $L^2(\TM^d)\otimes L^2(\RM^n)$ \cite{Raeburn88} where $f\in C(\TM^d)$, $g\in C_c(\RM^n)$ and the so-called regular covariant representation is 
\begin{align*}
(\rho(f)\psi)(k,y) &\;=\; f(k+\Lambda y) \psi(k,y)\\
(U_x\psi)(k,y) &\;=\; \psi(k,y-x)\qquad \forall \ k\in \TM^d, \ x,y\in \RM^n.
\end{align*}
One can write $\int_{\RM^n} g(x) U_x \difd{x}= \hat{g}(p)$, where  $\hat{g}\in C_0(\RM^n)$ is the Fourier transform of $g$ and $p:=(p_1,...,p_n)$ are the self-adjoint generators of the unitary translation group $(U_x)_{x\in \RM^d}$. Since the unitary operators $u^q= \rho(k\in \TM^d \mapsto e^{\imath k \cdot q})$ with $q\in \ZM^d$,  are dense in $\rho(C(\TM^d))$, one gets that a norm-dense subset of $\rho\rtimes U\big(C(\TM^d)\rtimes_{\hat{\alpha}}  \RM^n\big)$ is given by finite series
$$\sum_{q\in \ZM^d} u^q \hat{g}_q(p), \quad \hat{g}_q\in C_0(\RM^n).$$
Note the commutation relation $u^q U_{x} u^{-q}= e^{\imath x(\Lambda^T q)}U_x$ implies that $u^q$ acts on functions $\hat{g}(p)$ by translation of the argument by $\Lambda^T q$. Due to the universal property of crossed products \cite{Raeburn88}, this commutation relation shows that there is a homomorphism  $C_0(\RM^n)\rtimes_{\hat{\alpha}} \ZM^d \to C(\TM^d)\rtimes_{\hat{\alpha}}  \RM^n$. It clearly has dense range and, since the subalgebra $C_0(\RM^n)$ is embedded isometrically, it is injective (after all, the conditional expectation $C_0(\RM^n)\rtimes_{\alpha} \ZM^d\to C_0(\RM^n)$ is faithful).
\end{proof}
In the following we will write the dual crossed product from the Proposition as $C(\TM^d)\rtimes_{\Lambda}\RM^n$ to specify the underlying affine action. One obtains a continuous field whose fibers can be written as
$$\Aa_t\; :=\; C(\TM^d)\rtimes_{t \Lambda^T}\RM^n :=\; C(\TM^d)\rtimes_{t \Lambda}\RM^n.$$
\begin{proposition}
\label{prop:ktheory_contsection}
For each $t$, there is a Connes-Thom isomorphism
\begin{equation}\label{Eq:CT1}
\partial_{t\Lambda}: K_{i-n}( C(\TM^d)) \to K_i(C(\TM^d)\rtimes_{t \Lambda} \RM^n).
\end{equation}
A continuous section $(\xi_t)_{t\in T}$, $\xi_t\in M_N(C(\TM^d)\rtimes_{t \Lambda}\RM^n)$ of projections/unitaries defines classes $[\xi_t]_i\in K_i(C(\TM^d)\rtimes_{t \Lambda}\RM^n)$ and $(\partial_{t\Lambda})^{-1}([\xi_t]_i)\in K_{i-n}( C(\TM^d))$ does not depend on $t$.
\end{proposition}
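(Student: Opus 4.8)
The plan is to assemble the individual Connes--Thom maps \eqref{Eq:CT1} into a single isomorphism for the total algebra of the field, and then to read off the constancy from the naturality of Connes--Thom together with the contractibility of the base $T$. For a fixed $t$, the existence of \eqref{Eq:CT1} is exactly the Connes--Thom theorem \cite{Connes81} applied to the $\RM^n$-action $t\Lambda$ on $C(\TM^d)$; at $t=0$ the action degenerates to the trivial one and $\partial_{0}$ reduces to the Bott/suspension isomorphism $K_{i-n}(C(\TM^d))\simeq K_i\bigl(C(\TM^d)\otimes C_0(\RM^n)\bigr)$, consistently with $\Aa_0\simeq C(\TM^d)\otimes C_0(\RM^n)$.

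Next I would realize the whole field as a crossed product. By the same mechanism underlying \eqref{eq:field_crossed_product} (see \cite[Theorem 2.5]{ENN93}), the continuous field $(\Aa_t)_{t\in T}$ is the $\RM^n$-crossed product
$$\mathfrak{A}\;=\;\bigl(C(T)\otimes C(\TM^d)\bigr)\rtimes_{\hat\beta}\RM^n,$$
where $\hat\beta$ is the fibered action acting at $t$ by $t\Lambda$, i.e. $(\hat\beta_y f)(t)(k)=f(t)(k-t\Lambda y)$. A continuous section $(\xi_t)_{t\in T}$ of projections (resp. unitaries) is precisely a projection (resp. unitary, after unitization and matrix amplification) in $\mathfrak{A}$, and hence defines a class $[\xi]_i\in K_i(\mathfrak{A})$ with $[\xi_t]_i=(\ev_t)_*[\xi]_i$ for the evaluation $\ev_t\colon\mathfrak{A}\to\Aa_t$.

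The decisive step is naturality. Each coefficient evaluation $\ev_t\colon C(T)\otimes C(\TM^d)\to C(\TM^d)$ is $\RM^n$-equivariant, intertwining $\hat\beta$ with $t\Lambda$, and it induces the evaluation $\ev_t\colon\mathfrak{A}\to\Aa_t$ on crossed products. Naturality of the Connes--Thom isomorphism with respect to equivariant $*$-homomorphisms \cite{Connes81} then gives the intertwining relation
$$\partial_{t\Lambda}\circ(\ev_t)_*\;=\;(\ev_t)_*\circ\partial,$$
where $\partial\colon K_{i-n}\bigl(C(T)\otimes C(\TM^d)\bigr)\to K_i(\mathfrak{A})$ is the Connes--Thom map for $\hat\beta$. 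Setting $w:=\partial^{-1}[\xi]_i$ and applying $(\ev_t)_*$ yields $(\partial_{t\Lambda})^{-1}[\xi_t]_i=(\ev_t)_*(w)$. Since $T=[0,1]$ is contractible, the inclusion of constants $\CM\hookrightarrow C(T)$ is a homotopy equivalence, so all the induced maps $(\ev_t)_*\colon K_{i-n}\bigl(C(T)\otimes C(\TM^d)\bigr)\to K_{i-n}(C(\TM^d))$ coincide and are independent of $t$. Therefore $(\ev_t)_*(w)$ is independent of $t$, which is the assertion.

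The main obstacle lies in the second and third steps: one must confirm that the field genuinely is the crossed product $\mathfrak{A}$ with the stated fibered action, and that Connes--Thom is natural for the evaluation maps, including the evaluation at $t=0$ where the action degenerates. Both facts are standard — the first is the continuous-field structure of crossed products by amenable groups \cite{ENN93}, and the second is built into Connes' construction of $\partial$ as a natural transformation of functors — but some care is warranted to verify that the degenerate fiber fits into the same naturality square rather than being treated separately.
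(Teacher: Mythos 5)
Your proof is correct and follows essentially the same route as the paper's: realize the total field algebra as an $\RM^n$-crossed product, apply the Connes--Thom isomorphism to the entire field, and use its naturality under the equivariant evaluation maps to obtain a commuting square. The only cosmetic difference is the final step, where you invoke homotopy invariance via contractibility of $T$, while the paper represents the common preimage by a norm-continuous family in $M_N(C(\TM^d))$ --- the same underlying fact.
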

\begin{proof}
    The Connes-Thom isomorphism is a natural map $K_i(\Cc)\simeq K_{i+1}(\Cc\rtimes\RM)$  \cite{Connes81} and by composing $n$ Connes-Thom isomorphisms one has $K_i(\Cc)\simeq K_{i+n}(\Cc\rtimes\RM^n)$. The total algebra of the field $\Aa\simeq ( C(T)\otimes C_0(\RM^n))\rtimes_{\alpha^{(T)}}\ZM^d$ can also be rewritten as a crossed product $\Aa\simeq (C(T)\otimes C(\TM^d))\rtimes_{\hat{\alpha}^{(T)}} \RM^n$. There is therefore also a Connes-Thom isomorphism of the entire field
\begin{equation}\label{Eq:CT2}
\partial^{(T)}_{CT}\colon K_{i-n}(C(T)\otimes C(\TM^d)) \to K_i((C(T)\otimes C(\TM^d))\rtimes_{\hat{\alpha}^{(T)}} \RM^n).
\end{equation}
Due to the naturalness property of the Connes-Thom map under equivariant homomorphism \cite[Axiom 2]{Connes81}, restriction to a fiber $t$ supplies a commuting square between \eqref{Eq:CT2} and \eqref{Eq:CT1}, with surjective vertical arrows. One can therefore represent the pre-images $(\partial_{t\Lambda})^{-1}([\xi_t]_i)\in K_{i-n}( C(\TM^d))$ by a norm-continuous family $(\zeta_t)_{t\in T}$ in $M_N(C(\TM^d))$ representing the pre-image $[\zeta]_{i-n}=(\partial_{CT}^{(T)})^{-1}([\xi]_i)$. Hence they all define the same $K$-theory class.
\end{proof}

For the special case of discrete crossed products over $C_0(\RM^n)$, it is also known \cite{ENN93} that the composition $\partial_\Lambda \circ (\partial_{0\cdot\Lambda})^{-1}$ is precisely the natural homomorphism $K_i(\Aa_0)\to K_i(\Aa_t)$ mentioned in Section~\ref{sec-adiabatic}, which in this case is therefore an isomorphism. For an affine configuration space, one will always be able to reduce the calculations of defect Chern numbers from Proposition~\ref{prop:chern_rational} to one on the torus, which paves the way to relate the invariants with Chern numbers of an adiabatic symbol.

\section{Chern numbers of adiabatic quantizations}
\label{sec:chern}

All elements of $K_i(C(\TM^d))$ can be distinguished by numerical pairings with a set of cyclic cocycles, called Chern cocycles. Throughout we identify the torus with $\TM^d=\RM^d /\ZM^d$, therefore there is for any direction $v\in \RM^d$ a $*$-derivation $\nabla_v: C^1(\TM^d)\to C(\TM^d)$ given by $$\nabla_vf\;:=\; \sum_{i=1}^d v_i \nabla_i f.$$
For any tuple $v=(v_1,\dots,v_m)$ of vectors in $\RM^d$ and $(m+1)$-tuple $(f_0,\ldots,f_m)$ of functions $f_j\in C(\TM^d)$ one defines the Chern cocycle
\begin{equation}
% (f_0,\dots,f_m) \in (C^1(\TM^d))^{\times (m+1)} \mapsto 
\Ch_{v}(f_0,\dots,f_m) 
\;=\; \,
\sum_{\rho \in S_m} (-1)^\rho\, \Tt\big(f_0 \nabla_{v_{\rho(1)}} f_1 \ldots  \nabla_{v_{\rho(m)}} f_m\big)
\;,
\end{equation}
with $\Tt$ the usual trace on $C(\TM^d)$ normalized such that $\Tt(\one)=1$. As a consequence, all non-trivial pairings will exactly have $\ZM$ as their range whenever $v$ is made up of standard basis vectors of $\RM^d$ (see Appendix \ref{app:ktheory}). 
\begin{proposition}
\label{prop:Chern_torus}
Consider all tuples of unit vectors $v_I = (e_{i_1},\dots,e_{i_n})$, $i_1<i_2<\dots<i_m$ with subsets $I=\{i_1,\dots,i_m\}$ of elements in $\{1,\dots,d\}$. We say a class $\xi_{I}\in K_{i}(C(\TM^d))$ is dual to $\Ch_{v_I}$ if 
\begin{equation}
\langle [\xi_{I}]_i, \Ch_{v_J}\rangle \;=\;
	\delta_{I,J}
\end{equation}
for all subsets $I,J\subset \{1,\dots,d\}$  whose length has the same parity as $i$.\\
A dual element $\xi_I$ exists for each subset $I\subset\{1,..,d\}$ and the even respectively odd subsets provide a basis for $K_0(C(\TM^d))\simeq \ZM^{2^{d-1}}$ and $K_0(C(\TM^d))\simeq \ZM^{2^{d-1}}$ respectively. 
\end{proposition}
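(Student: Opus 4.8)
The plan is to reduce both the K-theory classes and the cocycles to products over the $d$ circle factors of $C(\TM^d)\simeq C(\TM)^{\otimes d}$. First I would recall that since $K_*(C(\TM))\simeq\ZM\oplus\ZM$ is free, the K\"unneth theorem (equivalently, $d$ iterations of the Pimsner--Voiculescu sequence) identifies $K_*(C(\TM^d))$ with the exterior algebra $\Lambda^*\ZM^d$, with $K_0$ the even part and $K_1$ the odd part, each free of rank $2^{d-1}$. A natural basis is indexed by subsets $I\subset\{1,\dots,d\}$: to $I$ one assigns the external product $\xi_I$ over the $d$ coordinates of the class of the coordinate unitary $[u]_1\in K_1(C(\TM))$ in each factor $c\in I$ and of the unit $[\one]_0$ in each factor $c\notin I$, so that $\xi_I$ has parity $|I|\bmod 2$.

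Next I would exploit that the Chern cocycle factors the same way: $\Ch_{v_J}$ is, up to a combinatorial constant absorbed by the antisymmetrising sum over $S_m$ and the normalisation of $\Tt$ and $\nabla$, the external (cup) product over the $d$ factors of the one-dimensional $1$-cocycle $\Ch_{e_c}$ for $c\in J$ and of the normalised trace $\Tt$ for $c\notin J$. Compatibility of the Connes pairing with external products in K-theory and cup products in cyclic cohomology then factorises the pairing completely:
$$
\langle \xi_I,\,\Ch_{v_J}\rangle\;=\;\prod_{c=1}^{d}\,\big\langle\, \xi_I^{(c)},\,\Ch_{v_J}^{(c)}\,\big\rangle,
$$
where $\xi_I^{(c)}$ is $[u]_1$ or $[\one]_0$ according to whether $c\in I$, and $\Ch_{v_J}^{(c)}$ is $\Ch_{e_c}$ or $\Tt$ according to whether $c\in J$.

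Finally I would evaluate the factors. For $c\in I\cap J$ the factor is the winding number $\langle[u]_1,\Ch_{e_c}\rangle$, which I compute from $\Tt(u^*\nabla_{e_c}u)$ using \eqref{eq:derivation} and which equals $1$ after the normalisation of the pairing fixed in Appendix~\ref{app:ktheory}; for $c\notin I\cup J$ the factor is $\langle[\one]_0,\Tt\rangle=\Tt(\one)=1$; and for $c$ lying in exactly one of $I,J$ the two sides have opposite parity, so the factor vanishes. Hence the product is $1$ precisely when $c\in I\Leftrightarrow c\in J$ for every $c$, that is $\langle\xi_I,\Ch_{v_J}\rangle=\delta_{I,J}$. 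This exhibits the $\xi_I$ as the desired dual classes and shows that the matrix of pairings between this basis and the Chern cocycles is the identity; since the subsets of each parity number exactly $2^{d-1}$, matching the rank of the corresponding K-group, the $\{\xi_I\}$ form bases of $K_0$ and $K_1$ and the $\{\Ch_{v_I}\}$ separate all classes. I expect the decisive point to be the normalisation: it does not suffice that the diagonal pairings be nonzero integers, one needs the pairing matrix to be unimodular over $\ZM$ so that genuine \emph{integral} dual classes exist, which is exactly what the explicit constants in the definition of $\Ch_v$ and the reduction to the unit winding number in the appendix secure.
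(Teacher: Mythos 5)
Your proposal is correct and follows essentially the same route as the paper's proof: the paper likewise factors $C(\TM^d)=C(\SM^1)^{\otimes d}$, invokes the K\"unneth formula to generate $K_i(C(\TM^d))$ by external products of the circle generators, and identifies $\Ch_{v_J}$ as a cup product of copies of the normalized trace and the circle Chern cocycle, with the pairing factorizing via Proposition~\ref{prop:cup_product}. The only difference is that you spell out the factor-by-factor evaluation (winding number $1$, $\Tt(\one)=1$, vanishing by parity mismatch) and the unimodularity point explicitly, which the paper leaves implicit in its terser appendix argument.
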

We recall the proof in Appendix~\ref{app:ktheory}.

Let us now consider the basic building block $C_0(\RM^n)\rtimes_{t \Lambda^T} \ZM^d \simeq C(\TM^d)\rtimes_{t\Lambda}\RM^n$ with affine action as given in Proposition~\ref{prop:affine_crossedproduct}. A basis of the K-groups $K_i(C_0(\RM^n)\rtimes_{t \Lambda^T} \ZM^d)\simeq K_{i+n}(C(\TM^d))$ is given by applying the Connes-Thom isomorphism to the generators $\xi_I$. It is not so easy to write down convenient expressions for those images, however, one can understand them in terms of their Chern numbers via duality Theorems. In the special case where the $\RM^n$-action is trivial there is a rather simple expression: 
\begin{theorem}
\label{th:duality1}
At $t=0$, one has $C(\TM^d)\rtimes_{0\cdot \Lambda} \RM^n\simeq C(\TM^d)\otimes C_0(\RM^n)$. We can define directional derivatives $\nabla_v$ in directions $v\in \RM^{d}\oplus \RM^n$. For a tuple $(v_1,\dots,v_r)$ of such directions define a cyclic $r$-cocyle on  $(C^1(\TM^d)\otimes C_c^1(\RM^n))^{\times (r+1)}$ by
\begin{equation}
\label{eq:suspension_cocycle}
\widehat{\Ch}_{v}(f_0,\dots,f_r) 
\;=\; 
\,
\sum_{\rho \in S_r} (-1)^\rho\, \hat{\Tt}\big(f_0 \nabla_{v_{\rho(1)}} f_1 \ldots  \nabla_{v_{\rho(r)}} f_r\big)
\;,
\end{equation}
with $\hat{\Tt}$ the trace induced by the Haar measure on $\TM^d \times \RM^n$. For directions $v_1,...,v_m$ on the torus and the unit directions $e_1,...,e_n$ of $\RM^n$ this coincides with the so-called cup product
$$\widehat{\Ch}_{v_1, \dots , v_m , e_1 ,\dots , e_n } \;=\; {\Ch}_{v_1, \dots , v_m} \# \Ch_{\RM^n}.$$
With respect to the Connes-Thom isomorphism $\partial_{0\cdot\Lambda}: K_i(C(\TM^d))\to K_{i+n}(C(\TM^d)\otimes C_0(\RM^n))$, one has the duality
$$\langle [\xi]_i, \Ch_{v_1, \dots , v_m}\rangle \;=\;  (-1)^{\frac{1}{2}n(2m+n-1)}\langle \partial_{0\cdot\Lambda}([\xi]_i), {\Ch}_{v_1, \dots , v_m} \# \Ch_{\RM^n}\rangle.$$
\end{theorem}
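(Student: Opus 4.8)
The plan is to reduce everything to the case of a single suspension ($n=1$) and then iterate while carefully tracking signs. Two ingredients factor compatibly: for the trivial action $\partial_{0\cdot\Lambda}$ decomposes as a composition of $n$ single-variable Connes--Thom maps (exactly as used in the proof of Proposition~\ref{prop:ktheory_contsection}), while the fundamental cocycle factors as an $n$-fold cup product $\Ch_{\RM^n} = \Ch_\RM\#\cdots\#\Ch_\RM$. So first I would record the purely computational identity $\widehat{\Ch}_{v_1,\dots,v_m,e_1,\dots,e_n} = \Ch_{v_1,\dots,v_m}\#\Ch_{\RM^n}$. For elementary tensors $f_j = g_j\otimes h_j$ with $g_j\in C^1(\TM^d)$ and $h_j\in C^1_c(\RM^n)$, the torus derivatives $\nabla_{v_i}$ act only on the $g_j$ and the Euclidean derivatives $\nabla_{e_i}$ only on the $h_j$, while $\hat{\Tt}=\Tt\otimes\int_{\RM^n}$ factorizes. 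The antisymmetrization over $S_{m+n}$ then reorganizes, by a shuffle argument, into the product of the antisymmetrizations over $S_m$ and $S_n$, which is precisely the standard cup product of these de~Rham--type cyclic cocycles.

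The core of the argument is the single-suspension duality. For the trivial $\RM$-action the Connes--Thom isomorphism $\partial_{0\cdot\lambda}\colon K_i(A)\to K_{i+1}(A\otimes C_0(\RM))$ coincides with the Bott (suspension) isomorphism, which is the defining compatibility of $\partial_{CT}$ with Bott periodicity. I would then invoke multiplicativity of the pairing between K-theory and cyclic cohomology under external products: writing the suspension as the external product with the Bott generator $\beta\in K_1(C_0(\RM))$, one has $\langle [\xi]_i\times\beta,\ \tau\#\Ch_\RM\rangle = (-1)^{\deg\tau}\langle[\xi]_i,\tau\rangle\,\langle\beta,\Ch_\RM\rangle$, where the Koszul sign $(-1)^{\deg\tau}$ arises from commuting the degree-$1$ factor $\Ch_\RM$ past the degree-$\deg\tau$ cocycle $\tau$. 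Since $\Ch_\RM$ is normalized so that $\langle\beta,\Ch_\RM\rangle=1$, this yields $\langle[\xi]_i,\tau\rangle = (-1)^{\deg\tau}\langle\partial_{0\cdot\lambda}([\xi]_i),\ \tau\#\Ch_\RM\rangle$ for any cyclic cocycle $\tau$ on $A=C(\TM^d)$. The compatibility of $\partial_{CT}$ with the cup product underlying this step is exactly the Elliott--Natsume--Nest type result \cite{ENN96}; alternatively, since the classes $\xi_I$ of Proposition~\ref{prop:Chern_torus} form a basis dual to the $\Ch_{v_I}$ and Connes--Thom carries this basis to a basis, the identity can be checked directly on generators.

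With the single-suspension formula in hand I would iterate. At the $k$-th step one suspends the degree-$(m+k-1)$ cocycle $\Ch_{v_1,\dots,v_m}\#\Ch_\RM^{\#(k-1)}$, contributing a sign $(-1)^{m+k-1}$. Telescoping over $k=1,\dots,n$ gives the cumulative sign $(-1)^{\sum_{k=1}^n(m+k-1)} = (-1)^{mn+\frac12 n(n-1)} = (-1)^{\frac12 n(2m+n-1)}$, matching the claimed exponent, while the composition of the single maps reproduces $\partial_{0\cdot\Lambda}$ and the iterated cup product reproduces $\Ch_{v_1,\dots,v_m}\#\Ch_{\RM^n}$. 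This closes the argument.

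I expect the main obstacle to be pinning down the sign in the single-suspension duality with complete rigor. The external-product formula for the K-theory--cyclic-cohomology pairing carries Koszul signs that depend on conventions for the cup product, the orientation of $\RM$, and the normalization of $\Ch_\RM$ relative to the Bott generator; obtaining $(-1)^{\deg\tau}$ rather than some other sign requires fixing all of these consistently and matching them against $\langle\beta,\Ch_\RM\rangle=1$. The safest route is to verify the $n=1$ sign against a single explicit generator $\xi_{\{i\}}$, where the pairings are integers computable by hand, and then let the iteration propagate it.
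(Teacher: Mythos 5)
Your proof follows the same skeleton as the paper's: establish the single-suspension ($n=1$) duality for the trivial action, iterate it $n$ times, and telescope the signs. Your computation $\sum_{k=1}^{n}(m+k-1)=nm+\tfrac12 n(n-1)=\tfrac12 n(2m+n-1)$ is literally the paper's simplification of $(-1)^{\sum_{k=m}^{n+m-1}k}$, and your shuffle argument for $\widehat{\Ch}_{v_1,\dots,v_m,e_1,\dots,e_n}=\Ch_{v_1,\dots,v_m}\#\Ch_{\RM^n}$ supplies a detail the paper treats as definitional. The one real difference is the base case: the paper does not derive it but cites \cite[Theorem 4.4.6-7]{SSt}, where the identity $\langle [\xi]_i,\Ch_{v_1,\dots,v_m}\rangle=(-1)^m\langle \partial_{0\cdot\lambda}([\xi]_i),\Ch_{v_1,\dots,v_m}\#\Ch_{\RM}\rangle$ is proved with all conventions (orientation of $\RM$, normalization of the pairing, normalization of the Connes-Thom map) fixed, whereas you attempt a self-contained derivation.

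The step in that derivation which fails as written is the signed product formula $\langle [\xi]_i\times\beta,\ \tau\#\Ch_\RM\rangle=(-1)^{\deg\tau}\langle[\xi]_i,\tau\rangle\,\langle\beta,\Ch_\RM\rangle$. This is in direct contradiction with the paper's own Proposition~\ref{prop:cup_product}, which asserts multiplicativity of the K-theory/cyclic-cohomology pairing with \emph{no} sign; if both formulas held, then together with $\langle\beta,\Ch_\RM\rangle=1$ they would force $\langle[\xi]_i,\tau\rangle=0$ for every odd-degree cocycle $\tau$, which is absurd. The degree-dependent sign is real, but it does not live in the pairing: it lives in the comparison between the Connes-Thom isomorphism (normalized through naturality and suspension compatibility as in \cite{Connes81}) and the external product with a fixed Bott class. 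Identifying the two on odd K-classes involves a flip of suspension coordinates, and that is where $(-1)^i$ --- which equals $(-1)^{\deg\tau}$ on any nonvanishing pairing --- enters; this bookkeeping is precisely what the cited theorem in \cite{SSt} carries out. Your own fallback, verifying the $n=1$ sign on an explicit generator, is the correct patch: since the classes $\xi_I$ of Proposition~\ref{prop:Chern_torus} form a basis and both sides of the duality are homomorphisms in $[\xi]_i$, a generator check does close the argument. But then the ``Koszul sign in the pairing'' justification should be deleted, because it is the generator computation, not that formula, that carries the proof.
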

\begin{proof}
   In the case $n=1$ it is known that 
 $$ \langle [\xi]_i, \Ch_{v_1, \dots , v_m}\rangle \;= \;(-1)^m\langle [\xi]_i, \Ch_{v_1, \dots , v_m}\# \Ch_{\RM}\rangle$$
 with the Connes-Thom isomorphisms for trivial $\RM$-action, see e.g. \cite[Theorem 4.4.6-7]{SSt}. The general version of that duality theorem can be iterated to yield the case $n>1$ with
$$\langle [\xi]_i, \Ch_{v_1, \dots , v_m}\rangle \;=\;  (-1)^{\sum_{k=m}^{n+m-1}k}\langle \partial_{0\cdot\Lambda}([\xi]_i), {\Ch}_{v_1, \dots , v_m} \# \Ch_{\RM^n}\rangle.$$
The sign factor can be simplified as given.  
\end{proof}

For general $\Lambda$, we use a different duality:
\begin{theorem}
\label{th:duality2}
For $v_1,\dots,v_m\in \RM^d$, define a Chern cocycle on $C(\TM^d)\rtimes_{\Lambda} \RM^n$ by
\begin{equation}
\label{eq:lambda_cocycle}
{\Ch}^{\Lambda}_{v}(f_0,\dots,f_m) 
\;=\; 
\,
\sum_{\rho \in S_m} (-1)^\rho\, \hat{\Tt}_\Lambda\big(f_0 \nabla_{v_{\rho(1)}} f_1 \ldots  \nabla_{v_{\rho(m)}} f_m\big)
\;,
\end{equation}
where the dual trace $\hat{\Tt}_\Lambda$ is given by a natural trace on $C_0(\RM^n)\rtimes_{\Lambda^T} \ZM^d \simeq C(\TM^d)\rtimes_{\Lambda} \RM^n$, namely the composition of the conditional expectation $\sum_{q\in \ZM^d} f_q u^q \mapsto f_0 \in C_0(\RM^n)$ with the Lebesgue integral. Then one has a duality of the Chern cocycles in the form
$$\langle \partial_\Lambda [\xi]_i, {\Ch}^\Lambda_{v_1, \dots , v_m}\rangle \;=\; (-1)^{\frac{1}{2}n(2m+n-1)} \langle [\xi]_i, {\Ch}_{v_1 , \dots , v_m , \lambda_1, \dots , \lambda_n}\rangle.$$
\end{theorem}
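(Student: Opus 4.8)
The plan is to factor the Connes--Thom isomorphism into $n$ one-parameter steps and to reduce the assertion to a single-variable duality which is then iterated. First I would use that the $\RM^n$-action $\hat\alpha$ on $\TM^d$ is the commuting product of the translation flows generated by the torus derivations $\nabla_{\lambda_1},\dots,\nabla_{\lambda_n}$, so that the crossed product decomposes as an iterated crossed product
$$C(\TM^d)\rtimes_\Lambda \RM^n \;\simeq\; \big(\cdots(C(\TM^d)\rtimes_{\lambda_n}\RM)\cdots\big)\rtimes_{\lambda_1}\RM,$$
with $\lambda_n$ adjoined innermost, and correspondingly $\partial_\Lambda = \partial_{\lambda_1}\circ\cdots\circ\partial_{\lambda_n}$. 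The dual trace $\hat\Tt_\Lambda$ is the composition of the dual traces of these steps, and every torus derivation $\nabla_v$ ($v\in\RM^d$) extends to each intermediate algebra by annihilating the adjoined unitaries; this is consistent precisely because all torus derivations commute, $[\nabla_v,\nabla_{\lambda_j}]=0$.

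\medskip
\noindent The heart of the argument is the following \emph{single-step duality}. Let $\Bb$ be one of the intermediate algebras, carrying a densely defined trace $\Tt_\Bb$ and commuting $*$-derivations, and let an $\RM$-action be generated by $\nabla_\lambda$ for some $\lambda\in\RM^d$ commuting with the $\nabla_{w_j}$. Writing $\hat\Tt$ for the dual trace on $\Bb\rtimes_\lambda\RM$ and $\partial_\lambda$ for the Connes--Thom map, I would prove
$$\big\langle [\xi]_i,\; \Ch^{\Tt_\Bb}_{w_1,\dots,w_{m'},\lambda}\big\rangle \;=\; (-1)^{m'}\big\langle\partial_\lambda[\xi]_i,\; \Ch^{\hat\Tt}_{w_1,\dots,w_{m'}}\big\rangle,$$
where $\lambda$ is adjoined as the last direction on the left and the $\nabla_{w_j}$ on the right are the extensions to $\Bb\rtimes_\lambda\RM$. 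The mechanism is that on the crossed product the generator becomes inner, $\nabla_\lambda=\imath[p,\,\cdot\,]$ with $p$ the self-adjoint generator of the adjoined unitaries $U_x=e^{\imath x p}$, while $\hat\Tt$ integrates over the corresponding momentum variable. Inserting an explicit representative of $\partial_\lambda[\xi]_i$ (a projection in the even and a unitary in the odd case, built from $\xi$ and $p$ on a smooth dense subalgebra) into $\Ch^{\hat\Tt}_{w_1,\dots,w_{m'}}$ and integrating by parts in the momentum variable converts the commutator $[p,\,\cdot\,]$ into the additional derivation $\nabla_\lambda$, producing exactly $\Ch^{\Tt_\Bb}_{w_1,\dots,w_{m'},\lambda}$ together with the sign $(-1)^{m'}$. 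This is the analogue for a nontrivial flow of the trivial-action duality underlying Theorem~\ref{th:duality1}, the suspension direction being replaced by the generator direction $\lambda$, and it can be derived along the same lines as \cite[Theorem 4.4.6-7]{SSt}.

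\medskip
\noindent With the single step available, I would iterate it, starting from the right-hand side $\langle[\xi]_i,\Ch_{v_1,\dots,v_m,\lambda_1,\dots,\lambda_n}\rangle$ on $C(\TM^d)$ and removing the generators in the order $\lambda_n,\lambda_{n-1},\dots,\lambda_1$. Since each $\lambda_j\in\RM^d$ is a torus direction, at every stage the remaining cocycle still uses only torus directions, so the single-step duality applies verbatim; removing $\lambda_{n-k+1}$ at the $k$-th step acts on a base cocycle of length $m'=m+n-k$ and contributes a sign $(-1)^{m+n-k}$. After $n$ steps one reaches $\langle\partial_\Lambda[\xi]_i,\Ch^{\hat\Tt_\Lambda}_{v_1,\dots,v_m}\rangle$ on $C(\TM^d)\rtimes_\Lambda\RM^n$, with the directions $\lambda_1,\dots,\lambda_n$ kept in their original order so no reordering sign appears, and the accumulated prefactor is
$$\prod_{k=0}^{n-1}(-1)^{m+k}\;=\;(-1)^{nm+\frac{1}{2}n(n-1)}\;=\;(-1)^{\frac{1}{2}n(2m+n-1)},$$
exactly the asserted sign.

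\medskip
\noindent The main obstacle will be the rigorous proof of the single-step duality for a nontrivial flow: constructing the explicit Connes--Thom representative on a suitable smooth, sufficiently summable dense subalgebra, controlling the domains of the densely defined dual traces so that all cocycles are finite, and carrying out the integration by parts that turns $\imath[p,\,\cdot\,]$ into $\nabla_\lambda$ while fixing the precise sign. Once this one-dimensional statement is secured, the reduction to it, the verification that the iterated dual trace equals $\hat\Tt_\Lambda$, and the sign bookkeeping are routine.
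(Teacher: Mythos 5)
Your proposal is correct and takes essentially the same approach as the paper: the paper likewise factors $\partial_\Lambda$ into $n$ commuting one-dimensional Connes--Thom isomorphisms and iterates a single-variable duality, which it obtains simply by citing \cite[Theorem 4.5.3]{SSt} rather than re-deriving it as you sketch (that known result covers the nontrivial-flow step you identify as the main obstacle). Your sign bookkeeping $\prod_{k=0}^{n-1}(-1)^{m+k}=(-1)^{\frac{1}{2}n(2m+n-1)}$ agrees exactly with the paper's.
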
 
\begin{proof}
   This follows from \cite[Theorem 4.5.3]{SSt}, which applies to crossed products by $\RM$ and yields immediately the case $m=1$, which is written as
$$\langle \partial_\Lambda [\xi]_i, {\Ch}^\Lambda_{v_1, \dots , v_m}\rangle = (-1)^m\langle [\xi]_i, {\Ch}_{v_1 , \dots , v_m , \lambda_1}\rangle$$
but can be iterated since the Connes-Thom isomorphism decomposes as an iteration of $n$ one-dimensional Connes-Thom isomorphisms for the $n$ commuting affine translation actions of $\RM$ on $\TM^d$ in the directions $\lambda_1$ to $\lambda_n$. 
\end{proof} 
%We can use this to recover the Chern numbers of the quantized part of a section from those of its symbol. Let us first prepare some notations. One can densely define on any $C^*$-algebra with an $\RM^N$-action $\Theta$ and invariant trace $\Tt$ Chern cocycles $\Ch_{\Tt,\Theta^I}$ for subsets $I\subset \{1,\dots,d\}$. Let $\Theta$ be the translation action on the torus and $\tau=\int$ the integral on the torus. For $t=0$ one defines the suspended Chern cocycles
%$$\SM^n\Ch_{\tau, \Theta^I} = \Ch_{\SM^n \tau, \Theta^I \times \beta}$$
%on $C(\TM^d)\otimes C_0(\RM^n)$ where $\beta$ is the $\RM^n$-action given by translation on $C_0(\RM^n)$ and $\SM^n\tau$ the integral by the Haar integral on $\TM^d\times \RM^n$ normalized such that $\TM^d\times [0,1]^n$ has measure $1$.

%For $t>0$ one has the dual cocycles
%$\Ch_{\hat{\tau}^{(t)}, \Theta^I}$
%where $\hat{\tau}^{(t)}$ is the dual trace on $\Aa_t$ induced by the normalized trace on $C(\TM^d)$ via the crossed product. It is simultaneously the trace induced by the integral in $C_0(\RM^n)$ if one switches the order of the crossed products.

We can now state our main result on the Chern numbers of adiabatic quantizations:
\begin{theorem}
\label{theorem:adiabatic_chern_numbers}
Let $(\xi_t)_{t\in T}$ be a continuous section of projections/unitaries defining classes in $K_i(C_0(\RM^n)\rtimes_{t\Lambda^T} \ZM^d)$. For $t>0$, $\Lambda=\begin{pmatrix}
\lambda_1 & \dots & \lambda_n
\end{pmatrix}$ and $v_1,\dots,v_m \in \RM^d$ one has 
\begin{equation}
\label{eq:adiabatic_chern}
\langle [\xi_0]_i, \Ch_{v_1,\dots, v_m , \lambda_1 , \dots , \lambda_n}  \# \Ch_{\RM^n}\rangle = t^{-n}\langle [\xi_t]_i, \Ch^{t\Lambda}_{v_1,\dots, v_m}\rangle
\end{equation}
\end{theorem}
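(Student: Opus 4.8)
The plan is to route both sides of~\eqref{eq:adiabatic_chern} through the common Connes--Thom preimage on the torus furnished by Proposition~\ref{prop:ktheory_contsection}. Since $(\xi_t)_{t\in T}$ is a continuous section of projections/unitaries, that proposition produces a single class $[\zeta]_{i-n}:=(\partial_{t\Lambda})^{-1}([\xi_t]_i)\in K_{i-n}(C(\TM^d))$ which is independent of $t$; in particular $[\xi_0]_i=\partial_{0\cdot\Lambda}([\zeta]_{i-n})$ and $[\xi_t]_i=\partial_{t\Lambda}([\zeta]_{i-n})$ for every $t>0$. Thus both sides of~\eqref{eq:adiabatic_chern} are pairings of the \emph{same} torus class $[\zeta]_{i-n}$, once (the $t=0$ side) against a cup-product cocycle on $\TM^d\times \RM^n$ and once (the $t>0$ side) against the twisted cocycle $\Ch^{t\Lambda}$. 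The two duality theorems of this section are precisely the tools that convert these back into pairings of $[\zeta]_{i-n}$ with ordinary Chern cocycles on $\TM^d$, and the strategy is to equate the outputs.

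First I would handle the right-hand side. Applying Theorem~\ref{th:duality2} with $\Lambda$ replaced by $t\Lambda$, whose columns are $t\lambda_1,\dots,t\lambda_n$, yields
\[
\langle [\xi_t]_i,\Ch^{t\Lambda}_{v_1,\dots,v_m}\rangle \;=\; (-1)^{\frac12 n(2m+n-1)}\,\langle [\zeta]_{i-n},\Ch_{v_1,\dots,v_m,t\lambda_1,\dots,t\lambda_n}\rangle .
\]
Because each derivation is linear in its direction, $\nabla_{t\lambda_j}=t\,\nabla_{\lambda_j}$, and the defining antisymmetric sum of the Chern cocycle is multilinear in the $n$ inserted directions, one pulls out a factor $t^n$, i.e. $\Ch_{v_1,\dots,v_m,t\lambda_1,\dots,t\lambda_n}=t^n\,\Ch_{v_1,\dots,v_m,\lambda_1,\dots,\lambda_n}$. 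Hence
\[
t^{-n}\langle [\xi_t]_i,\Ch^{t\Lambda}_{v_1,\dots,v_m}\rangle \;=\; (-1)^{\frac12 n(2m+n-1)}\,\langle [\zeta]_{i-n},\Ch_{v_1,\dots,v_m,\lambda_1,\dots,\lambda_n}\rangle ,
\]
which is manifestly $t$-independent, matching the expectation that the left-hand side of~\eqref{eq:adiabatic_chern} carries no $t$.

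Next I would treat the left-hand side with Theorem~\ref{th:duality1}, applied at $t=0$ with the \emph{full} tuple of $m+n$ torus directions $(v_1,\dots,v_m,\lambda_1,\dots,\lambda_n)$, so that its cup product with $\Ch_{\RM^n}$ reproduces exactly the cocycle in~\eqref{eq:adiabatic_chern}:
\[
\langle [\zeta]_{i-n},\Ch_{v_1,\dots,v_m,\lambda_1,\dots,\lambda_n}\rangle \;=\; (-1)^{\frac12 n(2(m+n)+n-1)}\,\langle [\xi_0]_i,\Ch_{v_1,\dots,v_m,\lambda_1,\dots,\lambda_n}\#\Ch_{\RM^n}\rangle .
\]
Substituting this into the previous display identifies $t^{-n}\langle [\xi_t]_i,\Ch^{t\Lambda}_{v_1,\dots,v_m}\rangle$ with a sign times $\langle [\xi_0]_i,\Ch_{v_1,\dots,v_m,\lambda_1,\dots,\lambda_n}\#\Ch_{\RM^n}\rangle$, which is the asserted identity up to the accumulated sign.

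The main obstacle I anticipate is the orientation bookkeeping needed to trivialize that accumulated sign. The two contributions are $(-1)^{\frac12 n(2m+n-1)}$ from Theorem~\ref{th:duality2} and $(-1)^{\frac12 n(2(m+n)+n-1)}$ from Theorem~\ref{th:duality1}, whose product is $(-1)^{n(2m+2n-1)}=(-1)^{n}$. To land on the sign-free form~\eqref{eq:adiabatic_chern} one must therefore fix the orientation of the $\RM^n$-factor consistently across the cup product $\#\,\Ch_{\RM^n}$, the iterated one-dimensional Connes--Thom generators of Proposition~\ref{prop:affine_crossedproduct}, and the ordering of $\lambda_1,\dots,\lambda_n$, so that this $(-1)^{n}$ is absorbed into the normalization of $\Ch_{\RM^n}$. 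Verifying that the $n$-fold iteration underlying both Theorem~\ref{th:duality1} and Theorem~\ref{th:duality2} is carried out with the same orientation convention is the only delicate point; granting the common preimage $[\zeta]_{i-n}$ of Proposition~\ref{prop:ktheory_contsection}, everything else reduces to the multilinearity/homogeneity computation already performed.
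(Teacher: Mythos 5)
Your proposal follows essentially the same route as the paper's own proof: both pass to the common Connes--Thom preimage $(\partial_{t\Lambda})^{-1}([\xi_t]_i)\in K_{i-n}(C(\TM^d))$ via Proposition~\ref{prop:ktheory_contsection}, apply Theorem~\ref{th:duality2} (with $t\Lambda$ and the homogeneity $\Ch_{v_1,\dots,v_m,t\lambda_1,\dots,t\lambda_n}=t^n\Ch_{v_1,\dots,v_m,\lambda_1,\dots,\lambda_n}$) to the $t>0$ side, and apply Theorem~\ref{th:duality1} with the full tuple of $m+n$ torus directions to the $t=0$ side. In fact your explicit bookkeeping of the two prefactors, which leaves the residual $(-1)^{n}$ to be absorbed into consistent orientation conventions for $\Ch_{\RM^n}$ and the iterated one-dimensional Connes--Thom maps, is more careful than the paper's proof, which simply asserts both sides agree ``up to the sign'' with the common quantity and implicitly takes that cancellation for granted.
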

\begin{proof}
   By Theorem~\ref{th:duality2} the right hand side of \eqref{eq:adiabatic_chern} is up to the sign factor equal to 
$$
t^{-n}\langle (\partial_{t\Lambda})^{-1})([\xi_t]_i), \Ch_{v_1,\dots, v_m , t\lambda_1 , \dots , t\lambda_n}\rangle\;=\;
\langle (\partial_{t\Lambda})^{-1})([\xi_t]_i), \Ch_{v_1,\dots, v_m , \lambda_1 , \dots , \lambda_n}\rangle$$
and by, Theorem~\ref{th:duality1}, the left hand side of \eqref{eq:adiabatic_chern} is up to the sign  equal to
$$\langle (\partial_{0\cdot\Lambda})^{-1}([\xi_0]_i), \Ch_{v_1,\dots, v_m , \lambda_1 , \dots , \lambda_n}\rangle.$$
For a continuous section, one has by Proposition~\ref{prop:ktheory_contsection}
$$\langle (\partial_{0\cdot\Lambda})^{-1}([\xi_0]_i), \Ch_{v_1,\dots, v_m , \lambda_1 , \dots , \lambda_n}\rangle\;=\;\langle (\partial_{t\cdot\Lambda})^{-1}([\xi_t]_i), \Ch_{v_1,\dots, v_m , \lambda_1 , \dots , \lambda_n}\rangle,$$ 
and the statement follows.\end{proof}

If we want to compute the pairing of one of the Chern cocycles ${\Ch}^\Lambda_{v_1,\dots, v_m}$ over the non-commutative algebra $C_0(\RM^n)\rtimes_{\Lambda^T} \ZM^d$ (the case $t=1$), then we can do so by extending a given projection/unitary to a continuous section of the field, evaluate it at $0$ and then compute an $(m+2n)$-cocycle over the commutative algebra $C_0(\RM^n)\otimes C(\TM^d)$, which is usually much more tractable. 

The keen reader will observe that the cocycles $\Ch^\Lambda_{v_1,\dots, v_m}$ are not enough to tell apart all elements of $K_i(C_0(\RM^n)\rtimes \ZM^d)$; since they are dual to $\Ch_{v_1,\dots, v_m,\lambda_1,...,\lambda_n}$ they can result in a non-trivial pairing only when $v_1,\dots,v_m$ are linearly independent both from each other and $\lambda_1,\dots,\lambda_n$. However, the non-trivial cocycles are precisely enough to label the part of the K-group that survives the localization to a single orbit as in Proposition~\ref{prop:chern_rational}. Theorem~\ref{theorem:adiabatic_chern_numbers} together with Proposition~\ref{prop:quantization} therefore almost finishes the proof of Theorem~\ref{th:main}. It remains to relate the Chern numbers $\Ch^\Lambda$, which are defined over the crossed product $C_0(\RM^n)\rtimes_{\Lambda^T} \ZM^d$, to the cocycles $\Ch^{\omega}$ for a localization to a single orbit $C_0(\Oo(\omega))\rtimes \ZM^d$.

\medskip

We shall use the notation $0_n$ for the zero element of $\RM^n.$ For an affine and rational action in the sense of Definition~\ref{def:flat_cw}, the orbit $\mathcal{O}=\mathscr{O}(0_n)$ of $0_n$ under the $\ZM^d$-action is a lattice in $\RM^n$, i.e. a discrete subgroup with finite co-volume.  As a fundamental domain for the action, one can use any half-open parallelepiped $P\subset \RM^n$ such that $\RM^n = P + \mathcal{O}$ with a unique decomposition for each element.  The orbit $\Oo(\omega)=\omega +\mathcal{O}$ is a translation of $\mathcal{O}$ is the same lattice shifted by $\omega\in \RM^n$. With this information in mind, we can now relate the Chern numbers:
\begin{proposition}
\label{prop:trace_localization}
If the action on $\RM^n$ is affine and rational, then for each $a\in C_0(\RM^n)\rtimes_{\Lambda^T} \ZM^d$ which is trace-class w.r.t. $\hat{\Tt}_\Lambda$ one has
$$\hat{\Tt}_\Lambda(a) \;=\; \int_{\omega \in P} \Tt_\omega(\pi_\omega(a)) \mathrm{d}\omega$$
 with $\Tt_\omega$ the trace described in Proposition \ref{prop:chern_rational}. Furthermore, for each $\omega\in \RM^n$ and $v_1,...,v_m \in \Gamma_\omega$ one has
\begin{equation}
\label{eq:fiberwise_chern}
\langle [\xi]_i, \Ch^\Lambda_{v_1,..., v_m}\rangle \;=\; {\mathrm{Vol}(P)} \,\langle [\pi_\omega(\xi)]_i, \Ch^\omega_{v_1,..., v_m}\rangle
\end{equation}
for any $[\xi]_i \in K_i(C_0(\RM^n)\rtimes_{\Lambda^T} \ZM^d)$. Here, ${\rm Vol}(P)$ is the Lebesgue measure of $P$.
\end{proposition}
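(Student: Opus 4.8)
The plan is to establish the two claims in sequence, deriving the Chern-cocycle identity \eqref{eq:fiberwise_chern} as a consequence of the trace identity applied to the integrand $f_0 \nabla_{v_{\rho(1)}} f_1 \cdots \nabla_{v_{\rho(m)}} f_m$. First I would unwind both traces on the dense subalgebra of finite Fourier series $a = \sum_{q\in\ZM^d} f_q u^q$ with $f_q\in C_0(\RM^n)$. By definition, $\hat{\Tt}_\Lambda$ is the Lebesgue integral of the conditional-expectation coefficient $f_0$, i.e. $\hat{\Tt}_\Lambda(a) = \int_{\RM^n} f_0(y)\,\difd y$. On the other side, $\Tt_\omega$ is the composition of the counting measure on the orbit $\Oo(\omega) = \omega + \Oo$ with the conditional expectation onto $C_0(\Oo(\omega))$; using the representation $\pi_\omega$ from \eqref{Eq:PiOmega}, the diagonal coefficient picks out $f_0$ evaluated along the orbit, so $\Tt_\omega(\pi_\omega(a)) = \sum_{\zeta\in\Oo(\omega)} f_0(\zeta) = \sum_{\mu\in\Oo} f_0(\omega+\mu)$. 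The trace identity then reduces to the tiling statement
\[
\int_{\RM^n} f_0(y)\,\difd y \;=\; \int_{P} \sum_{\mu\in\Oo} f_0(\omega+\mu)\,\difd\omega,
\]
which is exactly the decomposition $\RM^n = P + \Oo$ into a fundamental domain plus the lattice $\Oo$, valid since the action is affine and rational so that $\Oo$ is a genuine lattice of full rank with finite covolume $\mathrm{Vol}(P)$. This is the Weil/unfolding formula and requires only Fubini together with the unique decomposition $y = \omega + \mu$.

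For the second claim I would first observe that the derivations behave compatibly under $\pi_\omega$: since $\nabla_v(\sum_x f_x u^x) = -2\pi\imath\sum_x (v\cdot x) f_x u^x$ is defined purely in terms of the Fourier coefficients, and $\pi_\omega$ merely restricts the coefficient functions to the orbit while preserving the $u^x$-grading, one has $\pi_\omega(\nabla_v a) = \nabla_v^\omega(\pi_\omega(a))$ where $\nabla_v^\omega$ is the corresponding derivation on $C_0(\Oo(\omega))\rtimes\ZM^d$. The crucial restriction in the hypothesis is that $v_1,\dots,v_m \in \Gamma_\omega$, the isotropy subgroup; this guarantees that the cocycle $\Ch^\Lambda_{v_1,\dots,v_m}$ genuinely factors through the single orbit and that $\Ch^\omega_{v_1,\dots,v_m}$ is the right fiberwise object to compare with. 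Applying the trace identity from the first part to the single product $a = f_0\,\nabla_{v_{\rho(1)}}f_1\cdots\nabla_{v_{\rho(m)}}f_m$ and summing over $\rho\in S_m$ with signs then yields
\[
\Ch^\Lambda_{v_1,\dots,v_m}(f_0,\dots,f_m) \;=\; \int_P \Ch^\omega_{v_1,\dots,v_m}\big(\pi_\omega(f_0),\dots,\pi_\omega(f_m)\big)\,\difd\omega.
\]

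The final step is to pass from this cochain-level equality to the K-theoretic pairing \eqref{eq:fiberwise_chern}. Here I would invoke that the pairing of $\Ch^\omega_{v_1,\dots,v_m}$ with a fixed K-class $[\pi_\omega(\xi)]_i$ is \emph{independent of $\omega$}: all the orbits $\Oo(\omega)$ for $\omega\in\RM^n$ are equivariantly homeomorphic as $\ZM^d$-spaces (they are translates of the same lattice), so the representations $\pi_\omega$ are conjugate and the numerical invariant $\langle[\pi_\omega(\xi)]_i,\Ch^\omega_{v_1,\dots,v_m}\rangle$ does not vary with $\omega$; this is exactly the averaging remark made after the statement of Theorem~\ref{th:main}. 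Consequently the $\omega$-integral over $P$ of a constant produces the volume factor $\mathrm{Vol}(P)$, giving \eqref{eq:fiberwise_chern}. The main obstacle I anticipate is a technical one: the cochain identity holds only on the dense smooth subalgebra, and the pairing with K-theory is a priori defined via the full continuous algebra, so one must justify that the class $[\xi]_i$ can be represented by a smooth element on which the uniformly-bounded cocycles may be evaluated, and that the $\omega$-dependence of $\pi_\omega(\xi)$ is continuous enough for the integral to be taken inside the pairing. This is handled by the standard density and continuity arguments for Chern-cocycle pairings (as in \cite{SSt, PSbook}), but it is where the care lies.
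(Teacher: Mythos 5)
Your route is the same as the paper's: unfold $\hat{\Tt}_\Lambda$ over the fundamental domain $P$ using the tiling $\RM^n = P + \mathcal{O}$, push this through the derivations (which indeed commute with $\pi_\omega$, since both are defined on Fourier coefficients) to get the cochain-level identity, and then pull the $\omega$-integral out of the K-theory pairing by arguing the integrand is constant in $\omega$. The first two steps are correct and essentially identical to the paper's proof, and your closing remark about representing classes in the smooth subalgebra is a legitimate point of care that the paper leaves implicit.

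The gap is in the constancy step. The justification you give --- ``the representations $\pi_\omega$ are conjugate'' --- is false whenever $\omega-\omega'\notin\mathcal{O}$: the representations $\pi_\omega$ and $\pi_{\omega'}$ have \emph{different kernels} (the ideals of Fourier series whose coefficients vanish on $\Oo(\omega)$ respectively $\Oo(\omega')$), so they cannot be unitarily equivalent. What is true is that the equivariant identification $\Oo(\omega)\to\mathcal{O}$, $\zeta\mapsto\zeta-\omega$, induces an isomorphism of the image algebras preserving the trace and the derivations; but under this isomorphism $\pi_\omega(\xi)$ is carried to $\pi_{0_n}(\alpha_x(\xi))$ for a suitable $x\in\RM^d$ with $\pi_\omega=\pi_{0_n}\circ\alpha_x$, which is a \emph{different element} from $\pi_{0_n}(\xi)$. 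To conclude that the pairings agree one still needs $[\pi_{0_n}(\alpha_x(\xi))]_i=[\pi_{0_n}(\xi)]_i$ in $K_i(C_0(\mathcal{O})\rtimes\ZM^d)$, and this is exactly where the paper invests its work: it chooses a continuous $g\colon P\to\RM^d$ with $\pi_\omega=\pi_{0_n}\circ\alpha_{g(\omega)}$ and uses that the $\RM^d$-action has norm-continuous orbits, so that $(\pi_{0_n}(\alpha_{g(\omega)}(\xi)))_{\omega\in P}$ is a norm-continuous family of projections/unitaries in one fixed algebra, whence all the classes coincide by homotopy invariance. Your alternative appeal to ``the averaging remark made after the statement of Theorem~\ref{th:main}'' is circular, since that remark is a gloss on the present proposition rather than an independent input. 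The conclusion is correct and the repair is short, but as written the $\omega$-independence of the fiberwise pairing is asserted rather than proven.
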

\begin{proof}
     The trace on $C_0(\RM^n)\rtimes_{\Lambda^T} \ZM^d$ is given by the conditional expectation $a=\sum_{q\in \ZM^d} a_q u^q \mapsto a_0 \in C_0(\RM^n)$ composed with the Lebesgue integral. On the other hand,
\begin{align*}
\hat{\Tt}_\Lambda(a) &\;=\; \int_{\RM^n}\langle \delta_0, \pi_{\omega}(a)\delta_0\rangle \mathrm{d}\omega\;=\;\int_{P}\sum_{x\in \mathcal{O}}\langle \delta_0, \pi_{x\triangleright \omega}(a)\delta_0\rangle \mathrm{d}\omega\\
&\;=\;\int_{P}\Tt_\omega(\pi_{\omega}(a)) \mathrm{d}\omega
\end{align*}
where recall that $\Tt_\omega$ is the trace on $C_0(\Oo(\omega))\rtimes \ZM^d\simeq C_0(\mathcal{O})\rtimes \ZM^d$ induced by the counting measure on $\mathcal{O}$. Since $\Omega=\RM^n$ is the orbit of $0_n\in \RM^n$ under the $\RM^d$-action there is for any $\omega\in \RM^n$ some preimage $x\in \RM^d$ under $\Lambda^T$ such that $\pi_\omega = \pi_{0_n} \circ \alpha_x$ and $\alpha_x$ is an automorphism. For $d>n$ the preimage is non-unique and the algebras $C_0(\Oo(\omega))\rtimes \ZM^d$ for different $\omega\in \RM^n$ are therefore isomorphic to each other, but not canonically so. Nevertheless, one can choose a continuous function $g: P\to \RM^d$ such that $\pi_\omega=\pi_{0_n}\circ \alpha_{g(\omega)}$ for all $\omega \in P$ in the fundamental domain and in this way identify any family $(\pi_\omega(a))_{\omega\in P}$ with a norm-continuous family of operators in $\pi_{0_n}(C_0(\mathcal{O})\rtimes \ZM^d)$ (since $\alpha$ has norm-continuous orbits). This identification preserves traces and derivatives, therefore also the pairings with the Chern cocycles. In conclusion, the right-hand side of \eqref{eq:fiberwise_chern} does not depend on $\omega\in P$ since all $\pi_\omega(x)$ correspond to the same class in $K_i(C_0(\mathcal{O})\rtimes \ZM^d)$ by homotopy. Since the identification of $P$ with $\RM^n/\mathcal{O}$ was arbitrary, this must be actually true for all $\omega\in \RM^n$.\\
The identity for the trace implies 
$$\langle [x]_i, \Ch^\Lambda_{v_1,..., v_m}\rangle\; =\; \int_P \langle [\pi_\omega(x)]_i, \Ch^\omega_{v_1,..., v_m}\rangle \difd y \;=\; {\mathrm{Vol}(P)}\, \langle [\pi_\omega(x)]_i, \Ch^\omega_{v_1,..., v_m}\rangle$$ for any $\omega \in \RM^n$ since it is the integral over a constant function. 
\end{proof}

The trace $\Tt_\omega$ has the natural normalization such that it corresponds precisely to the trace on $C_0(\Oo(\omega))\rtimes \ZM^d\simeq \KM(\ell^2(\ZM^d/\Gamma_\omega))\rtimes \Gamma_\omega$ induced by composing the conditional expectation $\KM(\ell^2(\ZM^d/\Gamma_\omega))\rtimes \Gamma_\omega\to \KM(\ell^2(\ZM^d/\Gamma_\omega))$ with the usual Hilbert space trace. This choice makes it easy to find Chern cocycles for which the range of the pairing is integer-valued:
\begin{proposition}
In the setting of Proposition~\ref{prop:chern_rational} choose a basis $\ZM^d=\langle b_1,...,b_d\rangle$ such that $\Gamma_\omega = \langle b_1,...,b_{d-n}\rangle$ and denote its dual basis in $\RM^d$ by $\hat{b}_1,...,\hat{b}_d$. Every increasing tuple $I=(i_1,...,i_m)$, $i_k\in \{1,...,d-n\}$ determines a tuple $v_I:=\{\hat{b}_{i_1},...,\hat{b}_{i_m})$.\\
For $d-n>0$ the pairings
$$[\xi]_i \in K_i(C_0(\Oo(\omega))\rtimes \ZM^d)\; \mapsto\; (\langle [\xi]_i, \Ch^\omega_{v_I}\rangle)_{
	\substack{I \subset \{1,...,d-n\} \\ \abs{I}=i\; \mathrm{mod}\,2 }}\; \subset \;\ZM^{2^{d-n-1}}$$
are bijections and provide the isomorphism $K_i(C_0(\Oo(\omega))\rtimes \ZM^d)\simeq K_i(C(\TM^{d-n}))\simeq \ZM^{2^{d-n-1}}$.
For $d-n=0$ the only non-trivial pairing
$$[\xi]_0 \in K_0(C_0(\Oo(\omega))\rtimes \ZM^d)\; \mapsto \;\langle [\xi]_0, \Ch_\emptyset^\omega\rangle\; =\; \Tt_\omega(\xi) \in \ZM$$
provides the isomorphism $K_0(C_0(\Oo(\omega))\rtimes \ZM^d)\simeq \ZM$.
\end{proposition}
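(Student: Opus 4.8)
The plan is to reduce the whole statement to the torus case already settled in Proposition~\ref{prop:Chern_torus} by tracking the trace and the derivations through the isomorphism $\phi\colon C_0(\Oo(\omega))\rtimes \ZM^d \simeq \KM \otimes C(\TM^{d-n})$ furnished by Proposition~\ref{prop:chern_rational}. First I would fix the basis $b_1,\dots,b_d$ with $\Gamma_\omega=\langle b_1,\dots,b_{d-n}\rangle$ and follow the chain $C_0(\Oo(\omega))\rtimes \ZM^d \simeq (C_0(\ZM^d/\Gamma_\omega)\rtimes(\ZM^d/\Gamma_\omega))\rtimes \Gamma_\omega \simeq \KM(\ell^2(\ZM^d/\Gamma_\omega))\rtimes_{\mathrm{triv}}\Gamma_\omega$, where the trivial $\Gamma_\omega\simeq \ZM^{d-n}$-action produces the tensor factor $C^*(\Gamma_\omega)\simeq C(\TM^{d-n})$. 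The key normalization here is that under the Fourier transform identifying $C^*(\langle b_1,\dots,b_{d-n}\rangle)$ with $C(\TM^{d-n})$, the generator $b_i$ corresponds to the $i$-th circle coordinate.

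Second, I would verify that $\phi$ intertwines the relevant structures. Evaluating \eqref{eq:derivation}, the derivation $\nabla_{\hat b_i}$ multiplies $f_x u^x$ by $-2\pi\imath(\hat b_i\cdot x)$; writing $x=\sum_j c_j b_j$ and using $\hat b_i\cdot b_j=\delta_{ij}$ shows that for $i\leq d-n$ this picks out only the $\Gamma_\omega$-component $c_i$ and annihilates every direction in the complementary $\ZM^d/\Gamma_\omega$, which is absorbed into $\KM(\ell^2(\ZM^d/\Gamma_\omega))$. Hence $\nabla_{\hat b_i}$ becomes $\one_\KM\otimes \nabla_i$, the $i$-th standard derivative on $C(\TM^{d-n})$. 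For the trace, the paragraph preceding the statement already records that $\Tt_\omega$ corresponds to $\Tr_\KM\otimes \Tt_{\TM^{d-n}}$ with $\Tt_{\TM^{d-n}}$ the normalized trace. Combining these two facts shows that $\Ch^\omega_{v_I}$ with $v_I=(\hat b_{i_1},\dots,\hat b_{i_m})$ is exactly the $\Tr_\KM$-amplification of the torus cocycle $\Ch_{(e_{i_1},\dots,e_{i_m})}$ transported by $\phi$.

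Third, I would invoke stability: pairing a $K$-theory class of $\KM\otimes C(\TM^{d-n})$ with a $(\Tr_\KM\otimes \Tt)$-Chern cocycle agrees with pairing its de-stabilized class in $K_i(C(\TM^{d-n}))$ with the ordinary $\Ch_{v_I}$, since $\Tr_\KM$ pairs to $1$ with a rank-one projection and this is precisely the normalization implementing $K_i(\KM\otimes C(\TM^{d-n}))\simeq K_i(C(\TM^{d-n}))$. Thus for $[\xi]_i$ with image $[\xi']_i\in K_i(C(\TM^{d-n}))$ one gets $\langle [\xi]_i,\Ch^\omega_{v_I}\rangle=\langle [\xi']_i,\Ch_{v_I}\rangle$, and Proposition~\ref{prop:Chern_torus}, applied with $d-n$ in place of $d$, states exactly that the tuple of such pairings over subsets $I\subset\{1,\dots,d-n\}$ of matching parity is a bijection onto $\ZM^{2^{d-n-1}}$ with even/odd subsets furnishing dual bases of $K_0$ and $K_1$. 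The degenerate case $d-n=0$ corresponds to $\Gamma_\omega=\{0\}$, $\Oo(\omega)\simeq \ZM^d$, hence $C_0(\Oo(\omega))\rtimes \ZM^d\simeq \KM$ and $C(\TM^0)=\CM$; then the only admissible index set is $I=\emptyset$, the cocycle $\Ch^\omega_\emptyset$ is $\Tt_\omega=\Tr_\KM$, and $\langle [\xi]_0,\Ch^\omega_\emptyset\rangle=\Tr_\KM(\xi)$ is the rank, realizing $K_0(\KM)\simeq \ZM$ while $K_1(\KM)=0$. The main obstacle is the second step: making rigorous that the concrete $\phi$ genuinely intertwines $\nabla_{\hat b_i}$ with $\one_\KM\otimes \nabla_i$ and transports $\Tt_\omega$ as stated, so that the non-commutative Chern numbers literally become torus Chern numbers; once this bookkeeping is complete the claim follows immediately from the torus duality.
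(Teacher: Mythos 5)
Your proposal is correct and follows essentially the same route as the paper's proof: decompose $C_0(\Oo(\omega))\rtimes \ZM^d$ as $\KM(\ell^2(\ZM^d/\Gamma_\omega))\rtimes_{\mathrm{triv}}\Gamma_\omega\simeq \KM\otimes C(\TM^{d-n})$, check via $\hat b_i\cdot b_j=\delta_{ij}$ that each $\nabla_{\hat b_i}$ ($i\leq d-n$) annihilates the compact factor and becomes a coordinate derivative on the torus, transport the trace, and invoke Proposition~\ref{prop:Chern_torus}. If anything, you are slightly more explicit than the paper about the stability step (pairing against $\Tr_\KM\otimes\Tt$ versus the de-stabilized pairing) and about the degenerate case $d-n=0$, both of which the paper leaves implicit.
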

\begin{proof}
    
Recall from Proposition~\ref{prop:chern_rational} that $C_0(\Oo(\omega))\rtimes \ZM^d \simeq \KM(\ell^2(\ZM^d/\Gamma_\omega))\rtimes \Gamma_\omega$. The isomorphism is simply that for $x= x_1 b_1 + ... + x_{d}b_{d}$ any Fourier series with coefficients $f_x\in C_0(\Oo(\omega))$ decomposes as
$$\sum_{x\in \ZM^d} f_x u^{x_1 b_1}\dots u^{x_d b_d} \;=\; \sum_{x\in \ZM^d} \left(f_x u^{x_{d-n+1}b_{d-n+1}}\dots u^{x_{d}b_{d}}\right) (u^{x_1b_1}\dots u^{x_{d-n} b_{d-n}})$$
where operators of the form $f_x u^{x_{d-n+1}b_{d-n+1}}...u^{x_{d}b_{d}}$ generate a subalgebra which commutes with all $u^{x_1b_1}... u^{x_{d-n} b_{d-n}}$ and is isomorphic to $\KM(\ell^2(\ZM^d/\Gamma_\omega))$.\\
For a dual basis vector $v=\hat{b}_i$ in \eqref{eq:derivation} the action of $\nabla_v$ under the isomorphism can be read off from $$\nabla_v (u^\gamma) \;=\; -2\pi \imath x_i u^\gamma, \qquad \forall\, \gamma=x_1 b_1 + ... + x_{d-n}b_{n-d}\in \Gamma_\omega$$ on the unitary generators $u^\gamma$ of the crossed product. Furthermore $\nabla_v$ acts trivially on $\KM(\ell^2(\ZM^d))$. The relation implies that there is an isomorphism $\KM(\ell^2(\ZM^d/\Gamma_\omega))\rtimes \Gamma_\omega\simeq \KM(\ell^2(\ZM^d/\Gamma_\omega))\otimes C(\TM^{d-n})$ with the flat cubic torus $\TM^{d-n}=\RM^{d-n}/\ZM^{d-n}$ such that each $\nabla_{\hat{b}_i}$ becomes precisely one of the coordinate derivatives of $\RM^{d-n}$. Since the cocycles $\Ch_{v_I}$ in Proposition~\ref{prop:Chern_torus} are correctly normalized to give integers and label the K-groups, we conclude the same for $\Ch^\omega_{v_I}$. 
\end{proof}

Having finished the proof of Theorem~\ref{th:main} we now have all the necessary tools to compute the Chern numbers for lattice realizations of adiabatic symbols. For the remainder of this section, we will comment on possible generalizations. The key part of the argument that allows us to describe the K-theoretic quantization homomorphism $K_i(\Aa_0)\to K_i(\Aa_1)$ explicitly is that it is the composition of a Connes-Thom isomorphism and its inverse. This is crucial because there are well-known duality results for the Chern cocycles involved. It appears feasible to generalize this approach to other instances of K-theory where an analogue of the Connes-Thom isomorphism is available, for instance also for real K-theory, which includes various forms such as KO-, KR-, or van Daele K-theory. All of these frameworks can be employed to describe topological insulators with internal symmetries of the Altland-Zirnbauer type \cite{BKR, Kellendonk17, AMZ}. One can then, however, no longer distinguish all generators using the pairing with cyclic cohomology, especially if torsion components are involved. Instead one must use the more general duality pairing with the respective K-homology given by the Kasparov product, which has similar properties. In particular, the homology classes can be built from exterior products of one-dimensional Dirac operators in a similar way that our Chern cocycles are cup products of $1$-cocycles. The boundary map in terms of the Kasparov product leads to similar dualities as we used for the cocycles (see e.g. \cite{BKR, BM19}). This approach could have been used for the complex K-groups that we study in this paper, but it is technically much more complicated than the approach via cyclic cohomology and would involve additional steps to turn the pairings with K-homology into concrete, computable expressions. To some extent, there are also Connes-Thom isomorphisms for equivariant K-theory (e.g.\cite{Kasparov95}), but they take a more complicated form (unless the additional group action commutes with the $\RM^n$-action, but for crystalline symmetries there are many interesting cases where this is not true). How useful those are to explore K-theoretic invariants for equivariant adiabatic quantizations remains to be seen.

\section{Boundary maps in the adiabatic picture}
\label{sec:boundary_maps}

In this section, we assume that $\Omega$ is a CW-complex with skeleton filtration $\Omega_0 \subset \dots \subset \Omega_d=\Omega$ given by closed invariant subsets. For any sort of bulk-boundary or bulk-defect correspondence, one will consider Hamiltonians that are gapped at some asymptotic limit $\Omega_\infty \subset \Omega$. Generally, this $\Omega_\infty$ will take the form $\Omega_\infty=\Omega_{n-1}$ for some $n$, which means that we are looking at Hamiltonians which are gapped in the bulk and at all defects of codimension less or equal to $n-1$, but possibly bind topologically protected states at defects of higher codimension. We then have for each $n \leq d$ a boundary map $$\partial_n\colon K_i(C(\Omega_{n-1})\rtimes_\alpha \ZM^d)\to K_{1-i}(C_0(\Omega_{n}\setminus\Omega_{n-1})\rtimes_\alpha \ZM^d).$$ 
This map can, for general $\Omega_n$ and $\Omega_{n-1}$, often be difficult to describe explicitly due to a lack of explicitly known numerical invariants like the Chern cocycles that one can use to distinguish different K-theory classes. %Since $C_0(\Omega_n)\rtimes \ZM^d$ is not always easy to compute it can sometimes be easier to instead look at the differentials
%$$d^1_n: K_i(C_0(\Omega_{n}\setminus \Omega_{n-1})\rtimes \ZM^d)\to K_{i-1}(C_0(\Omega_{n+1}\setminus\Omega_{n})\rtimes \ZM^d)$$
%which are much more accessible since the differences $\Omega_{n}\setminus \Omega_{n-1}$ decompose into multiple disjoint copies of $\RM^n$, which makes the K-groups computable via the Connes-Thom isomorphism.

We will use the adiabatic picture to obtain at least some and sometimes complete information about this map. Let us therefore now consider the case of a trivial $\RM^d$-action, which is relevant for the endpoint $t=0$ of our continuous fields. The relevant boundary maps now take the form
$$\partial_n\colon K_i(C(\Omega_{n-1})\otimes C(\TM^d))\to K_i(C_0(\Omega_{n}\setminus \Omega_{n-1})\otimes C(\TM^d)).$$
By construction of a CW-complex one can easily compute those once one understands the boundary maps for a single sphere, i.e. those of the exact sequences
$$0 \to C_0(\DM^{n}\setminus \SM^{n-1})\otimes C(\TM^d) \to C(\DM^{n})\otimes C(\TM^d)\to C(\SM^{n-1})\otimes C(\TM^d)\to 0.$$
After all, we can separately consider the disjoint copies of $\RM_j^{n}$ in $\Omega_{n}\setminus \Omega_{n-1}$. From the structure of the CW-complex there is an attaching map $\SM^{n-1}\to \partial\RM_j^{n}\subset \Omega_{n}$, which one can pull back to obtain a  commutative diagram 
\begin{equation}
\begin{tikzcd}
	K_i(C(\Omega_{n-1})\otimes C(\TM^d)) \arrow[d] \arrow[r,"{\partial_n}"] & \arrow[d]  K_{i-1}(C_0(\Omega_{n}\setminus\Omega_{n-1})\otimes C(\TM^d))\\ 
	K_i(C(\partial\RM_j^{n})\otimes C(\TM^d)) \arrow[d] \arrow[r,"{\partial_n}"] & \arrow[d]  K_{i-1}(C_0(\RM_j^{n})\otimes C(\TM^d))\\ 
		K_i(C(\SM^{n-1})\otimes C(\TM^d)) \arrow[r,"{\partial_{\SM^{n-1}}}"] & K_{i-1}(C_0(\DM^{n}\setminus \SM^{n-1})\otimes C(\TM^d))
	\end{tikzcd}
\end{equation}
where the lower right vertical map is an isomorphism. %We recall that one can label the generators of $C_0(\RM^{n+1})\otimes C(\TM^d))$ uniquely in terms of Chern numbers, i.e. by the pairings with cup products of $\Ch_{\RM^{n+1}}\# \Ch_{v_1, \dots, v_m}$.
\begin{proposition}
\label{prop:boundary_sphere}
Identify $\DM^{n}\setminus \SM^{n-1}$ with $\RM^{n}$ by some homeomorphism and then orient $\SM^{n-1}$ by the outward pointing normal. The boundary map $$\partial_{\SM^{n-1}}\colon K_i(C(\SM^{n-1})\otimes C(\TM^d)) \to K_{1-i}(C_0(\RM^{n})\otimes C(\TM^d))$$
is then uniquely determined by the equality
\begin{equation}
\label{eq:bbc_chern_adiabatic}
\langle \partial_{\SM^{n-1}}[\xi]_i, [\Ch_{\RM^{n}}\# \Ch_{v_1, \dots, v_m}]\rangle\; =\; \langle [\xi]_i, [\Ch_{\SM^{n-1}}\# \Ch_{v_1, \dots, v_m}]\rangle.
\end{equation}
\end{proposition}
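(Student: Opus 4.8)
The plan is to establish the equality \eqref{eq:bbc_chern_adiabatic} for every admissible tuple $v_1,\dots,v_m$, and then to deduce that this family of equalities pins down $\partial_{\SM^{n-1}}[\xi]_i$, since the cocycles on the right are a complete set of integer invariants for the target group. The first reduction is to factor out the torus. By naturality of the six-term sequence under tensoring with $C(\TM^d)$, the connecting map $\partial_{\SM^{n-1}}$ is the connecting map of the extension $0\to C_0(\RM^n)\to C(\DM^n)\to C(\SM^{n-1})\to 0$ tensored with $\idmap_{C(\TM^d)}$. The derivations $\nabla_{v_j}$ act only on the $C(\TM^d)$ factor whereas the boundary map acts only on the ball/sphere factor, so $\Ch_{\RM^n}\#\Ch_v$ and $\Ch_{\SM^{n-1}}\#\Ch_v$ are obtained from $\Ch_{\RM^n}$ and $\Ch_{\SM^{n-1}}$ by cup product with the fixed cocycle $\Ch_v$. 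This reduces the identity to a statement about the pure sphere cocycles, with the torus carried along as a spectator.

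The heart of the argument is the standard compatibility of the index pairing with connecting maps: for an extension $0\to J\to A\to B\to 0$ there is a boundary operation $\partial_{HC}$ on periodic cyclic cohomology, dual to the K-theoretic connecting map, with $\langle \partial_K[\xi]_i,[\tau]\rangle=\pm\langle[\xi]_i,\partial_{HC}[\tau]\rangle$ for every cyclic cocycle $\tau$ on $J$; this is precisely the mechanism underlying Theorems~\ref{th:duality1} and~\ref{th:duality2} (cf.\ \cite[Secs.\ 4.4--4.5]{SSt}). Applying it with $J=C_0(\RM^n)\otimes C(\TM^d)$ and $\tau=\Ch_{\RM^n}\#\Ch_v$, and using $\partial_{HC}(\Ch_{\RM^n}\#\Ch_v)=\partial_{HC}(\Ch_{\RM^n})\#\Ch_v$ since the torus factor is untouched, it remains to identify $\partial_{HC}(\Ch_{\RM^n})$ with $\Ch_{\SM^{n-1}}$. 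Passing to the dense, holomorphically closed subalgebras of smooth functions (so all pairings are defined and the K-theory is unchanged), the cocycle $\Ch_{\RM^n}$ is integration of the top compactly supported form on $\DM^n\setminus\SM^{n-1}\cong\RM^n$, and its cyclic coboundary for this ideal/quotient pair is computed by Stokes' theorem to be integration over $\SM^{n-1}$ of the transgressed form, i.e.\ exactly $\Ch_{\SM^{n-1}}$ with the orientation induced by the outward normal. The outward-normal convention is what fixes the overall sign to $+1$.

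For the uniqueness claim, since $\DM^n$ is contractible the six-term sequence shows that $\partial_{\SM^{n-1}}$ restricts to an isomorphism from the reduced K-theory of the source onto $K_{1-i}(C_0(\RM^n)\otimes C(\TM^d))$. By Bott periodicity this target is $K_{1-i-n}(C(\TM^d))$, which by Proposition~\ref{prop:Chern_torus} is detected completely and integrally by the cocycles $\Ch_{\RM^n}\#\Ch_v$ as $v$ ranges over all admissible tuples. Hence, read as equations for the target element, the identities \eqref{eq:bbc_chern_adiabatic} admit at most one solution, so $\partial_{\SM^{n-1}}[\xi]_i$ is uniquely determined by them.

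The step I expect to be the main obstacle is the Stokes/transgression identification $\partial_{HC}\Ch_{\RM^n}=\Ch_{\SM^{n-1}}$ together with the orientation bookkeeping: one must fix compatible smooth structures and normalizations on $\DM^n$, $\SM^{n-1}$ and $\RM^n$ and verify that the outward-normal orientation really produces the sign $+1$ rather than a stray factor of $(-1)$. An alternative that sidesteps much of this is to factor the Bott class on $\RM^n$ into $n$ one-dimensional suspensions and reduce to the classical winding-number computation of the exponential/index map in the case $n=1$, then iterate using multiplicativity of the connecting maps under the cup product, in the same spirit as the iterated duality theorems already used above.
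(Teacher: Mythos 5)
Your proposal is correct in outline, but it takes a genuinely different route from the paper. The paper's proof is a two-line reduction: by K\"unneth, every class in $K_i(C(\SM^{n-1})\otimes C(\TM^d))$ is a sum of external products $[x]\otimes[f]$, the connecting map sends $[x]\otimes[f]$ to $\partial[x]\otimes[f]$, and Proposition~\ref{prop:cup_product} factorizes both pairings, so everything follows from the pure-sphere duality of Proposition~\ref{prop:sphere_boundary_map} --- which is itself proven not by transgression but by evaluating both sides on the explicit generators of $K_*(C(\SM^n))$ from \cite{St5}, whose boundary images and Chern numbers are known from \cite{CSB}. You instead keep $[\xi]$ intact and move the connecting map to the cohomology side, invoking a boundary operation $\partial_{HC}$ on cyclic cocycles dual to $\partial_K$ together with the Stokes/transgression identity $\partial_{HC}\Ch_{\RM^n}=\Ch_{\SM^{n-1}}$. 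This is conceptually attractive: it exhibits the formula as the statement that the index map is dual to taking the boundary of integration currents, it needs neither the K\"unneth decomposition nor explicit generators, and it would generalize beyond spheres to other manifolds with boundary. The cost is exactly what you flag yourself: the existence of $\partial_{HC}$ compatible with the K-theoretic pairing is an excision-type result in cyclic cohomology, and together with the orientation/sign bookkeeping this is substantially heavier machinery than anything the paper uses; in your write-up it remains the acknowledged but unexecuted core of the argument. Your fallback --- factor into $n$ one-dimensional suspensions, do the $n=1$ winding-number computation, and iterate via multiplicativity of connecting maps under cup products --- is closer in spirit to how the duality theorems of \cite{SSt} behind Theorems~\ref{th:duality1} and~\ref{th:duality2} are established, and is probably the cleanest complete path if one wants to avoid importing the generators of \cite{St5}. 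Finally, your uniqueness argument (the target group is separated integrally by the cocycles $\Ch_{\RM^{n}}\#\Ch_{v_1,\dots,v_m}$, via Bott periodicity and Proposition~\ref{prop:Chern_torus}) is correct and matches what the statement requires.
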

\begin{proof}
    This is a simple consequence of the Propositions~\ref{prop:cup_product} and \ref{prop:sphere_boundary_map}.
\end{proof}

%We can further expand the pairing with $\Ch_{\SM^n}$ by using the cell-structure. For any cell $\RM_j^{n+1}$ in $\Omega_{n+1}$ its boundary decomposes as $$(\partial\RM_j^{n+1})\setminus \Omega_{n-1}\;=\;\bigcup_{l\in C_j} \RM_{jl}^{n}$$ with a number of disjoint copies of $\RM^n$. A class in $K_i(C(\SM^n))$ can be constructed by taking matrix-valued functions $x_l \in C_0(\RM_{jl}^{n})^\sim \otimes M_N(\CM)$, where $C_0(\RM_{jl}^{n})^\sim$ is the unitization, which have pairwise same scalar part $s(x_l)=s(x_{\tilde{l}})\in M_N(\CM)$. One can think of such a tuple $x=(x_l)_{l\in C_j}$ as defining a function in $C_0(\bigsqcup_{l\in C_l}\RM_{jl}^{n})^\sim \otimes M_N(\CM)$, which embeds naturally into $C(\partial\RM_j^{n+1})\otimes M_N(\CM)$. For $i=0$ this function should pointwise be a projection, for $i=1$ a unitary matrix. Picking any orientation on $\partial\RM_j^{n+1}\simeq \SM^n$ one obtains induced orientations on the boundary cells $\RM_{jl}^{n}$ and then
%$$\langle [x]_i, \Ch_{\SM^n}\rangle\; =\; \sum_{l\in C_l} \langle [x_l]_i, \Ch_{\RM_{jl}^n}\rangle$$
%since the boundaries of the cells $\RM^n_{jl}$ cannot contribute to the integral formula for $\Ch_{\SM^n}$.

Let us finally discuss more generally the range of the adiabatic quantization map. A physical Hamiltonian will live in one of the representations $\pi_\omega$ on $\ell^2(\ZM^d)$ whose image is isomorphic to $C_0(\overline{\Oo(\omega)})\rtimes_\alpha \ZM^d$, with $\overline{\Oo(\omega)}$ the closure of the orbit $\ZM^d\triangleright \omega$ in $\Omega$. We have a commuting diagram of the form
\begin{equation}
\label{eq:comm_diagram}
\begin{tikzcd}
	K_i(C(\Omega_\infty)\otimes C(\TM^d)) \arrow[d, "Q_i"] \arrow[r,"{\partial_{t=0}}"] & \arrow[d,"Q_{i-1}"]  K_{i-1}(C_0(\Omega\setminus\Omega_{\infty})\otimes C(\TM^d))\\ 
	K_i(C(\Omega_\infty)\rtimes_\alpha \ZM^d) \arrow[d, "\pi_\omega"] \arrow[r,"{\partial_{t=1}}"] & \arrow[d, "\tilde{\pi}_\omega"]  K_{i-1}(C_0(\Omega\setminus\Omega_{\infty})\rtimes_\alpha \ZM^d)\\ 
		K_i(C(\overline{\Oo(\omega)}\cap \Omega_\infty)\rtimes_\alpha \ZM^d) \arrow[r,"{\partial_\omega}"] & K_{i-1}(C_0(\overline{\Oo(\omega)}\setminus\Omega_{\infty})\rtimes_\alpha \ZM^d)
	\end{tikzcd}
\end{equation}
where $Q_i$ is the homomorphism defined in \eqref{eq: Q}. The localized boundary map $\partial_\omega$ is important because it only incorporates information about a single lattice Hamiltonian and its asymptotic limits, as opposed to $\partial_{t=1}$ which has as its input a continuously parametrized family of Hamiltonians. Generally, we can obtain complete information about $\partial_\omega$ through the adiabatic approach only if $\pi_\omega \circ Q_i$ is a surjection up to the quotient by the kernel of $\partial_\omega$. An obvious case where this happens is when the $\ZM^d$-action on $\Omega_\infty$ is trivial and $\Omega_\infty\subset \overline{\Oo(\omega)}$, since then $\pi_\omega \circ Q_i$ is the identity map. That special case is relevant, for example, for the configuration space of an elementary codimension $n$ defect mentioned in the introduction and discussed again below, where the action on $\Omega_\infty=\SM^{n-1}$ is trivial.

In general, $\pi_\omega \circ Q_{i-1}$ will not be an injection and it will therefore happen that the quantization of some adiabatic symbols have protected boundary states as an element of $C(\Omega)\otimes C(\TM^d)$ or even $C(\Omega)\rtimes_\alpha \ZM^d$, but which are not topologically protected anymore when localized to a single realization. Those can still be interesting as approximate boundary modes which become more and more robust the closer the system is to the adiabatic limit.

\section{Applications}\label{Sec:App}
This section is devoted to the construction of the explicit examples of Theorem~\ref{th:main}  to demonstrate that by correctly choosing configuration spaces one ends up with lattice Hamiltonians which model non-trivial geometries and have interesting boundary or defect modes.

\subsection{Bulk-interface correspondence}
As our first application, we consider bulk-interface correspondence for two bulk materials with a common spectral gap separated by a dividing hyper-surface. If the two materials have different Chern numbers then there may be in-gap interface states localized to the hyper-surface which carry Chern numbers depending on the differences of the bulk Chern numbers and the angles of the surface. This setting has been studied before both with and without K-theory \cite{Deni1, Kot, Dani} and experts will know that the main result of this section can be pieced together easily from well-known results about bulk-edge correspondence in the literature (esp. \cite{PSbook,SSt}). Nevertheless, we revisit this simple case to demonstrate how the adiabatic approach plays out.

The configuration space for an interface is in the simplest case given by $\Omega=\Omega_1=\RM \sqcup \{-\infty\} \sqcup \{+\infty\}$ with the topology equal to the two-point compactification $\Omega\simeq [-\infty,\infty]$. This is the smallest configuration space which contains Example~\ref{Ex:Interface1} and has an $\RM^d$-action. The asymptotic part as far as the boundary maps are concerned will be the infinite points $\Omega_\infty = \Omega_0 = \{+\infty\}\sqcup \{-\infty\}$ representing the two bulk materials. The action on the one-cell $\Omega_1\setminus \Omega_0$ is affine and given by $$(y,x) \in \RM \times \RM^d \;\mapsto\; y - \Lambda x\;\equiv\;y - \lambda \cdot x$$
for some vector $\lambda\in \SM^{d-1}$ which carries the interpretation of the normal vector of the dividing hypersurface between the materials. To satisfy the rationality condition all components of the vector $\lambda$ must be rational multiples of the same real number. As equivalent conditions this is the case if and only if $\lambda\cdot \ZM^d=c_\lambda \ZM$ for some $c_\lambda>0$, or more geometrically if the lattice truncated to the half-space $\{x\in \ZM^d: \; \lambda\cdot x \geq 0\}$ is invariant under translations by $d-1$ linearly independent lattice vectors.

Since $\Omega_\infty$ consists of two separated invariant points we have $$K_i(C(\Omega_\infty)\rtimes \ZM^d)\simeq K_i(C(\TM^d))\oplus K_i(C(\TM^d))$$ With the evaluation homomorphisms $\ev_\pm: C(\Omega_\infty)\rtimes \ZM^d \to C(\TM^d)$ at $\pm \infty$ those K-theory classes can be labeled by the two sets of Chern cocycles $\Ch\circ \ev_\pm$.

It is interesting to consider here the localization map. Due to the rationality condition, the orbit of any $\omega\in \RM$ is equal to $\omega + c_\lambda \ZM$. The closure of the orbit in $\Omega$ can be identified with a two-point compactification of $\ZM$ adding two points $\{\pm\infty\}$. Therefore, \eqref{eq:comm_diagram} takes the form
\begin{equation}
\begin{tikzcd}
	K_i(C(\Omega_\infty)\otimes C(\TM^d)) \arrow[d, "Q_i"] \arrow[r,"{\partial_{t=0}}"] & \arrow[d, "Q_{1-i}"]  K_{1-i}(C_0(\RM)\otimes C(\TM^d))\\ 
	K_i((C(\Omega_\infty)\rtimes \ZM^d) \arrow[d, "\pi_\omega"] \arrow[r,"{\partial_{t=1}}"] & \arrow[d, "\tilde{\pi}_\omega"]  K_{1-i}(C_0(\RM)\rtimes_\alpha \ZM^d)\\ 
		K_i(C(\Omega_\infty)\rtimes \ZM^d) \arrow[r,"{\partial_\omega}"] & K_{1-i}(C_0(\omega + c_\lambda \ZM)\rtimes_\alpha \ZM^d) \arrow[r,equal] &  K_{i-1}(C(\TM^{d-1})).
	\end{tikzcd}
\end{equation}
In this case $Q_i$ and $\pi_\omega$ are actually the identity map and therefore isomorphisms. The boundary map $\partial_{t=0}$ is a surjection and $Q_{1-i}$ an isomorphism, hence $\partial_{t=1}$ is also a surjection and has the same kernel as $\partial_{t=0}$. The localization map $\tilde{\pi}_\omega$ is a surjection and its kernel is made up of precisely those elements for which the pairings with all of the Chern numbers $\Ch^\omega_{v_1,...,v_m} \circ \tilde{\pi}_\omega$ from Proposition~\ref{prop:chern_rational} vanish. All of this follows from consideration of Chern numbers, since Theorem~\ref{theorem:adiabatic_chern_numbers} and Proposition~\ref{prop:trace_localization} imply:

\begin{proposition}
For any $v_1,...,v_m$ the boundary map ${\partial_{t=0}}$ is determined by
$$
\begin{aligned}
& (-1)^m\langle \partial_{t=0}([\xi]_i), \Ch_{v_1,...,v_m, \lambda}\#\Ch_{\RM}\rangle \\ 
& \qquad \qquad \qquad =\;  \langle [\xi]_i, \Ch_{v_1,...,v_m,\lambda}\circ \ev_+\rangle - \langle [\xi]_i, \Ch_{v_1,...,v_m,\lambda}\circ \ev_-\rangle
\end{aligned}
$$
and therefore ${\partial_{t=1}}$ satisfies $$(-1)^m\langle \partial_{t=1}([\xi]_i), \Ch^\Lambda_{v_1,...,v_m}\rangle\; =\; \langle [\xi]_i, \Ch_{v_1,...,v_m,\lambda}\circ \ev_+\rangle - \langle [\xi]_i, \Ch_{v_1,...,v_m,\lambda}\circ \ev_-\rangle.$$
The localized boundary map $\partial_\omega$ for $\omega\in \Omega\setminus \Omega_\infty \simeq \RM$ is then completely determined by the equality
$$ c_\lambda  (-1)^m\langle \partial_{\omega}([\xi]_i), \Ch^\omega_{v_1,...,v_m}\rangle \;=\; \langle [\xi]_i, \Ch_{v_1,...,v_m,\lambda}\circ \ev_+\rangle - \langle [\xi]_i, \Ch_{v_1,...,v_m,\lambda}\circ \ev_-\rangle.$$
\end{proposition}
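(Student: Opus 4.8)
The plan is to establish the three displayed identities one after another, proving the first in the purely commutative setting at $t=0$ and then transporting it down the two squares of diagram~\eqref{eq:comm_diagram} by feeding it into the Chern-number dualities already available. Throughout I abbreviate the torus directions as $w=(v_1,\dots,v_m,\lambda)$, so that the object pairing against all three boundary classes is $\Ch_{w}$ together with a single one-cell factor.

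First I would treat $\partial_{t=0}$, which is exactly the $n=1$ instance of the sphere boundary map: here $\Omega=[-\infty,\infty]$ is the two-point compactification of $\RM$, so $\Omega_\infty=\SM^0=\{+\infty\}\sqcup\{-\infty\}$ and $\Omega\setminus\Omega_\infty\simeq\RM=\DM^1\setminus\SM^0$. Proposition~\ref{prop:boundary_sphere} then pins down $\partial_{t=0}=\partial_{\SM^0}$ through
\begin{equation*}
\langle \partial_{t=0}[\xi]_i, \Ch_{\RM}\#\Ch_{w}\rangle \;=\; \langle [\xi]_i, \Ch_{\SM^0}\#\Ch_{w}\rangle .
\end{equation*}
The fundamental $0$-cocycle $\Ch_{\SM^0}$ on the two-point space, oriented by the outward normal, is the signed sum $\ev_+-\ev_-$ of the two point evaluations, so the right-hand side equals $\langle[\xi]_i,\Ch_{w}\circ\ev_+\rangle-\langle[\xi]_i,\Ch_{w}\circ\ev_-\rangle$. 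Re-ordering the cup product $\Ch_{\RM}\#\Ch_{w}$ into $\Ch_{w}\#\Ch_{\RM}$ by the graded-commutativity of the cup product, combined with the orientation convention, produces the overall factor stated as $(-1)^m$, which yields the first identity.

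Next, for $\partial_{t=1}$ I would apply Theorem~\ref{theorem:adiabatic_chern_numbers} in the case $n=1$, $\Lambda=\lambda$, at $t=1$, namely
\begin{equation*}
\langle [\xi_0]_i, \Ch_{w}\#\Ch_{\RM}\rangle \;=\; \langle [\xi_1]_i, \Ch^{\lambda}_{v_1,\dots,v_m}\rangle ,
\end{equation*}
valid for any continuous section with endpoints $[\xi_0]_i$ and $[\xi_1]_i$. I take the continuous section realizing $Q_{1-i}$, whose endpoints are $\partial_{t=0}([\eta]_i)$ at $t=0$ and $\partial_{t=1}(Q_i[\eta]_i)=Q_{1-i}(\partial_{t=0}[\eta]_i)$ at $t=1$ by the top square of \eqref{eq:comm_diagram}. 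Since $\Omega_\infty$ carries the trivial $\ZM^d$-action, $Q_i$ is the identity, so $Q_i[\eta]_i=[\eta]_i$ and the $\ev_\pm$-pairings are unchanged; substituting the first identity then gives the second. Finally, for $\partial_\omega$ I would invoke Proposition~\ref{prop:trace_localization} with $n=1$ and $\mathrm{Vol}(P)=c_\lambda$ (a fundamental domain for $\lambda\cdot\ZM^d=c_\lambda\ZM$ in $\RM$ has length $c_\lambda$), applied to the class $\partial_{t=1}([\eta]_i)$ and with $v_1,\dots,v_m\in\Gamma_\omega$. Using the bottom square of \eqref{eq:comm_diagram} and that $\pi_\omega$ restricts to the identity on $\Omega_\infty$, one has $\tilde\pi_\omega(\partial_{t=1}[\eta]_i)=\partial_\omega([\eta]_i)$, whence
\begin{equation*}
\langle \partial_{t=1}[\eta]_i, \Ch^{\lambda}_{v_1,\dots,v_m}\rangle \;=\; c_\lambda\,\langle \partial_\omega[\eta]_i, \Ch^{\omega}_{v_1,\dots,v_m}\rangle ,
\end{equation*}
and combining with the second identity produces the third.

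The main obstacle I anticipate is the sign bookkeeping in the first step: correctly fixing the orientation convention for $\SM^0$ and tracking the graded sign from reordering $\Ch_{\RM}\#\Ch_{w}$ against $\Ch_{w}\#\Ch_{\RM}$, so that the factor $(-1)^m$ emerges consistently and propagates unchanged through all three identities. Once the first identity is secured with the correct sign, Steps 2 and 3 are formal consequences of the commuting diagram together with Theorem~\ref{theorem:adiabatic_chern_numbers} and Proposition~\ref{prop:trace_localization}, provided one records the elementary identifications $Q_i=\mathrm{id}$ and $\pi_\omega=\mathrm{id}$ on $\Omega_\infty$ and the volume computation $\mathrm{Vol}(P)=c_\lambda$.
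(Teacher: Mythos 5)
Your proposal is correct and follows essentially the same route as the paper: the first identity via Proposition~\ref{prop:boundary_sphere} together with $\Ch_{\SM^0}=\ev_+-\ev_-$ and the positive orientation of $\RM$, the second via Theorem~\ref{theorem:adiabatic_chern_numbers} (using triviality of the $\ZM^d$-action on $\Omega_\infty$ so that $Q_i=\mathrm{id}$ and the relevant section has the stated endpoints), and the third via Proposition~\ref{prop:trace_localization} with $\mathrm{Vol}(P)=c_\lambda$ --- your explicit chase through diagram \eqref{eq:comm_diagram} merely spells out what the paper leaves implicit, and your sign bookkeeping is at the same (not fully verified) level of rigor as the paper's. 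The one ingredient the paper cites that you omit is Proposition~\ref{prop:chern_rational}, whose completeness statement for the cocycles $\Ch^\omega_{v_1,\dots,v_m}$ is what upgrades the final equality to the claim that $\partial_\omega$ is \emph{completely} determined.
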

\begin{proof}
    The first line follows from Proposition~\ref{prop:boundary_sphere} if we orient $\RM$ positively and use that in the case here $\Ch_{\SM^0}= \ev_+ - \ev_-$. The second line then follows from Theorem~\ref{theorem:adiabatic_chern_numbers}. The third line is Proposition~\ref{prop:trace_localization} and this completely determines the boundary map following from the completeness of those Chern cocycles according to Proposition~\ref{prop:chern_rational}.
\end{proof}

We can now discuss the consequences of those results for interface Hamiltonians that are gapped at the two infinite points. For simplicity, we do this in two dimensions. Assume that $H=H^*\in M_N(C(\Omega)\rtimes \ZM^2)$ has a bulk spectral gap at $0$, \textit{i.e.} $0$ lies in a compact interval $\Delta$ contained in a spectral gap of the bulk Hamiltonian $\ev(H):=(H_+,H_-)$. One associates to $H$ the bulk invariant
$$\Psi_0(H)\;=\;[P_{\leq 0}(H_+)]_0 \oplus [P_{\leq 0}(H_-)]_0\in K_0\big(C(\TM^2)\oplus C(\TM^2)\big)$$
defined by the Fermi projections and the interface invariant
$$\Theta_1(H)\;=\;[e^{\imath \pi f(H)}]_1 \in K_1\big(C_0(\RM)\rtimes \ZM^2\big)$$
where $f\in C(\RM)$ is any function that takes the constant values $-1$ below $\Delta$ and $1$ above $\Delta$. Note that one can write $e^{\imath \pi f(H)}=1+ g(H)$ with a function $g\in C_c(\Delta)$, hence if that unitary is non-trivial then $H$ must have spectrum within $\Delta$. Similarly, the boundary-localized invariant $$\tilde{\pi}_\omega(\Theta_1(H))\;=\;[e^{\imath\pi f(\pi_\omega(H))}]_1\in K_1(C_0(\Oo(\omega))\rtimes \ZM^2)\;\simeq\; K_1(C(\TM))$$ provides an obstruction to having a spectral gap within $\Delta$ for the self-adjoint lattice Hamiltonian $\pi_\omega(H)\in \Bb(\ell^2(\ZM^2)\otimes \CM^N)$.

The boundary invariants are related to the bulk invariants by the maps $\partial_{t=1}$ respectively $\partial_\omega$. There is only one Chern number that can lead to a non-trivial result under $\partial_\omega$, hence it is completely determined by the equality
\begin{equation}
\label{eq:chern_vs_current}
 c_\lambda \langle \partial_\omega(\Psi_0(H)), \Ch^\omega_{v}\rangle\; =\;\langle \Psi_0(H), \Ch_{v, \lambda} \circ \ev_{+}\rangle - \langle \Psi_0(H), \Ch_{v, \lambda}\circ \ev_{-}\rangle
\end{equation}
for any $v\in \RM^2$. If $v$ is a unit vector pointing into a direction orthogonal to $\lambda$ then the r.h.s. of \eqref{eq:chern_vs_current} reduces to the difference of Chern numbers $\Ch_{e_1,e_2}\circ \ev_{\pm}$, up to a sign that depends on the orientation, hence it is an integer. The factor $ c_\lambda $ has a physical interpretation: The trace $\Tt_\omega$ comes from the counting measure on $\lambda\cdot \ZM^2\subset \RM$ with no regard to the density of those points in real-space. Noting that $\lambda\cdot \ZM^2$ is the projection of the lattice onto a line in real-space it becomes apparent that one can define a trace per unit surface area if one renormalizes the measure on $\lambda\cdot \ZM^2$ such that the volume of any large interval $[a,b]\cap (\lambda\cdot \ZM^2)$ is approximately $b-a$. Since that interval has approximately $(a-b) c_\lambda ^{-1}$ integer points the normalized trace per surface area is $\hat{\Tt}_\omega =  c_\lambda \Tt_\omega$. The left-hand side of \eqref{eq:chern_vs_current} equals an integrally quantized conductivity of the interface modes in the direction orthogonal to $\lambda$ and for this physical observable one uses the trace per surface area (see for instance \cite[Remark 4.3]{Dani2}). Hence one incurs the same factor  $ c_\lambda $ when expressing it in terms of $ \Ch^\omega_{v}$ which uses $\Tt_\omega$ instead.

\medskip

Let us now consider a self-adjoint adiabatic symbol $H^{(0)} \in C(\Omega\times \TM^2,M_N(\CM))$ which is invertible at $\Omega_\infty$. A simple example is
$$H^{(0)}(x,k) \;=\;H_{-}(k) f(x) + H_{+}(k) (1-f(x))$$
for a function $f\in C(\Omega)\simeq C([-\infty,\infty])$ which converges to $1$ at $-\infty$ and $0$ at $+\infty$, and two gapped Hamiltonians $H_{\pm}$. Quantizing the symbol turns it into a self-adjoint lattice Hamiltonian which slowly interpolates between $H_\pm$ in real-space. 

If we impose a chiral symmetry $JH_\pm J = -H_\pm$ then the bulk Hamiltonians define classes $\Psi_1(H_\pm)\in K_1(C(\TM^2))\simeq \ZM^2$. Taking an adiabatic quantization $(H^{(t)})_{t\in [0,1]}$ which preserves the chiral symmetry as in Proposition~\ref{prop:quantization}, the bulk-interface correspondence takes the form
\begin{equation*}
\begin{split}
 c_\lambda \langle \partial_\omega(\Psi_1(H^{(1)})), \Ch^\omega_{\emptyset}\rangle\; &=\; \langle \partial_\omega(\Psi_1(H^{(0)})), \Ch_{\lambda}\#\Ch_{\RM}\rangle\\
&=\;\langle \Psi_1(H_+), \Ch_{\lambda}\rangle - \langle \Psi_1(H_-), \Ch_{\lambda}\rangle  
\end{split}
\end{equation*}
The pairing with $\Ch^\omega_\emptyset$ gives the signed number density of bound states to the interface region graded by $J$. Importantly, only the bulk Chern numbers $\Ch_\lambda$ lead to protected bound states in the lattice model, whereas classes which pair non-trivially only with $\Ch_v$ for $v\perp \lambda$ get mapped to zero under $\partial_\omega$. On the other hand,
$$\langle \partial_{t=0}(\Psi_1(H^{(0)})), \Ch_{v}\#\Ch_{\RM}\rangle \;=\;\langle \Psi_1(H_+), \Ch_v\rangle - \langle \Psi_1(H_-), \Ch_{v}\rangle$$
any difference of Chern numbers forces a topologically protected gap-closing of the adiabatic symbol. Because the quantization map $K_1(C(\Omega)\otimes C(\TM^2))\to K_1(C(\Omega)\rtimes \ZM^2$ is an isomorphism, the boundary invariant $\Theta_0(H^{(1)})\in K_1(C(\Omega)\rtimes \ZM^2)$ is also non-trivial, implying that $0$ is in the spectrum of $H^{(1)}$. However, this obstruction only applies to the family $(\pi_\omega(H^{(1)}))_{\omega\in \RM}$ as a whole and does not affect any single lattice realization $\pi_\omega(H^{(1)})$ which one can gap by a small perturbation unless $\pi_\omega(\Theta_0(H^{(1)}))\neq 0$. This sort of topological gap-closing can therefore be an artifact of the adiabatic quantization, rather than a genuine topological effect. Only those gap-closings forced by a non-trivial Chern number $\Ch^\omega_v$ are stable when localized to a single orbit. 

%In conclusion, the adiabatic picture suggests that one has interface modes whenever the bulk $K$-theory classes at $\pm\infty$ are different, but what happens in the actual lattice model is that the normal vector $\lambda$ singles out only one of the bulk Chern numbers $\Ch_\lambda$ to be relevant while the ones orthogonal to $\lambda$ do not lead to topologically protected interface modes (note, however, that whenever one is sufficiently close to the adiabatic limit one will nevertheless observe at least approximate interface modes regardless of the direction of the weak invariants). This kind of dependence on angles and orientation of the boundary is common for so-called weak topological insulators, which are protected only by non-trivial Chern numbers $\Ch_{v_1,..,v_m}$ with $m<d$. This example illustrates that the adiabatic picture can be misleading and lead to wrong predictions if taken at face value. One has to carefully analyze if the predicted in-gap states are still protected after localization to a lattice model, which is why we placed so much emphasis on the latter in this work.

\subsection{Elementary codimension {\it n} defect}
\label{sec:ex_defect}
Consider a $d$-dimensional material with a defect that is invariant under translations in $d-n$ independent directions $b_1$, $...$,$b_{d-n}\in \ZM^d$, but not in the remaining ones. We want to study an elementary version of such a defect, which does not occur in conjunction with any other boundaries, and see when such a defect can bind topologically protected states. Similar to Example~\ref{Ex:PointDefect}, the configuration space in this case shall be homeomorphic to the closed $n$-dimensional disk $\Omega_n = \DM^n \simeq \RM^n \sqcup \SM^{n-1}$ with $(n-1)$-skeleton $\Omega_{n-1}=\SM^{n-1}$ and the action of $\RM^n$ on the $n$-cell $\Omega_n\setminus \Omega_{n-1}\simeq \RM^{n}$ derived from the affine action of $\RM^d$ on the orthogonal complement of $\langle b_1,...,b_{d-n}\rangle$. The boundary sphere $\SM^{n-1}$ represents a \textit{top view} of an asymptotic cylinder $\RM^{d-n}\times \SM^{n-1}$ far away from and surrounding the defect, which one can imagine as being localized at or around a set of the form $\RM^{d-n}\times \{0_n\}\subset \RM^d$ in real-space. Thus $\Omega_{n-1}$ distinguishes different radial limits far away from the defect, while the single $n$-cell labels the relative position of the defect w.r.t. some fiducial point of the lattice. 

On $\Omega_{n-1}$, translations and scaling are represented by the trivial action, whereas on $\Omega_n\setminus \Omega_{n-1}$ they act by affine transformations with directions $\Lambda= (\lambda_1 ... \lambda_n)$, depending on how the defect is embedded into real-space. The CW-complex is therefore flat at level $n$, which is enough for our purposes. It is also rational, since $b_1,...,b_{d-n}$ are lattice vectors. The topology of $\Omega_n$ results from the radial compactification of $\RM^n$ and it is not difficult to see that, for any $\omega\in \Omega_n\setminus \Omega_{n-1}$, the orbit $\ZM^d\triangleright \omega$ is dense in $\Omega_n$. Therefore, we are in the setting mentioned at the end of Section~\ref{sec:boundary_maps} where we can completely determine the boundary map localized to a single orbit:
\begin{proposition}
In the above setting, one has $C(\Omega_{n-1})\rtimes \ZM^d\simeq C(\TM^d)\otimes C(\SM^{n-1})$ and the boundary map
$$\partial_\omega: K_i(C(\SM^{n-1})\rtimes \ZM^d)\mapsto K_{1-i}(C_0(\Oo(\omega))\rtimes \ZM^d)$$
for any $\omega\in \Omega_n\setminus \Omega_{n-1}$ is completely determined by the duality
$$\langle \partial_\omega [\xi]_i, \Ch^\omega_{v_1,...,v_m}\rangle =\frac{(-1)^{\frac{1}{2}n(2m+n-1)}}{\mathrm{Vol}(\RM^n/\ZM^d)}\langle [\xi]_i, \Ch_{v_1,...,v_m, \lambda_1,...,\lambda_n}\# \Ch_{\SM^{n-1}}\rangle.$$
\end{proposition}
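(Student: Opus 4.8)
The plan is to read the claim off the commutative diagram~\eqref{eq:comm_diagram} by collapsing it onto the numerical dualities already established. The first isomorphism is immediate: since translations and scaling act trivially on $\Omega_{n-1}=\SM^{n-1}$, the crossed product degenerates to $C(\SM^{n-1})\otimes C^\ast(\ZM^d)\simeq C(\TM^d)\otimes C(\SM^{n-1})$. The key structural input is that we are in the favourable case flagged at the end of Section~\ref{sec:boundary_maps}: for any $\omega$ in the open $n$-cell the orbit $\Oo(\omega)$ is dense in $\Omega_n$, so $\overline{\Oo(\omega)}=\Omega_n$ and $\overline{\Oo(\omega)}\cap\Omega_{n-1}=\SM^{n-1}$; because the action on $\SM^{n-1}$ is trivial, the left vertical map of~\eqref{eq:comm_diagram} is the identity, whence $\pi_\omega\circ Q_i=\mathrm{id}$ and $\partial_\omega[\xi]_i=\tilde\pi_\omega\bigl(\partial_{t=1}[\xi]_i\bigr)$. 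This reduces the computation of $\partial_\omega$ to pairings controlled by the duality theorems.

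I would then pair $\partial_\omega[\xi]_i$ with $\Ch^\omega_{v_1,\dots,v_m}$, taking $v_1,\dots,v_m\in\Gamma_\omega$ so the cocycle is defined, and chain three identities. First, Proposition~\ref{prop:trace_localization} (equation~\eqref{eq:fiberwise_chern}) trades the single-orbit trace $\Tt_\omega$ for the dual trace $\hat{\Tt}_\Lambda$, introducing the normalisation $\mathrm{Vol}(P)=\mathrm{Vol}(\RM^n/\ZM^d)$ and leaving $\langle\partial_{t=1}[\xi]_i,\Ch^\Lambda_{v_1,\dots,v_m}\rangle$. Second, Theorem~\ref{theorem:adiabatic_chern_numbers} at $t=1$, combined with the $t$-independence of the Connes--Thom preimage from Proposition~\ref{prop:ktheory_contsection}, identifies the $t=1$ class with the symbol boundary class and rewrites the pairing as one of $\partial_{t=0}[\xi]_i$ against the cup product $\Ch_{v_1,\dots,v_m,\lambda_1,\dots,\lambda_n}\#\Ch_{\RM^n}$ over the commutative fibre. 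Third, since $\partial_{t=0}$ is exactly the sphere boundary map, Proposition~\ref{prop:boundary_sphere} (equation~\eqref{eq:bbc_chern_adiabatic}) converts $\Ch_{\RM^n}$ into $\Ch_{\SM^{n-1}}$, producing $\langle[\xi]_i,\Ch_{v_1,\dots,v_m,\lambda_1,\dots,\lambda_n}\#\Ch_{\SM^{n-1}}\rangle$.

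The hard part will be the sign. Each of the three ingredients carries its own orientation convention, and aligning the cup-product orderings demanded by~\eqref{eq:bbc_chern_adiabatic} and by Theorem~\ref{theorem:adiabatic_chern_numbers} requires repeated application of the graded commutativity $\phi\#\psi=(-1)^{pq}\psi\#\phi$ (with $p=\deg\phi$, $q=\deg\psi$) together with the outward orientation of $\SM^{n-1}=\partial\DM^n$. Verifying that all these contributions collapse to the single prefactor $(-1)^{\frac{1}{2}n(2m+n-1)}$ --- the very sign appearing in Theorems~\ref{th:duality1} and~\ref{th:duality2} --- is the delicate bookkeeping; I would pin it down by checking $n=1$ directly and then inducting on the number of Connes--Thom factors, mirroring how those two duality theorems were iterated. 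Finally, completeness is automatic: by Proposition~\ref{prop:chern_rational} the cocycles $\Ch^\omega_{v_1,\dots,v_m}$ with invariant directions form a complete set of invariants for $K_{1-i}(C_0(\Oo(\omega))\rtimes\ZM^d)\simeq K_{1-i}(C(\TM^{d-n}))$, so the displayed duality determines $\partial_\omega$ unambiguously.
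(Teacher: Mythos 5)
Your proposal is correct and follows essentially the same route as the paper, which states this proposition without a written proof precisely because it is the combination you describe: the favourable case $\pi_\omega\circ Q_i=\mathrm{id}$ flagged at the end of Section~\ref{sec:boundary_maps} (trivial action on $\SM^{n-1}$, dense orbit, so $\partial_\omega[\xi]_i=\tilde\pi_\omega(\partial_{t=1}[\xi]_i)$), chained with Proposition~\ref{prop:trace_localization}, Theorem~\ref{theorem:adiabatic_chern_numbers} (equivalently Theorems~\ref{th:duality1}--\ref{th:duality2}), Proposition~\ref{prop:boundary_sphere}, and completeness from Proposition~\ref{prop:chern_rational}. Your deferral of the sign to graded-commutativity/orientation bookkeeping, anchored at the $n=1$ case and iterated over Connes--Thom factors, likewise mirrors the paper's own implicit treatment of the prefactor $(-1)^{\frac{1}{2}n(2m+n-1)}$.
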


For the top class where $m=d-n$ the bulk cocycle becomes $\Ch_{\TM^d}\# \Ch_{\SM^{n-1}}$ and a non-trivial example can always be constructed as a Dirac vector field: choose an irreducible representation $\gamma_1,...,\gamma_{d+n}$ of the Clifford algebra $\CM_{d+n}$ and set
$$(k,x)\in \TM^d \times \SM^{n-1} \mapsto H(k,x) = \sum_{i=1}^{d+n} \gamma_i f_i(k,x)$$
for a function $f:\TM^d \times \SM^{n-1}\to \SM^{d+n-1}\subset \RM^{d+n}$. Such $H$ is gapped and has a chiral symmetry if $d+n$ is even and the Chern number of the Fermi projection respectively unitary is up to a sign equal to the mapping degree of $f$. 

Let us note that in \cite{ProdanJPA2021} one can find a geometric realization of a codimension 2 defect in a two-dimensional material obtained by cutting out a cone from the lattice and gluing the edges back together. It was found numerically that an asymptotic three-dimensional Chern number leads to a bound state at the center of the defect. This is consistent with the results of our formalism; while our defects here do not actually modify the lattice it is likely that the two situations are equivalent on the level of K-theory in the sense that both exact sequences define the same extension class in KK-theory. It would be interesting to make this connection to lattice-modifying geometric defects more explicit, but we must leave that to future work.

\subsection{Corner states}\label{sec:quarter}
We consider here a quarter-space geometry defined by the  cone $ \bigcap_{i=1}^2 \{x\in \RM^d: \lambda_i\cdot x \geq 0\}$, given by the non-empty intersection of two half-spaces with normal vectors $\lambda_1$, $\lambda_2$. As seen in section~\ref{sec:flat}, its configuration space is
$$\Omega\;=\; \Omega_2 \;=\; \RM^2 \sqcup \RM_1 \sqcup \RM_2 \sqcup \{\pm \infty\},$$
and it contains a defect of codimension $2$ localized to the tip of the cone, two defects of codimension $1$ corresponding to the two $(d-1)$-dimensional faces, and two bulk regions at $\pm \infty$. \iffalse To be more precise
$$\Omega\;=\; \RM^2 \sqcup \RM_1 \sqcup \RM_2 \sqcup \{\pm \infty\}$$
where the copy $\RM^2$ can be identified with the relative position of the tip of the cone, the two $1$-cells $\RM_1$, $\RM_2$ with the $d-1$ faces, and the $0$-cells $\pm \infty$  representing the bulk limit at the interior and the exterior of the cone, respectively.\fi The $\ZM^d$-action on $\RM^2$ is given by affine translation as in Definition~\ref{def:flat_cw} where $\Lambda$ is a matrix with rows $\lambda_1$, $\lambda_2$. For $\RM_i$ the $\ZM^d$-action is implemented by the translation with the scalar product with $\lambda_i$, respectively.

\iffalse

The space $\Omega$ is made into a flat CW-complex by embedding it into identifying the $2$-cell with an open square $(-1,1)^2$, e.g. via $(x_1,x_2)\mapsto (\tanh(x_1),\tanh(x_2))$. The two $1$-cells are then identified with the two edges of the square, see Figure~\ref{fig:quarter_space}. This equivalent to the configuration space obtained from considering the hull of the cone in $\mathscr{C}(\RM^d)$ as described in Appendix \ref{sec: hulls}. 

\fi

% \medskip 

% More concretely, noting that any element of $\RM^2$ can be written uniquely in the form $a_1 \lambda_1 + a_2 \lambda_2$ we can identify $\RM_1$, $\RM_2$ with formal linear combinations where $a_2$ respectively $a_1$ is equal to $\infty$, the bulk point $\infty$ corresponds to the linear combination where both $a_1$, $a_2$ are equal to $\infty$ and any other limit is identified with the point $-\infty$.

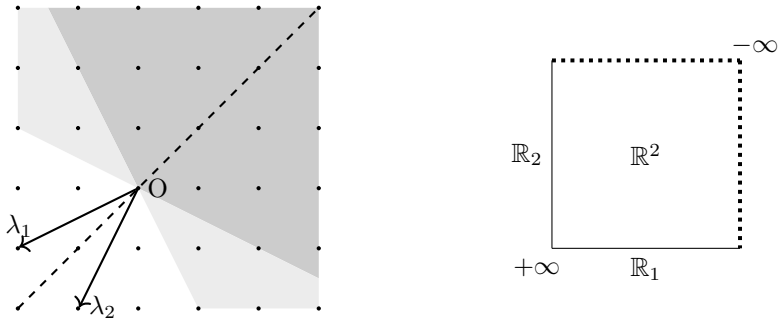
\begin{figure}
	\centering

\begin{tikzpicture}[arrowmid/.style={postaction={decorate},
		decoration={markings,
			mark=at position .6 with {\arrow{latex}}}},]	% Define square corners
 
		\begin{scope}[shift={(1,2.5)},scale=1]
	\coordinate (A) at (0,0);
	\coordinate (B) at (5,0);
	\coordinate (C) at (5,5);
	\coordinate (D) at (0,5);
	
	% Draw square
	%\draw (A) -- (B) -- (C) -- (D) -- cycle;
	
	% Calculate midpoints
	\coordinate (MAB) at ($(A)!0.5!(B)$);
	\coordinate (MBC) at ($(B)!0.5!(C)$);
	\coordinate (MCD) at ($(C)!0.5!(D)$);
	\coordinate (MDA) at ($(D)!0.5!(A)$);
	
	% Calculate center
	\coordinate (Center) at ($(A)!0.5!(C)$);

    \tikzset{midarrow/.style={postaction={decorate, 
    decoration={markings, mark=at position 0.5 with {\arrow{<}}}}}}
	% Draw lines from center to midpoints
	\draw (Center)  -- (MBC) node[midway, below ] {$\RM_1$};
	\draw (Center) -- (MCD) node[midway,  left] {$\RM_2$};
	
	% Mark the center
	\node at (2.3,2.25) {$+\infty$};
	
	\node at (3.75,3.75) {$\RM^2$};
	\node at (5.2,5.2) {$-\infty$};
	
	% Oriented loop around the upper right quadrant
	%\draw[arrowmid, line width=1.5pt] (Center) -- (MCD);
	\draw[dotted, line width=1.5pt] (MCD) -- (C);
	\draw[dotted, line width=1.5pt] (C) -- (MBC);
	%\draw[arrowmid, line width=1.5pt] (MBC) -- (Center);
	\end{scope}

\begin{scope}[shift={(-2,5.8)},scale=0.8]
  \fill[gray!40] (0,0) -- (-1.5, 3) -- (3,3) -- (3, -1.5) -- cycle;
\fill[gray!15] (-2, 1) -- (0,0) -- (-1.5,3) -- (-2, 3) -- cycle;
\fill[gray!15] (1,-2) -- (0,0) -- (3,-1.5) -- (3,-2) -- cycle;

\draw[dashed, thick] (-2,-2) -- (3, 3);
% Draw the boundary of the cone
%\draw[dashed, thick] (1,-2) -- (-1.5, 3);
%\draw[dashed, thick] (-2,1) -- (3,-1.5);
% \draw[thick, black] (0,0) -- (-1.5, 3) node[anchor=south west] {};
%  \draw[thick, black] (0,0) -- (3, -1.5) node[anchor=south west] {};

\draw[thick,->] (0,0) -- (-0.98, -1.98) node[right] {$\lambda_2$};
\draw[thick,->] (0,0) -- (-1.98, -0.98) node[above] {$\lambda_1$};

% Label the origin
\node[right] at (0,0) {O};
\foreach \x in {-2,-1,...,3} {
	\foreach \y in {-2,-1,...,3} {
		\fill (\x, \y) circle (1pt);
	}
}
\end{scope}
\end{tikzpicture}
\caption{Left: A rational mirror-symmetric cone in real-space as an intersection of two half-spaces (in $d>2$ this is a view from the top). Right: CW-structure of $\Omega$.}\label{fig:quarter_space}
\end{figure}

% \textcolor{blue}{This space $\Omega$ is not compact, one needs to attach an additional point to obtain a CW-complex, see Figure~\ref{fig:quarter_space}.}

\medskip

The goal of this section is not to provide a K-theoretic classification of corner states and the boundary maps, which has already been done elsewhere \cite{HayashiCMP2018,HayashiLMP2021,HayashiLMP2019,HayashiCMP2022,Dani,DET}. Instead, our goal is to construct interesting Hamiltonians that have second-order boundary states, i.e. ones that are gapped in the bulk and both half-spaces, but have in-gap boundary states localized at the tip of the cone. Note that this is difficult to do analytically; the few known exactly solvable models tend to have a very special algebraic form. A recent K-theoretic approach improves on that point by developing an extended symbol calculus based on Toeplitz factorizations \cite{HayashiCMP2022}. We show here that it is easy and convenient to use adiabatic quantizations to obtain a model for a mirror-symmetric second-order topological insulator on a quarter-space.

An adiabatic symbol $H\in C(\Omega\times \TM^d, M_N(\CM))$ is a self-adjoint matrix function whose restriction to the $1$-skeleton $\Omega_1= \RM_1\sqcup\RM_2\sqcup\{\pm\infty\}$ is invertible at every point. Thus, here we take $\Omega_\infty = \Omega_1$ and recall that $\Omega = \Omega_2$. In the adiabatic model, the relevant boundary map is $$\partial_1\colon K_1(C(\Omega_1)\otimes C(\TM^d))\to K_{0}(C_0(\Omega_2\setminus\Omega_1)\otimes C(\TM^d)).$$
Note that $\Omega_1\simeq \SM^1$ can be oriented by the outward pointing normal of the $2$-cell and by pulling back along an orientation-preserving homeomorphism $\SM^1\to \Omega_1$ we can reduce to the boundary map
$$\partial_{\SM^1}\colon K_1(C({\SM^1})\otimes C(\TM^d))\to K_{0}(C_0(\RM^2)\otimes C(\TM^d))$$
of Proposition~\ref{prop:boundary_sphere}. Together with Theorem~\ref{theorem:adiabatic_chern_numbers} we can conclude:
\begin{proposition}
\label{prop:corner_loop_correspondence_quarter}
The adiabatic boundary map $$\partial: K_i(C(\Omega_1)\otimes C(\TM^d))\to K_{1-i}(C_0(\Omega_2\setminus \Omega_1)\otimes C(\TM^d))$$ satisfies
\begin{equation}
\label{eq:quarterspace_chern_adiabatic}
\langle \partial (\Psi_i(H^{(0)})), [\Ch_{v_1,...,v_m}\#\Ch_{\SM^1}]\rangle \;=\; \langle \Theta_{1-i}(H^{(0)}), [\Ch_{v_1,...,v_m} \# \Ch_{\RM^2}]\rangle.
\end{equation}
If $H=(H^{(t)})_{t\in [0,1]}$ is the quantization of an adiabatic symbol $H^{(0)}\in M_N(C(\Omega)\rtimes \ZM^d))$ as in Proposition~\ref{prop:quantization} which is gapped at $\Omega_1$ then the corner Chern numbers of the quantization are
\begin{equation}
\label{eq:corner_loop_chern_adiabatic}
\langle \Theta_{1-i}(H^{(1)}), [\Ch^{\Lambda}_{v_1,...,v_m}]\rangle\; =\; \langle \Theta_{1-i}(H^{(0)}), [\Ch_{v_1,...,v_m,\lambda_1,\lambda_2} \# \Ch_{\RM^2}]\rangle.
\end{equation}
\end{proposition}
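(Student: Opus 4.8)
The plan is to establish the two displayed identities separately and in order. The first, \eqref{eq:quarterspace_chern_adiabatic}, is an assertion purely at the commutative endpoint $t=0$ and follows by reducing the boundary map to a single sphere and invoking the bulk-defect correspondence. The second, \eqref{eq:corner_loop_chern_adiabatic}, transports this information from $t=0$ to the lattice endpoint $t=1$ along the continuous field, and is a direct application of Theorem~\ref{theorem:adiabatic_chern_numbers} to the section of defect invariants.

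For \eqref{eq:quarterspace_chern_adiabatic} I would first exploit that $\Omega_2\setminus\Omega_1$ is the single $2$-cell $\RM^2$ attached to $\Omega_1\simeq\SM^1$. Pulling back along an orientation-preserving homeomorphism $\SM^1\to\Omega_1$, with $\SM^1$ oriented by the outward normal of the cell, the commuting diagram preceding Proposition~\ref{prop:boundary_sphere} identifies the restriction of $\partial_1\bigl(\Psi_i(H^{(0)})\bigr)$ to the cell with $\partial_{\SM^1}$ applied to the pullback of $\Psi_i(H^{(0)})$. Specializing Proposition~\ref{prop:boundary_sphere} to $n=2$ then gives
$$\langle \partial_{\SM^1}\Psi_i(H^{(0)}),\ \Ch_{\RM^2}\#\Ch_{v_1,\dots,v_m}\rangle \;=\; \langle \Psi_i(H^{(0)}),\ \Ch_{\SM^1}\#\Ch_{v_1,\dots,v_m}\rangle,$$
and the bulk-defect correspondence \eqref{eq:bbc} identifies $\partial_1\Psi_i(H^{(0)})=\Theta_{1-i}(H^{(0)})$. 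Up to the graded-commutativity of the cup product, whose sign is fixed by the conventions of Section~\ref{sec:chern}, this is exactly \eqref{eq:quarterspace_chern_adiabatic}.

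For \eqref{eq:corner_loop_chern_adiabatic} I would use that the quantization $(H^{(t)})_{t\in[0,1]}$ from Proposition~\ref{prop:quantization} is gapped on $\Omega_1$ for every $t$. Hence the functional calculus $e^{\imath\pi f(H^{(t)})}$ (respectively the graded projection in the chiral case) is a norm-continuous section of unitaries/projections in $C_0(\Omega_2\setminus\Omega_1)\rtimes_{\alpha^{(t)}}\ZM^d\simeq C_0(\RM^2)\rtimes_{t\Lambda^T}\ZM^d$ representing the classes $\Theta_{1-i}(H^{(t)})$; continuity is inherited from the continuity of functional calculus along the field. Feeding this section into Theorem~\ref{theorem:adiabatic_chern_numbers} with $n=2$ and $\lambda_1,\lambda_2$ the columns of $\Lambda$, and then evaluating at $t=1$ so that $t^{-n}=1$ and $t\Lambda=\Lambda$, yields precisely \eqref{eq:corner_loop_chern_adiabatic}.

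The genuinely computational input is packaged in the cited results, so the only point requiring real care is the bookkeeping underlying the first identity. One must verify that the pullback along the attaching map $\SM^1\to\Omega_1$ and the restriction of $\Theta_{1-i}(H^{(0)})$ to the single $2$-cell are compatible with the cup-product Chern cocycles, and that the orientation chosen for $\SM^1\simeq\Omega_1$ makes the normalization in Proposition~\ref{prop:boundary_sphere} hold without spurious signs. Once this is in place, completeness of the Chern cocycles (Proposition~\ref{prop:chern_rational} together with the dualities of Section~\ref{sec:chern}) ensures that the pairings determine the classes, so the two identities pin down the adiabatic boundary map and the corner Chern numbers unambiguously.
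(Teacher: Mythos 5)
Your proof is correct and follows essentially the same route as the paper: the paper's (terse) justification in the text preceding the proposition likewise reduces the adiabatic boundary map to the sphere boundary map of Proposition~\ref{prop:boundary_sphere} via the oriented identification $\Omega_1\simeq\SM^1$, and then invokes Theorem~\ref{theorem:adiabatic_chern_numbers} at $t=1$ exactly as you do. Your extra bookkeeping --- the identification $\partial\Psi_i(H^{(0)})=\Theta_{1-i}(H^{(0)})$ from \eqref{eq:bbc}, and the observation that the $\Omega_1$-gapped quantization yields a continuous section of unitaries/projections representing $\Theta_{1-i}(H^{(t)})$ in $C_0(\RM^2)\rtimes_{t\Lambda^T}\ZM^d$ --- only makes explicit what the paper leaves implicit.
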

%The simplest case is $m=0$, in which case ${\Ch}^\Lambda_\emptyset$ becomes the pairing with the dual trace, i.e. it gives the number (density) of bound states in terms of a three-dimensional Chern number of the adiabatic symbol.

In the following, we assume $d=2$ and that $\lambda_1$ and $\lambda_2$ are related by the diagonal mirror symmetry $(x_1,x_2)\in \RM^2 \mapsto (x_2,x_1)$. As an interesting example, we study a second-order topological insulator protected by a chiral and diagonal mirror symmetry as given in \cite{SchindlerNeupert}: Consider the function $H_B\colon\TM^2 \to M_4(\CM)$ defined as
$$
\begin{aligned}
	H_B(k_1,k_2)\;:=\; & (1+\mu \cos k_1)(\sigma_2 \otimes \sigma_0)+ (1+\mu \cos k_2)(\sigma_2 \otimes \sigma_2) \\
	&-\mu \sin k_1 (\sigma_2 \otimes \sigma_3)+\mu \sin k_2(\sigma_2 \otimes \sigma_1)
\end{aligned}
$$
with the Pauli matrices $\sigma_i$. It has the chiral symmetry $JH_B J=-H_B$ with $J=\sigma_3\otimes \one_2$ and mirror symmetry across the diagonal
\begin{equation}
\label{eq:mirror}
H_B(k_1,k_2)  \;=\; M\, H_B(k_2,k_1)M^*, \qquad M \;=\; \begin{pmatrix}
    0 & 1 & & \\
    1 & 0 & & \\
    &  & 1 & 0 \\
    &  & 0 & -1
\end{pmatrix}.
\end{equation}
For $\mu>1$, this bulk Hamiltonian is gapped and defines a non-trivial class in the $\ZM_2$-equivariant K-theory $K^{\ZM_2}_1(C(\TM^2))$ \cite[Example 4.8]{DET}, where $\ZM_2$ is the mirror symmetry implemented by the operator $M \otimes U$ with $U\colon \ell^2(\ZM^2)\to \ell^2(\ZM^2)$ the unitary induced from the map $\ZM^2\ni (x_1,x_2)\mapsto (x_2,x_1)$. We will now prove the following:
\begin{proposition}
There exists a Hamiltonian $H^{(1)}\in M_4(C(\Omega)\rtimes \ZM^2)$ with the following properties:
\begin{enumerate}
	\item[(i)] $R_{\{\infty\}}(H^{(1)}) = H_B$, the bulk Hamiltonian given above, and $R_{\{-\infty\}}(H^{(1)})$ is a scalar matrix in $M_4(\CM)$.
	\item[(ii)] $\pi_\omega(H^{(1)})$ has a spectral gap for each $\omega\in \Omega_1$.
	\item[(iii)] Each $\pi_\omega(H^{(1)})$ has the chiral symmetry $J \pi_\omega(H^{(1)})J = -\pi_\omega(H^{(1)})$ and the mirror symmetry $$\pi_\omega(H^{(1)}) = (M \otimes U) \pi_\omega(H^{(1)}) (M^*\otimes U^*)$$ 
	\item[(iv)] For each $\omega \in \Omega\setminus \Omega_1$, 
	\begin{equation}
	\label{eq:zero_mode}\langle \pi_\omega(\Theta_{0}(H^{(1)})), {\Ch}^\omega_{\emptyset}\rangle\; =\; \Tr(J\, \mathrm{\Ker}(\pi_\omega(H^{(1)})) \;=\; 1 \mod 2.
	\end{equation}
	Since the kernel projection $\mathrm{\Ker}(\pi_\omega(H^{(1)})$ commutes with $J$, $\pi_\omega(H^{(1)})$ has $0$ as an eigenvalue with odd degeneracy.
\end{enumerate}
\end{proposition}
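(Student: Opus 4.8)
The plan is to realize $H^{(1)}$ as the $t=1$ endpoint of an adiabatic quantization of a judiciously chosen symbol, and then extract the corner index from the adiabatic Chern-number formula \eqref{eq:corner_loop_chern_adiabatic}. First I would construct a self-adjoint adiabatic symbol $H^{(0)}\in M_4(C(\Omega\times\TM^2))$ with $R^{(0)}_{\{+\infty\}}(H^{(0)})=H_B$ and $R^{(0)}_{\{-\infty\}}(H^{(0)})$ equal to a constant (momentum-independent) invertible matrix $\Gamma$ with $J\Gamma J=-\Gamma$; such a $\Gamma$ exists and can be taken mirror-invariant because $J$ and $M$ commute and split $\CM^4$ into joint eigenspaces on which $J$ is balanced. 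Across the $2$-cell one interpolates $H_B\rightsquigarrow\Gamma$ following the cone geometry, with the restriction to each edge $\RM_1,\RM_2$ a \emph{gapped} chiral domain wall. The decisive structural input is that the quadrupole model $H_B$ has vanishing first-order (edge) invariants, i.e. $[H_B]=0$ in $K_1(C(\TM^2))$, so $H_B$ and $\Gamma$ lie in the same component of gapped chiral Hamiltonians and a gapped interpolation exists; making the path on $\RM_2$ the mirror image of that on $\RM_1$ renders $H^{(0)}$ invertible on all of $\Omega_1=\RM_1\sqcup\RM_2\sqcup\{\pm\infty\}$. I would impose the chiral antisymmetry $JH^{(0)}J=-H^{(0)}$ and invariance under the combined $\ZM_2$-action (reflection $\RM_1\leftrightarrow\RM_2$ on $\Omega$, swap $k_1\leftrightarrow k_2$ on $\TM^2$, conjugation by $M$) from the outset.

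Next I would feed this symbol into Proposition~\ref{prop:quantization}, using part~(ii) for the mirror $\ZM_2$-action and the subsequent Remark for the chiral $\phi$-twist, to obtain a continuous self-adjoint section $(H^{(t)})_{t\in[0,1]}$ that is gapped on $\Omega_1$ for all $t$, $\ZM_2$-invariant, and chiral. Setting $H^{(1)}:=\ev_1((H^{(t)}))$, property~(i) follows because $\ev_1$ commutes with the restrictions $R_{\{\pm\infty\}}$ and the bulk fibers carry the trivial $\ZM^2$-action, so $R_{\{\pm\infty\}}(H^{(1)})$ reproduces $H_B$ and the scalar matrix $\Gamma$; property~(ii) is exactly the gap on $\Omega_1$; and property~(iii) holds because the chiral relation and the mirror covariance survive any representation $\pi_\omega$, with the lattice reflection $U$ supplying the $\ZM^2$-part of the mirror operator $M\otimes U$.

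For the corner index I would invoke the adiabatic formula. Combining \eqref{eq:corner_loop_chern_adiabatic} (with $v=\emptyset$, $i=1$), the localization \eqref{eq:main_eq} of Theorem~\ref{th:main} (equivalently Proposition~\ref{prop:trace_localization}), and the bulk--boundary relation $\partial\Psi_1(H^{(0)})=\Theta_0(H^{(0)})$ with \eqref{eq:quarterspace_chern_adiabatic}, one obtains
\[
\langle \pi_\omega(\Theta_0(H^{(1)})), \Ch^\omega_\emptyset\rangle \;=\; \tfrac{1}{\mathrm{Vol}(\RM^2/\ZM^2)}\,\langle \Theta_0(H^{(0)}), \Ch_{\lambda_1,\lambda_2}\#\Ch_{\RM^2}\rangle \;=\; \tfrac{1}{\mathrm{Vol}(\RM^2/\ZM^2)}\,\langle \Psi_1(H^{(0)}|_{\Omega_1}), \Ch_{\lambda_1,\lambda_2}\#\Ch_{\SM^1}\rangle .
\]
The right-hand side is (up to the normalization that makes it integral) the three-dimensional winding number of the Fermi unitary of the gapped chiral loop $H^{(0)}|_{\Omega_1}$ over $\SM^1\times\TM^2$, while the left-hand side is a genuine integer, being a pairing of a class in $K_0(\KM)\simeq\ZM$ with its canonical trace. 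The computation is thus reduced to showing that this integer is odd.

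Finally, the parity is where the mirror symmetry does the real work: the winding is determined by $H^{(0)}|_{\Omega_1}$, and since the arc on $\RM_2$ is forced to be the mirror image of the arc on $\RM_1$, any two admissible mirror-symmetric fillings agreeing at $\pm\infty$ differ by a mirror-symmetric homotopy under which the index changes only by even amounts; its parity is therefore a well-defined $\ZM_2$-invariant of the pair $(H_B,\Gamma)$, equal to the mod-$2$ image of the relative equivariant class, which is nontrivial precisely because $H_B$ defines the nonzero class in $K^{\ZM_2}_1(C(\TM^2))$ \cite[Example~4.8]{DET}. Hence the index is odd. To conclude~(iv), a direct functional-calculus computation identifies $\langle \pi_\omega(\Theta_0(H^{(1)})),\Ch^\omega_\emptyset\rangle=\Tr(J\,\Ker(\pi_\omega(H^{(1)})))$, and since chirality makes $\Ker(\pi_\omega(H^{(1)}))$ commute with $J$, an odd value forces $0$ to be an eigenvalue of odd degeneracy. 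The hard part will be Step~1 together with the parity argument: one must exhibit the explicit mirror-symmetric, edge-gapped interpolation and then rigorously reduce the three-dimensional winding of the symmetric Fermi loop (say by localizing to the mirror-fixed locus $\{\pm\infty\}\times\{k_1=k_2\}$ and grading by the $M$-eigenvalue) to the one-dimensional mirror-graded invariant of $H_B$, while tracking the volume normalization so that the reduced quantity is exactly the integer $\Tr(J\,\Ker)$ rather than a rational multiple of it.
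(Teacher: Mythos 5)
Your overall architecture coincides with the paper's proof: an adiabatic symbol gapped on $\Omega_1$ with $H_B$ at $+\infty$ and a constant chiral, $M$-symmetric matrix at $-\infty$, edges related by the mirror, symmetric quantization via Proposition~\ref{prop:quantization}, the duality chain reducing $\langle\pi_\omega(\Theta_0(H^{(1)})),\Ch^\omega_\emptyset\rangle$ to the three-dimensional Chern number of the Fermi loop over $\SM^1\times\TM^2$, and the final identification with $\Tr(J\,\Ker(\pi_\omega(H^{(1)})))$. However, there is a genuine gap at the decisive step: you never compute the parity. Your mirror-homotopy argument (two symmetric deformations contribute equal windings, so the index changes by even amounts) establishes only that the parity is a \emph{well-defined} invariant of the pair $(H_B,\Gamma)$ --- this is exactly the paper's remark \emph{after} its proof --- but well-definedness does not determine the value; a priori the parity could be $0$. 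Your claim that it is ``nontrivial precisely because $H_B$ defines the nonzero class in $K_1^{\ZM_2}(C(\TM^2))$'' is precisely the higher-order bulk--boundary correspondence of \cite{DET}, a substantial external theorem which you neither prove nor carefully invoke, and which the paper deliberately does not use: the whole point of the example is to exhibit the corner mode by an independent computation. The paper closes this gap by writing an explicit interpolation $(1-t)H_B(k)+t(\sigma_1\otimes\sigma_0)$, checking invertibility along it directly, and then evaluating the resulting three-dimensional winding number to be $1$ (numerically, with a rigorous error bound available since the answer is an integer; the band-crossing/mass-term analysis sketched in Section~\ref{Sec:MultipleCorners} gives an alternative analytic route). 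You yourself flag this reduction as ``the hard part,'' but the proposal leaves it as a plan rather than an argument, so the conclusion $=1\bmod 2$ in (iv) is not established.

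Two secondary points. First, your existence argument for the gapped edge interpolation rests on $[H_B]_1=0$ in $K_1(C(\TM^2))$ plus an implicit stable-range claim (maps $\TM^2\to U(2)$ are in the stable range, so vanishing of the $K_1$ class does give a genuine homotopy); both are plausible but unverified in the proposal, whereas the paper avoids the issue entirely by its explicit path whose invertibility ``can be checked with a bit of effort.'' Second, the identity $\langle\pi_\omega(\Theta_0(H^{(1)})),\Ch^\omega_\emptyset\rangle=\Tr\big(J\,\Ker(\pi_\omega(H^{(1)}))\big)$ is not a purely formal functional-calculus computation: as the paper notes, it uses the nontrivial fact that $0$ is an \emph{isolated} point of the spectrum of the Fredholm operator $\pi_\omega(H^{(1)})$, which lets one choose $f$ in Definition~\ref{def:ktheory} so that the boundary class is represented by the $J$-graded kernel projection. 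This should be stated rather than glossed over.
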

Let us first elaborate on the significance of this result. Using the formalism of higher-order bulk-boundary correspondence \cite{DET} one can prove that, for any Hamiltonian which satisfies (i) to (iii), the mod-2-parity of \eqref{eq:zero_mode} is completely determined by the $\ZM_2$-equivariant $K$-theory class of $H_B$. While we construct one specific example Hamiltonian using adiabatic quantizations, one can therefore conclude that in fact any Hamiltonian that satisfies (i) to (iii) with any $H_B$ that represents the same class in equivariant $K$-theory will also satisfy (iv). This is an example of so-called intrinsic higher-order bulk-boundary correspondence where symmetry-protected bulk invariants do not force the occurrence of edge modes but instead of corner modes (see 
\cite{BenalcazarScience2017, DET,GeierPRB2018,SchindlerSciAdv2018,Trifunovic} for more background on the mathematics and physics of higher-order topological insulators).

\begin{proof}
We construct  a mirror symmetric adiabatic symbol $H\colon \Omega\times \TM^2\to M_4(\CM)$ which is invertible when restricted to $\Omega_1$ and such that $H\rvert_\infty = H_B$. To make the parametrization easier, we identify the $2$-cell $\RM^2\subset \Omega$ now with the square $(0,2)^2$ in such a way that $\RM_1$ and $\RM_2$ are the bottom and left edge as in Figure~\ref{fig:quarter_space}. The Hamiltonian on each segment $\RM_i\simeq (0,2)$ will be specified by a continuous path. Choose first any invertible self-adjoint matrix $S(1)\in M_4(\CM)$ that is both chirally symmetric and $M$-symmetric, i.e. anti-commutes with $J$ and commutes with $M$, and a continuous path $t\in [0,1]\mapsto S(t)$ between $S(0)=\sigma_1\otimes \sigma_0$ and $S(1)$ within the self-adjoint invertible chirally symmetric matrices. Such a path always exists since the space of invertible matrices is connected. On the bottom of the square we choose the symbol $H\colon [0,2]^2\times \TM^2 \mapsto M_4(\CM)$ to satisfy
$$H(t,0, k)\;:=\; \begin{cases}
(1-t)H_B(k) + t (\sigma_1\otimes \sigma_0) & \text{for }t\in [0,1]\\
S(t-1) & \text{for }t\in [1,2)
\end{cases}$$
which interpolates between $H_B$ and the scalar matrix $S(1)$. This path is pointwise chirally symmetric and lies in the invertible operators, as one can check with a bit of effort.  The exact choice of the path $S$ will be irrelevant as no path of scalar matrices can contribute to the Chern number of interest, since its expression involves momentum derivatives. The second component of the 1-skeleton is obtained from the first as follows
$$H(0,t,k_1,k_2) \;:=\; M H(t,0, k_2,k_1) M^*$$
and since $S(1)$ is $M$-symmetric this path also ends at $S(1)$. Therefore we have specified an adiabatic symbol $H\rvert_{\Omega_1}$ which is chirally symmetric and mirror-symmetric. 

Assume now that one has arbitrarily continued the symbol to any continuous function $\Omega$ which is chiral- and mirror-symmetric (e.g. one can use a linear interpolation between $H\rvert_{\Omega_1}$ and $S(1)$).  The precise continuous extension does not matter for our purposes, since the $1$-skeleton determines the relevant Chern number already by Proposition~\ref{prop:corner_loop_correspondence_quarter}. For any adiabatic quantization $(H^{(t)})_{t\in [0,1]}$ of $H$ we then obtain
\begin{align*}
\langle \Theta_{0}(H^{(1)}), {\Ch}^\Lambda_{\emptyset}\rangle&\;=\;\langle \Theta_{1}(H^{(0)}), \Ch_{\lambda_1 , \lambda_2}\#\Ch_{\RM^2}\rangle\\
&\;=\;\langle \Psi_1(H^{(0)}\rvert_{\Omega_1}), \Ch_{\lambda_1 , \lambda_2}\# \Ch_{\SM^1}\rangle \\
& \; =  \mathrm{Det}(\lambda_1,\lambda_2) \langle \Psi_1(H^{(0)}\rvert_{\Omega_1}), \Ch_{\TM^d}\# \Ch_{\SM^1}\rangle.
\end{align*}
The final expression computes the three-dimensional Chern number of the off-diagonal part of $H\rvert_{\Omega_1}$ over $\SM^1\times \TM^2$ and a scaling factor arises if $\lambda_1$, $\lambda_2$ are not orthonormal. In the present case, the easiest way to evaluate this Chern number is numerical evaluation of the integral formula (which can be easily made rigorous with some effort since one can trivially obtain error estimates for the integration of a smooth function on a compact space and the result is known to be an integer), but there are also other ways. In any case, Chern number evaluates to $1$. Assuming that $\lambda_1,\lambda_2$ are oriented like the standard basis, one has $\mathrm{Det}(\lambda_1,\lambda_2)=\mathrm{Vol}(P)$ in the notation of Proposition~\ref{prop:trace_localization}, which means the determinant cancels if we localize to the orbit of some $\omega\in \RM^2$ inside the $2$-cell to finally obtain
$$\langle \pi_\omega(\Theta_{0}(H^{(1)})), {\Ch}^\omega_{\emptyset}\rangle\; =\; \Tr(J \mathrm{\Ker}(\pi_\omega(H^{(1)}))\; =\; \frac{1}{\mathrm{Vol}(P)}\langle \Theta_{0}(H^{(1)}), {\Ch}^\Lambda_{\emptyset}\rangle \;=\; 1.$$
Here $J$ is the chiral symmetry matrix and the first identity used the non-trivial fact that $0$ is an isolated eigenvalue in the spectrum of the Fredholm operator $\pi_\omega(H^{(1)})$, which simplifies the functional calculus in Definition~\ref{def:ktheory} to the kernel projection for the appropriate choice of $f$.
\end{proof}

It is easy to see that the bulk Hamiltonian $H_B$ can at most determine the parity of the number of zero-energy eigenmodes. After all, while keeping the endpoints of the path between $H_B$ and $S(1)$ the same, one could insert a loop which has itself a non-trvial third Chern number. Thus one can construct an adiabatic symbol for whose quantizations the left-hand side of \eqref{eq:zero_mode} is any integer. However, if we enforce the mirror-symmetry then this process must also introduce a loop on the other side of the square, therefore only changing the number at most by an even integer which preserves the parity.

%As mentioned above, in \cite{DET} the existence of corner modes for the given bulk Hamiltonian was derived using a different method: Using constructions of equivariant $K$-theory one can show that if, in our notation here, there exists a mirror-symmetric Hamiltonian that is gapped on $\Omega_1$ and has an odd number of zero-energy corner modes, then any such Hamiltonian which has the same bulk class in equivariant K-theory must also have an odd number of zero-energy corner modes. This is an instance of so-called higher-order bulk-boundary correspondence. For the equivariant K-theory class represented by $H_\infty$ one has odd parity, as was shown by finding an exactly solvable lattice Hamiltonian and computing its corner modes. The method illustrated in this paper complements those algebraic methods very well: It supplies an abundance of Hamiltonians which are gapped at specific asymptotic limits, have prescribed symmetries and computable K-theoretic invariants, hence makes it possible to find higher-order bulk-boundary correspondences without having to perform spectral analysis for Hamiltonians in infinite-dimensional Hilbert spaces.

Let us finally discuss how the adiabatic quantizations look like in real-space. One obtains lattice Hamiltonians with short hopping range which for $\omega\in \Omega\setminus \Omega_1$ interpolate between $H_B$ on the inside of the cone to some scalar matrix on the outside of the cone. This sort of interface model is sometimes inconvenient and one would prefer a model which lives exactly on a quarter-space and has topological zero-modes which are localized to the corner. That can be achieved easily by slightly tuning the adiabatic symbol constructed above to be precisely equal to a constant matrix $S$ outside the third quadrant of the configuration space and using the explicit quantization from the end of Section~\ref{sec-adiabatic}. The resulting lattice Hamiltonians are then of the form
$$\pi_\omega(H^{(1)})\; =\; P^\omega_Q \pi_\omega(H^{(1)}) P^\omega_Q + S (1-P^\omega_Q),$$
where $P^\omega_Q \in \Bb(\ell^2(\ZM^2))$ is the projection to a discrete quarter-space. One can then simply truncate to the invariant subspace $P^\omega_Q$ without affecting number of zero-modes. Similarly, the quantizations from the $1$-cells $\RM_1$, $\RM_2$ result in gapped lattice Hamiltonians that can be truncated to gapped half-space Hamiltonians.
\subsection{A geometry with multiple corners}\label{Sec:MultipleCorners}
In the previous section, we considered an adiabatic model for a quarter-space with a single corner. As emphasized in the recent paper \cite{DET}, it is sometimes advantageous to construct $C^*$-algebras for infinite-volume limits of polyhedra by gluing together $C^*$-algebras for observations on different asymptotic limits of the crystal. The motivation is that many symmetries in crystals, such as inversion symmetry, can leave invariant a finite polyhedron, but never a semi-infinite space such as a half-space or a quarter-space. After gluing together the $C^*$-algebras correctly one can, however, implement the symmetry as an automorphism which exchanges the roles of different corners and thereby unlocks the applicability of equivariant K-theory. In this section, we consider an adiabatic model for one of the simplest examples of this construction, called the infinite square in \cite{DET}, and use it to study hinge modes of a three-dimensional second-order topological insulator.

The idea is to take a crystal of the shape $[-L,L]^{d-2}\times \RM^{d-2}$ which is infinite in all but two directions and consider all geometric features that one would find in the infinite volume limit $L\to \infty$. Such a configuration space needs to contain four $1$-cells corresponding to the $(d-1)$-dimensional faces and four $1$-cells corresponding to the $(d-2)$-dimensional faces.

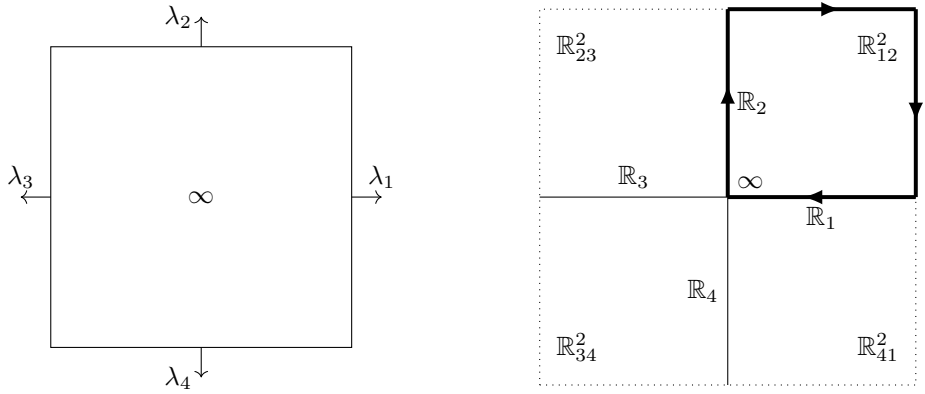
\begin{figure}
	\centering
	\begin{tikzpicture}[arrowmid/.style={postaction={decorate},
			decoration={markings,
				mark=at position .6 with {\arrow{latex}}}},]	% Define square corners
		\begin{scope}[shift={(-2.5,0.5)},scale=0.8]
			\coordinate (A) at (0,0);
			\coordinate (B) at (5,0);
			\coordinate (C) at (5,5);
			\coordinate (D) at (0,5);
			
			% Draw square
			\draw (A) -- (B) -- (C) -- (D) -- cycle;
			
			% Calculate midpoints
			\coordinate (MAB) at ($(A)!0.5!(B)$);
			\coordinate (MBC) at ($(B)!0.5!(C)$);
			\coordinate (MCD) at ($(C)!0.5!(D)$);
			\coordinate (MDA) at ($(D)!0.5!(A)$);
			
			\coordinate (MAB_shifted) at ($(MAB) + (0,-0.5)$);
			\coordinate (MBC_shifted) at ($(MBC) + (0.5,0)$);
			\coordinate (MCD_shifted) at ($(MCD) + (0,0.5)$);
			\coordinate (MDA_shifted) at ($(MDA) + (-0.5,0)$);
			
			\draw[->] (MAB) -- (MAB_shifted);
			\draw[->] (MBC) -- (MBC_shifted);
			\draw[->] (MCD) -- (MCD_shifted);
			\draw[->] (MDA) -- (MDA_shifted);
			
			% Calculate center
			\coordinate (Center) at ($(A)!0.5!(C)$);
			
			% Draw lines from center to midpoints
			%\draw (Center) -- (MAB) node[midway,  left] {$d$};
			%\draw (Center) -- (MBC) node[midway, below ] {$r$};
			%\draw (Center) -- (MCD) node[midway,  right] {$u$};
			%\draw (Center) -- (MDA) node[midway, above] {$l$};
			
			% Mark the center
			\node at (2.5,2.5) {$\infty$};
			
			\node[left] at (MAB_shifted) {$\lambda_4$};
			\node[above] at (MBC_shifted) {$\lambda_1$};
			\node[left] at (MCD_shifted) {$\lambda_2$};
			\node[above] at (MDA_shifted) {$\lambda_3$};
			
			%\node at (0.5,0.5) {${34}$};
			%\node at (4.5,0.5) {${41}$};
			%\node at (4.5,4.5) {${12}$};
			%\node at (0.5,4.5) {${23}$};
			% Points in the center of each quarter
			
		\end{scope}
		
		\begin{scope}[shift={(4,0)},scale=1]
			\coordinate (A) at (0,0);
			\coordinate (B) at (5,0);
			\coordinate (C) at (5,5);
			\coordinate (D) at (0,5);
			
			% Draw square
			\draw[dotted] (A) -- (B) -- (C) -- (D) -- cycle;
			
			% Calculate midpoints
			\coordinate (MAB) at ($(A)!0.5!(B)$);
			\coordinate (MBC) at ($(B)!0.5!(C)$);
			\coordinate (MCD) at ($(C)!0.5!(D)$);
			\coordinate (MDA) at ($(D)!0.5!(A)$);
			
			% Calculate center
			\coordinate (Center) at ($(A)!0.5!(C)$);
			
			% Draw lines from center to midpoints
			\draw (Center) -- (MAB) node[midway,  left] {$\RM_4$};
			\draw (Center) -- (MBC) node[midway, below ] {$\RM_1$};
			\draw (Center) -- (MCD) node[midway,  right] {$\RM_2$};
			\draw (Center) -- (MDA) node[midway, above] {$\RM_3$};
			
			% Mark the center
			\node at (2.8,2.7) {$\infty$};
			
			\node at (0.5,0.5) {$\RM^2_{34}$};
			\node at (4.5,0.5) {$\RM^2_{41}$};
			\node at (4.5,4.5) {$\RM^2_{12}$};
			\node at (0.5,4.5) {$\RM^2_{23}$};
			% Points in the center of each quarter
			
			% Oriented loop around the upper right quadrant
			\draw[arrowmid, line width=1.5pt] (Center) -- (MCD);
			\draw[arrowmid, line width=1.5pt] (MCD) -- (C);
			\draw[arrowmid, line width=1.5pt] (C) -- (MBC);
			\draw[arrowmid, line width=1.5pt] (MBC) -- (Center);
		\end{scope}
	\end{tikzpicture}
	\caption{Visualization of the flat CW-complex $\Omega$ (right) corresponding to infinite-volume limits of a square slab (top view on the left). The exterior square (dotted) is identified with the infinite point of the one-point compactification of $\Omega$ and the loop marks the boundary of $\RM^2_{12}$.}\label{fig:square_adiabatic}
\end{figure}

We therefore use four $1$-cells $(\RM_a)_{a\in \ZM_4}$ with normal vectors $\lambda_a$ pointing in the four different cardinal directions to represent the four possible half-spaces. We also have four $2$-cells, $(\RM^2_{a,a+1})_{a\in \ZM_4}$, which models the corners at the intersections of the respective half-spaces.  The configuration space is then 
\begin{equation}
	\label{eq:config_square}
	\Omega\;=\; \Omega_2 \;=\; \bigsqcup_{a\in \ZM_4} \RM^2_{a,a+1} \bigsqcup_{i\in \ZM_4} \RM_{a} \sqcup \{ +\infty\} \sqcup \{- \infty\}.
\end{equation}
Its topology shall be identified with $\Omega \simeq(-1,1)^2\sqcup\{*\}$ the one-point compactification of $(-1,1)^2$, in such a way that each $\RM^2_{a,a+1}$ becomes a quadrant, the $\RM_i$ are boundary lines of the quadrants, the point $+\infty$ becomes the center of the square and $-\infty$  the outer border $\{*\}$,  see Figure~\ref{fig:square_adiabatic}. This configuration space is a natural push-out of four configuration spaces for different quarter-spaces as they were considered in Section~\ref{sec:quarter}. The $1$-skeleton is
$$\Omega_1\; = \;\bigsqcup_{a\in \ZM_4}  \RM_{a} \sqcup  \{+ \infty\} \sqcup \{ -\infty\}$$
and we consider $\Omega_\infty=\Omega_1$ as the asymptotic part. We are therefore looking for models that are gapped when considered on any of the four half-spaces, but possibly gapless when considered on a quarter-space.  % Note that due to the four $2$-cells none of the representations $(\pi_\omega)_{\omega \in \Omega}$ will be faithful on its own. Just as described in \cite{DET}, one should think of the observable algebras as ones that compile the observations of multiple distinct observers who sit at different boundary points of a crystal, which is so incomprehensibly large that any one of them can only ever see a single corner, face or hinge. That they are in fact observing one consistent infinite crystal instead of four independent quarter-spaces is ensured whenever the dynamics are derived from a single underlying Hamiltonian $H\in C(\Omega)\rtimes\ZM^d$ which is automatically constrained by continuity conditions at the bulk and shared half-spaces. 

Constructing the configuration space in this manner allows for the implementation of crystalline symmetries. For example, if $\lambda_1=-\lambda_3$ and $\lambda_2=-\lambda_4$, then the inversion symmetry, which acts as $x\in \RM^d \mapsto -x$, is implemented as a $\ZM_2$-action on $\Omega$ mapping $\RM^2_{a,a+1}$ bijectively to $\RM^2_{a+2,a+3}$ and $\RM_{a}$ to $\RM_{a+2}$. 

For boundary invariants we have to distinguish between the lattice models on the four sectors $\omega \in \RM^2_{a,a+1}$ which are quarter-spaces in different directions. In each case, the adiabatic boundary map relates Chern numbers on the boundary $\partial \RM^2_{a,a+1}\simeq \SM^1$ to those on the interior:

\begin{proposition}
	\label{prop:corner_loop_correspondence}
	% Let $H=(H^{(t)})_{t\in [0,1]}$ be the quantization of an adiabatic symbol $H^{(0)}\in M_N(C(\Omega)\rtimes \ZM^d))$ which is gapped at $\Omega_1$.
	
	The adiabatic boundary map $$\partial\colon K_i(C(\Omega_1)\otimes C(\TM^d))\to K_{1-i}(C_0(\Omega\setminus \Omega_1)\otimes C(\TM^d))$$ is given by the direct sum of four boundary maps of the form
	$$\partial_{a}\colon K_i(C(\partial \RM^2_{a,a+1})\otimes C(\TM^d))\to K_{1-i}(C_0(\RM^2_{a,a+1})\otimes C(\TM^d)).$$ 
	The latter is determined by the duality 
	\begin{equation}
			\label{eq:corner_chern_adiabatic}
			\begin{aligned}
		& \langle \partial_{a} (\Psi_i(H^{(0)})\rvert_{\partial \RM^2_{a,a+1}}), [\Ch_{v_1,...,v_m}\#\Ch_{\SM^1}]\rangle \; \\
		& \qquad \qquad =\; \langle \Theta_{1-i}(H^{(0)})\rvert_{\RM^2_{a,a+1}}, [\Ch_{v_1,...,v_m} \# \Ch_{\RM^2_{a,a+1}}]\rangle.
		\end{aligned}
	\end{equation}
	For the defect Chern numbers of the adiabatic quantization of a symbol $H^{(0)}$ which is gapped at $\Omega_1$ one therefore has 
	\begin{equation}
		\label{eq:loop_chern_adiabatic}
		\begin{aligned}
		& \langle \Theta_{1-i}(H^{(1)})\rvert_{\RM^2_{a,a+1}}, [\Ch^{\Lambda_{a,a+1}}_{v_1,...,v_m}]\rangle\; \\
		& \qquad \qquad =\; \langle \Theta_{1-i}(H^{(0)})\rvert_{\RM^2_{a,a+1}}, [\Ch_{v_1,...,v_m,\lambda_a,\lambda_b} \# \Ch_{\RM^2_{a,a+1}}]\rangle.
		\end{aligned}
	\end{equation}
\end{proposition}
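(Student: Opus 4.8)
The plan is to follow the single-corner argument of Proposition~\ref{prop:corner_loop_correspondence_quarter} essentially verbatim, now applied independently to each of the four $2$-cells, and to organize the outcome through a direct-sum decomposition. First I would observe that since $\Omega_2\setminus\Omega_1=\bigsqcup_{a\in\ZM_4}\RM^2_{a,a+1}$ is a disjoint union of open cells, the ideal in the skeleton extension splits as $C_0(\Omega_2\setminus\Omega_1)\otimes C(\TM^d)\simeq\bigoplus_{a}C_0(\RM^2_{a,a+1})\otimes C(\TM^d)$, so that
$$K_{1-i}(C_0(\Omega_2\setminus\Omega_1)\otimes C(\TM^d))\;\simeq\;\bigoplus_{a\in\ZM_4}K_{1-i}(C_0(\RM^2_{a,a+1})\otimes C(\TM^d)).$$
The adiabatic boundary map $\partial$ then decomposes into the components $\partial_a$ obtained by post-composing with projection onto the $a$-th summand, which gives the asserted direct-sum form.

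Next, for each fixed corner I would invoke the CW-structure. The boundary of the $2$-cell $\RM^2_{a,a+1}$ is the loop passing through $+\infty$, the edge $\RM_a$, the point $-\infty$, the edge $\RM_{a+1}$ and back; this is homeomorphic to $\SM^1$, oriented by the outward normal of the quadrant. Pulling back the restriction $C(\Omega_1)\to C(\partial\RM^2_{a,a+1})$ along the attaching map $\SM^1\to\Omega_1$, exactly as in the three-row diagram of Section~\ref{sec:boundary_maps}, produces a commuting square relating $\partial_a$ to the single-sphere boundary map $\partial_{\SM^1}$. Proposition~\ref{prop:boundary_sphere} then yields \eqref{eq:corner_chern_adiabatic} at once, since it characterizes $\partial_{\SM^1}$ through the duality between the cup cocycles $\Ch_{v_1,\dots,v_m}\#\Ch_{\SM^1}$ on the boundary and $\Ch_{v_1,\dots,v_m}\#\Ch_{\RM^2}$ on the interior; here $\Psi_i(H^{(0)})$ is restricted to $\partial\RM^2_{a,a+1}$ and $\Theta_{1-i}(H^{(0)})\rvert_{\RM^2_{a,a+1}}$ is viewed in the $a$-th summand above.

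To establish \eqref{eq:loop_chern_adiabatic} I would then apply Theorem~\ref{theorem:adiabatic_chern_numbers} with $n=2$ and $\Lambda=\Lambda_{a,a+1}=\begin{pmatrix}\lambda_a&\lambda_{a+1}\end{pmatrix}$ to the continuous section $\xi_t=\Theta_{1-i}(H^{(t)})\rvert_{\RM^2_{a,a+1}}$, which by Proposition~\ref{prop:quantization} is a continuous section of unitaries or projections defining classes in $K_{1-i}(C_0(\RM^2_{a,a+1})\rtimes_{t\Lambda^T}\ZM^d)$. Evaluating \eqref{eq:adiabatic_chern} at $t=1$ relates the lattice defect Chern number $\langle\Theta_{1-i}(H^{(1)})\rvert_{\RM^2_{a,a+1}},\Ch^{\Lambda_{a,a+1}}_{v_1,\dots,v_m}\rangle$ to the symbol pairing $\langle\Theta_{1-i}(H^{(0)})\rvert_{\RM^2_{a,a+1}},\Ch_{v_1,\dots,v_m,\lambda_a,\lambda_{a+1}}\#\Ch_{\RM^2}\rangle$, which is precisely \eqref{eq:loop_chern_adiabatic}.

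The step I expect to require the most care is the simultaneous bookkeeping of orientations and directions. One must check that the boundary loop of each quadrant $\RM^2_{a,a+1}$ is identified with $\SM^1$ carrying a consistent outward orientation and that the affine directions are ordered as $\lambda_a,\lambda_{a+1}$, so that the sign conventions of Proposition~\ref{prop:boundary_sphere} and Theorem~\ref{theorem:adiabatic_chern_numbers} agree across all four corners at once. Because $\partial$ sees each open $2$-cell only through its own attaching map, the edges shared between adjacent cells cause no interference, and the four corner computations are genuinely independent; the only novelty relative to Proposition~\ref{prop:corner_loop_correspondence_quarter} is therefore the reassembly into the direct sum.
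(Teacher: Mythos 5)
Your proposal is correct and matches the paper's (implicit) argument: the paper states this proposition without a separate proof precisely because it is the cell-wise application of Proposition~\ref{prop:boundary_sphere} via the attaching maps of Section~\ref{sec:boundary_maps}, combined with Theorem~\ref{theorem:adiabatic_chern_numbers} at $t=1$ for $\Lambda_{a,a+1}=\begin{pmatrix}\lambda_a & \lambda_{a+1}\end{pmatrix}$, exactly as in the single-corner Proposition~\ref{prop:corner_loop_correspondence_quarter}. Your additional observations — the clopen direct-sum splitting of $C_0(\Omega_2\setminus\Omega_1)\otimes C(\TM^d)$, the invariance of each cell under the affine action, and the orientation bookkeeping across the four boundary loops — are precisely the points that make the reassembly work.
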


We will now give an example which motivates the consideration of this configuration space. Assume $d=3$ and that $\lambda_1,\lambda_2$ are the standard basis vectors $e_1,e_2$ and that $\lambda_3$, $\lambda_4$ are $-e_1$ and $-e_2$ respectively. This makes the infinite square $\Omega$ invariant under all point symmetries of an actual square in a way that is compatible with the $\RM^3$-action, i.e. they will extend to automorphisms of the crossed product algebras as well.

We want to quantize the symbol $H_B\in C(\TM^3)\otimes M_2(\CM)$ given by 
\begin{equation}
	H_B(k)\; =\; \sum_{i=1}^3 \Gamma_i \sin(k_i)+ \Gamma_0 \otimes \Big(2 +\sum_{i=1}^3 \cos(k_i)\Big ),
\end{equation}
where $\Gamma_0=\one_2\otimes \sigma_3$, $\Gamma_1= \sigma_3\otimes \sigma_1$, $\Gamma_2=\one_2\otimes\sigma_2$, $\Gamma_3=\sigma_2\otimes \sigma_1$ and $\sigma_i$ stand for the Pauli matrices. It is gapped and satisfies the inversion symmetry
$$H_B(k)\; =\; \Gamma_0 H_B(-k) \Gamma_0.$$
This model is a slight modification of the model in \cite{DET} which in turn was adapted from \cite{HughesPRB2011}. Its main feature is that its Fermi projection represents a non-trivial class in the equivariant K-theory $K^{\ZM_2}_0(C(\TM^3))$ protected by the $\ZM_2$-inversion symmetry, but as a class in $K_0(C(\TM^3))$ all its Chern numbers vanish. It represents a second-order topological insulator protected by inversion symmetry.

\begin{proposition}
	There exists a Hamiltonian $H^{(1)}\in M_4(C(\Omega)\rtimes \ZM^3)$ with the following properties:
	\begin{enumerate}
		\item[(i)] $R_{\{\infty\}}(H^{(1)}) = H_B$,  the bulk Hamiltonian given above, and $R_{\{-\infty\}}(H^{(1)})$ is a scalar matrix in $M_4(\CM)$.
		\item[(ii)] $\pi_\omega(H^{(1)})$ has a spectral gap for each $\omega\in \Omega_1$.
		\item[(iii)] One has the inversion symmetry
		$$ \pi_{\sigma(\omega)}(H^{(1)})\;=\;(\Gamma_0 \otimes I)\pi_\omega(H^{(1)}) (\Gamma_0\otimes I^*)$$
		where $\sigma:\Omega\to \Omega$ is an involutive homeomorphism exchanging $\RM^2_{a,a+1}$ and $\RM^2_{a+2,a+3}$ and $I\colon\ell^2(\ZM^3)\to \ell^2(\ZM^3)$ implements the inversion symmetry $\ZM^3\ni q\mapsto -q$.
		\item[(iv)] For any $\omega_1 \in \RM^2_{12}$ and $\omega_2 \in \RM^2_{23}$ one has
		\begin{equation}
			\label{eq:hinge_modes}\langle \pi_{\omega_1}(\Theta_{0}(H^{(1)})), {\Ch}^{\omega_1}_{e_3}\rangle\;  + \langle \pi_{\omega_2}(\Theta_{0}(H^{(1)})), {\Ch}^{\omega_2}_{e_3}\rangle \;=\; 1 \mod 2.
		\end{equation}
	\end{enumerate}
\end{proposition}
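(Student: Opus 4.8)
The plan is to reproduce, in this richer three-dimensional and inversion-symmetric setting, the three-step scheme of the quarter-space construction: build an inversion-symmetric adiabatic symbol that is invertible on the one-skeleton, quantize it, and then extract the hinge invariants from the adiabatic boundary map. First I would prescribe the symbol $H\colon \Omega\times\TM^3\to M_4(\CM)$ on $\Omega_1$ by setting $H|_{+\infty}=H_B$, $H|_{-\infty}=S$ for a fixed invertible scalar matrix $S$, and choosing on the first face $\RM_1$ a continuous path of invertible self-adjoint matrices interpolating $H_B$ and $S$. Such a gapped path along each half-space exists precisely because every ordinary Chern number of $H_B$ vanishes, so $H_B$ is trivial as a first-order invariant and can be continuously deformed to a scalar across a half-space. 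The remaining faces are then fixed by inversion: using the involution $\sigma$ exchanging $\RM_a\leftrightarrow\RM_{a+2}$ together with the relation $H(\sigma(\omega),k)=\Gamma_0\,H(\omega,-k)\,\Gamma_0$, the path on $\RM_3$ (resp.\ $\RM_4$) is the inversion image of the path on $\RM_1$ (resp.\ $\RM_2$), the endpoints matching at $\pm\infty$ since $S$ and $H_B$ are inversion-invariant. Extending $H$ arbitrarily but $\sigma$-equivariantly over the two-cells yields an adiabatic symbol gapped on $\Omega_1$.

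Next I would quantize using Proposition~\ref{prop:quantization} with $G$ the group generated by the inversion action (and, if one wants to keep the chiral grading $\Gamma_5$, the $\phi$-twisted version noted in the Remark for the anti-symmetry). This produces a continuous self-adjoint section $(H^{(t)})_{t\in[0,1]}$ which is gapped at $\Omega_1$ for all $t$ and $\sigma$-invariant. Property (i) holds by construction of $H$ at $\pm\infty$; property (ii) is the gap at $\Omega_1$ guaranteed by the proposition; and property (iii) is the preserved inversion symmetry, transported to the representations $\pi_\omega$ via the covariance relation.

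For (iv) I would compute the two hinge invariants through the adiabatic boundary map. Applying Proposition~\ref{prop:corner_loop_correspondence} separately on the cells $\RM^2_{12}$ and $\RM^2_{23}$ with $v=e_3$ reduces the defect Chern number over each cell to a pairing of the restricted symbol $H^{(0)}|_{\partial\RM^2_{a,a+1}}$ with $\Ch_{e_3,\lambda_a,\lambda_{a+1}}\#\Ch_{\RM^2_{a,a+1}}$, i.e.\ a winding of the Fermi data of $H^{(0)}|_{\Omega_1}$ over the loop $\partial\RM^2_{a,a+1}\simeq\SM^1$ in momentum space. Proposition~\ref{prop:trace_localization} then localizes this to a single orbit, the factor $\mathrm{Vol}(P)=\mathrm{Det}(\lambda_a,\lambda_{a+1})$ cancelling against the normalization exactly as in the quarter-space computation, so that $\langle\pi_\omega(\Theta_0(H^{(1)})),\Ch^\omega_{e_3}\rangle=n_{a,a+1}$ is the integer winding number. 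For the explicit construction I would evaluate the resulting integral over $\SM^1\times\TM^3$ (numerically, with the standard rigorous error bounds since the integrand is smooth and the value is known to be an integer) and find $n_{12}+n_{23}=1$.

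The main obstacle is to show that this value is not an artifact of the chosen extension but that its \emph{parity} is intrinsic. Here the inversion symmetry is essential: over the full square the four windings telescope, and since $\sigma$ reverses $e_3$ and carries $\RM^2_{12}\mapsto\RM^2_{34}$, $\RM^2_{23}\mapsto\RM^2_{41}$ with opposite orientation, the four-corner sum vanishes identically and carries no information by itself. What is rigid is the parity of $n_{12}+n_{23}$ over a fundamental domain of the inversion action: any $\sigma$-equivariant modification of the symbol must insert a compensating loop on the $\sigma$-related cell, so it can change $n_{12}+n_{23}$ only by an even integer. This is exactly the intrinsic higher-order bulk-defect correspondence of \cite{DET}, which identifies this parity with the nontrivial class of $H_B$ in $K_0^{\ZM_2}(C(\TM^3))$; invoking it shows that every Hamiltonian satisfying (i)--(iii) realizes the same odd value, completing (iv). The delicate points I expect to work through carefully are the bookkeeping of orientations around each loop, the $\Gamma_0$-twist of the grading under inversion, and confirming that the localization constant and the sign factors of Theorem~\ref{theorem:adiabatic_chern_numbers} combine to give precisely the representative $1$ rather than another odd integer.
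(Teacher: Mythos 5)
Your overall architecture for the construction matches the paper's: a symbol gapped on the one-skeleton, equivariant quantization via Proposition~\ref{prop:quantization}, and then Proposition~\ref{prop:corner_loop_correspondence} combined with Proposition~\ref{prop:trace_localization} to localize the hinge invariants; your treatment of (i)--(iii) is fine (and the stable-range argument for the existence of a gapped path does work, since rank-two projections over $\TM^3$ are determined by their $K_0$-class). The genuine gap is in (iv): you never actually determine the parity. Your interpolating path on $\RM_1$ is produced by an abstract homotopy argument, so there is no concrete function to feed into the integral over $\SM^1\times\TM^3$; the announced ``numerical evaluation with rigorous error bounds'' cannot even be set up. Your fallback --- the observation that a $\sigma$-equivariant modification shifts $n_{12}+n_{23}$ by an even integer, plus a citation of \cite{DET} --- shows at best that the parity is an invariant of the symmetric boundary data; it does not tell you whether that invariant is $0$ or $1$. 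In particular, nontriviality of the class of $H_B$ in $K^{\ZM_2}_0(C(\TM^3))$ does not by itself force odd parity (a priori the correspondence of \cite{DET} could send a nontrivial class to even parity), so this step either remains unjustified or defers the entire content of (iv) to external machinery, in which case a computation identifying the value is still owed somewhere.

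What the paper supplies, and what your proposal is missing, is one \emph{explicit} representative for which the loop Chern numbers can be evaluated. On each face $\RM_a\simeq(0,2)$ it takes, for $t\in[0,1]$,
$$H(t,k)\;=\;H_B(k)\,-\,t\sum_{i=1}^3 \Gamma_0\cos(k_i)\,+\,2\mu t(1-t)\,M_a,\qquad M_a=s_a\,(\sigma_1\otimes\sigma_1),\quad s_a\in\{\pm 1\},$$
followed on $[1,2]$ by a straight-line deformation of the endpoint to the scalar matrix $\Gamma_0$. Since $\sigma_1\otimes\sigma_1$ anticommutes with all of $\Gamma_0,\dots,\Gamma_3$, the square of this expression is a scalar matrix bounded below by $\big(2\mu t(1-t)\big)^2$ plus the (gapped) endpoint contributions, so invertibility along the whole path is a one-line algebraic check --- this replaces your abstract existence argument and is what makes the example computable. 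The loop Chern numbers then come out to $\frac{1}{2}(s_a-s_{a+1})$, either by direct numerical integration of a now-explicit smooth integrand, or analytically by tracking how the single band crossing of the $\mu=0$ path splits when the mass term $M_a$ is switched on. Finally, because $\Gamma_0 M_a\Gamma_0=-M_a$, inversion symmetry forces $s_{a+2}=-s_a$, whence
$$n_{12}+n_{23}\;=\;\tfrac{1}{2}(s_1-s_2)+\tfrac{1}{2}(s_2+s_1)\;=\;s_1\;=\;\pm 1,$$
which is the odd value claimed in (iv). Your even/odd stability remark is then exactly the right thing to say \emph{afterwards} (it is how the paper explains independence of the choices), but it cannot substitute for this one explicit computation.
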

In real-space, each $\pi_\omega(H^{(1)})$ for $\omega\in \RM^2_{a,a+1}$ interpolates between $H_B$ on the interior of the quarter-space spanned by $\lambda_a$, $\lambda_{a+1}$ and a fixed scalar matrix on the outside. When (iii) holds then the inversion symmetry is preserved globally in the sense that inversion relates Hamiltonians on any two opposing quarter- respectively half-spaces. The statement (iv) then means in real-space that for at least two opposing ones of the four possible quarter-spaces in different directions the lattice Hamiltonian $\pi_\omega(H^{(1)})$ must exhibit non-trivial quantized hinge currents flowing in the direction $e_3$ orthogonal to the quarter-space.

Importantly, if the halfspace models are gapped as in (ii) then the parity in (iv) is by the mechanism of higher-order bulk-boundary correspondence \cite{DET} determined entirely by the $\ZM_2$-equivariant K-theory class of the bulk Hamiltonian $H_B$, thus it is in fact independent of the specific example. This K-theoretic obstruction which leads to a higher-order bulk-boundary correspondence is only possible to study by considering an extended configuration space where the crystalline symmetry can be implemented and therefore motivated the construction. Note that in real-space the lattice Hamiltonians from the four $2$-cells are only loosely connected by the continuity condition and symmetry, however, when one considers large finite-volume approximations on a large slab $[-L,L]^3\cap \ZM^3$ with periodic boundary conditions in the $e_3$-direction and open boundary conditions in $e_1$ and $e_2$ all four quarter-space Hamiltonians effectively become combined into a single Hamiltonian whose spectrum and boundary modes are a mixture of those of the individual quarter-space Hamiltonians.

\begin{proof}
As before, we construct an adiabatic symbol which is invertible at $\Omega_1$. In $\Omega_1$ there are now four path segments and we choose the Hamiltonian on each of them in the same form by identifying each $1$-cell $\RM_{a}\sim (0,2)$ with an open interval, with $0$ representing the point $\infty \in \Omega_1$ and $2$ the point $-\infty \in \Omega_1$. On the interval $[0,1]$ the interpolating Hamiltonian shall be the path
\begin{equation}
\label{eq:path}
(t,k)\in [0,1]\times \TM^3 \mapsto H(t,k)\;:=\; H_B(k) -  t \sum_{i=1}^3 \Gamma_0 \cos(k_i) + 2 \mu t(1-t) M_a
\end{equation}
i.e. we homotopically remove the cosine terms but add a perturbation $M_a$ depending on the line segment $a\in \ZM_4$. Here we use $M_a=s_a (\sigma_1\otimes \sigma_1)$ with $s_a\in \{-1,1\}$ which keeps the gap open for any $\mu>0$. This is rather easily checked after noting that the entire expression squares to a diagonal matrix. On the interval $[1,2]$ we then deform the endpoint of the first path to a scalar matrix:
$$H(k,t)\;=\; (2-t) H(k,1) + (t-1)\Gamma_0.$$
Those four paths define a single consistent gapped adiabatic symbol $H\rvert_{\Omega_1}$ for any choice of signs $s_a$.

It is a simple exercise to compute the four loop Chern numbers numerically with high accuracy which gives
\begin{equation}
	\label{eq:corner_chern}
	\langle \Psi_0(H\rvert_{\partial \RM^2_{a,a+1}}), [\Ch_{e_3, \lambda_i,\lambda_{i+1}}\# \Ch_{\partial \RM^2_{a,a+1}}]\rangle \;=\;
	\frac{1}{2}(s_a-s_{a+1})
\end{equation}
since any of the path segments ends up contributing $\pm s_a \frac{1}{2}$ to the integral formula depending on the orientation. Enforcing the inversion symmetry one has $s_a=-s_{a+2}$ and thus the right-hand side of \eqref{eq:hinge_modes} is equal to $s_a$ which is an odd integer.

Extending the symbol symmetrically from $\Omega_1$ to $\Omega$ quantizing completes the proof.
\end{proof}

Let us also sketch a simple way to derive the four-dimensional Chern numbers in a more topological way. Without the symmetry-breaking mass term, i.e. for $\mu=0$, the spectral gap closes along the path \eqref{eq:path} at exactly one point in the four-dimensional phase space, namely at $(t,k_1,k_2,k_3)=(\frac{1}{3},0,0,0)$ where one finds a four-dimensional band-crossing point with linear dispersion. Expanding around this point the Hamiltonian takes to linear order the form
$$H(\frac{1}{3}+\tilde{t},k) \approx k_1 \Gamma_1 + k_2 \Gamma_2 + k_3 \Gamma_3 + 3 \Gamma_0 \tilde{t}+ s_a\frac{4}{9} \Gamma_5 \mu$$
with the matrices $\Gamma_i$ forming a four-dimensional irreducible representation of the Clifford algebra with five generators. For small $\mu$ a large contribution to the Chern number integral comes from a small ball around this point. The limit of this contribution can be calculated analytically from the above  linear expansion of the Hamiltonian and it converges to $\frac{1}{2}s_a$ for $\mu\downarrow 0$ while the contribution of the remainder of $[0,1]\times \TM^3$ without the small ball is continuous at $\mu=0$ and thus in particular does not depend on $s_a$. When one opens the gap using the mass term, the different sign choices $s_a=-1$ and $s_a=1$ therefore lead to a difference of $1$ in the value of the Chern number integral performed over the path. This observation allows us to reduce the computation of \eqref{eq:corner_chern} to the case $(s_1,s_2,s_3,s_4)=(1,1,1,1)$ for which the Chern numbers of all four loops are trivially zero due to cancellation between the paths for $a$ and $a+1$ which contribute with opposite orientations. For similar examples how one can derive Chern numbers of topological insulators from analyzing how band-crossings split under addition of a mass term let us point towards \cite{Bel95}\cite[2.3.3]{PSbook} and especially \cite[Section VI]{SSt2} which contains a more detailed analytical version of the above argument in an extremely similar setting.

The same strategy can be used to prove the existence of non-trivial second-order boundary states for different symmetry classes. Particularly interesting are symmetries that mix rotations with time-reversal such as $C_2T$- and $C_4T$-symmetry, which are time-reversal composed with rotation by $180$ respectively $90$ degrees. Both of those operations map the infinite square to itself and there are equivariant bulk K-theory classes that are topologically forced to have non-trivial hinge currents. In \cite{DET} one can find example Hamiltonians and one can compute hinge Chern numbers for quarter-space realizations of them as above, in fact, one can use formally the same Hamiltonian as above and just needs to substitute different matrices $\Gamma_0,\Gamma_1,\Gamma_2,\Gamma_3$ and $M_a$. The adiabatic construction also gives some hints as to the topological origin of bulk-hinge correspondence: The equivariant K-theory class at the point $\infty$ must result in some topological obstruction which makes it impossible to extend it to a gapped symmetric symbol on $\Omega_1\times \TM^3$ which does not have non-vanishing Chern numbers along the outer loops. Using the adiabatic picture it is thus much easier to connect to obstructions from equivariant topology than in the full non-commutative operator-algebraic version of the K-theoretic boundary maps, however, this is still outside the scope of the present paper and we will need to leave this to future investigations.

\appendix

\section{K-theory and cyclic cohomology}
\label{app:ktheory}

Let $A$ be a local $C^*$-algebra, i.e. a $*$-algebra which is dense in a $C^*$-algebra $\Aa$ and closed under its holomorphic functional calculus \cite{Bla}. If $A$ is unital then the group $K_0(A)$ is the Grothendieck group generated by homotopy equivalence classes of matrix-valued projections in $\lim_{N\to \infty} M_N(A)$ with the direct sum. Similarly, $K_1$ is the Grothendieck group generated by homotopy equivalence classes of matrix-valued unitaries in $\lim_{N\to \infty} M_N(A)$. For non-unital $A$ one defines $K_i(A)=\Ker(K_i(A^\sim)\to K_i(\CM))$ with $A^\sim$ the minimal unitization. Higher K-groups can be defined by suspensions, but since they satisfy Bott periodicity, we will always set $K_i(A)=K_{i\,\mathrm{mod}\,2}(A)$ as a definition.

It is often more convenient to describe K-theoretic invariants in terms of numerical invariants and one of the prime methods for that is the pairing with cyclic cohomology \cite{Connes94}. A cyclic $m$-cocycle over an algebra $A$ is an $(m+1)$-linear functional $\varphi:A^{m+1}\to \CM$ which is cyclic
$$
\varphi(a_0,\ldots ,a_m)\;=\;(-1)^m \varphi(a_1,\ldots ,a_{m},a_0)
$$
and closed $b\varphi=0$ w.r.t. the Hochschild boundary operator $b$ defined by
\begin{align*}
(b\varphi)(a_0,\ldots ,a_{m+1})\;=\; & \sum_{j=0}^n (-1)^j \varphi(a_0, \ldots , a_j a_{j+1},\ldots , a_{m+1})
\\
& \;+\; (-1)^{m+1} \varphi(a_{m+1} a_0, a_1, \ldots , a_{m}).
\end{align*}
The cohomology classes of cyclic $m$-cocycles are denoted $HC^m(A)$ and there is a numerical pairing $K_m(A)\times HC^m(A)\to \CM$ given by
\begin{equation}
\langle [x]_m, [\varphi]_m\rangle \;=\; \begin{cases}
	\frac{1}{(m/2)!(-2\pi \imath)^{m/2}}\,\,\varphi(x,\dots,x) & m\text{ even}\\
	\frac{\imath^{(m+1)/2}}{m!!(-2)^m\pi^{(m+1)/2}}\,\,\varphi(x^{-1},x,x^{-1},\dots,x,x^{-1}) & m \text{ odd}
\end{cases}
\end{equation}
with any representative $x\in M_N(A^\sim)$ which is a projection in the even case and a unitary otherwise.

An important example is the Chern cocycle on a locally compact oriented $m$-dimensional manifold $X$ given by 
$$(f_0,\dots,f_m)\in C_c^1(X)\; \mapsto\; \Ch_X(f_0,\dots,f_m) \;=\; \int_X f_0 df_1 \dots df_m.$$

The normalization constants of the pairing are chosen such that each of the non-trivial $K$-groups $$K_i(C_0(\RM^n)) \;=\; \begin{cases}
    \ZM & \text{ if }n=i\mod 2\\
    0 & \text{ if }n\neq i\mod 2\
\end{cases}$$
is generated by a single projection respectively unitary whose numerical pairing with $\Ch_{\RM^n}$ takes the value $1$ (sometimes called the Bott or dual-Dirac element).

The following is well-known, e.g. \cite[Section 3.3]{Connes94}, but we could not locate proof in the literature so we use the opportunity to give a few details.
\begin{proposition}
\label{prop:cup_product}
Let $\varphi$, $\eta$ be $n$ and $m$ cocycles over local $C^*$-algebras $A$ and $B$ respectively. Then
$$\langle [e]_{n} \otimes [f]_{m}, \varphi\# \eta\rangle \;=\; \langle [e]_{n}, \varphi\rangle \, \langle [f]_{m}, \varphi\rangle.$$
\end{proposition}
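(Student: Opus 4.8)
The plan is to deduce the multiplicativity from the compatibility of the cyclic cup product with the external product in $K$-theory, and to use the concrete pairing formula only to certify that the normalisation constants conspire so that no spurious factor survives. First I would recall the chain-level cup product of cyclic cocycles \cite{Connes94}: for a cyclic cocycle $\varphi$ of degree $n$ over $A$ and a cyclic cocycle $\eta$ of degree $m$ over $B$, the product $\varphi\#\eta$ is a cyclic $(n+m)$-cocycle over $A\otimes B$ given, up to a cyclic coboundary, by a signed shuffle sum of products $\varphi(\cdots)\,\eta(\cdots)$ obtained by distributing the $n+m+1$ arguments, which lie in $A\otimes B$, between the two factors. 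Since the pairing depends only on the class in cyclic cohomology, I may evaluate $\varphi\#\eta$ on any convenient representative of $[e]_n\otimes[f]_m$. By additivity of the pairing and of the external product under direct sums, and since replacing $\varphi$ by $\varphi\#\mathrm{Tr}$ (i.e.\ extending to matrices through the operator trace) leaves the pairing unchanged, it suffices to treat $e$ and $f$ as a single idempotent, resp.\ unitary, in $A^\sim$ resp.\ $B^\sim$.

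The core case is $n$ and $m$ both even. Then $[e]_n\otimes[f]_m$ is represented by the idempotent $e\otimes f$, and the three pairings are evaluated on the constant strings $e^{\otimes(n+1)}$, $f^{\otimes(m+1)}$ and $(e\otimes f)^{\otimes(n+m+1)}$. Feeding the diagonal string into the shuffle formula for $\varphi\#\eta$, every internal product of neighbouring arguments collapses by $e^2=e$ and $f^2=f$, each surviving summand reduces to $\varphi(e^{\otimes(n+1)})\,\eta(f^{\otimes(m+1)})$, and counting the contributing $(\tfrac n2,\tfrac m2)$-shuffles with their signs yields the factor $\binom{(n+m)/2}{n/2}$, so that
\begin{equation*}
(\varphi\#\eta)\big((e\otimes f)^{\otimes(n+m+1)}\big)\;=\;\binom{(n+m)/2}{n/2}\,\varphi(e^{\otimes(n+1)})\,\eta(f^{\otimes(m+1)}).
\end{equation*}
Since $(-2\pi\imath)^{(n+m)/2}=(-2\pi\imath)^{n/2}(-2\pi\imath)^{m/2}$ and $\binom{(n+m)/2}{n/2}=\tfrac{((n+m)/2)!}{(n/2)!\,(m/2)!}$, the combinatorial factor is cancelled exactly by the ratio of the factorial prefactors in the even pairing formula, giving $\langle[e]_n\otimes[f]_m,\varphi\#\eta\rangle=\langle[e]_n,\varphi\rangle\,\langle[f]_m,\eta\rangle$. (One checks on $n=m=2$ that indeed $c_2^2/c_4=2=\binom{2}{1}$, which fixes the half-degree binomial rather than the full one.)

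For one even and one odd degree the target class is represented by a unitary-type string $x^{-1},x,\dots$, and the same collapse-and-count argument applies, the only new point being to verify that the double factorial together with the powers of $\imath$ and $\pi$ in the odd normalisation $\tfrac{\imath^{(m+1)/2}}{m!!\,(-2)^m\pi^{(m+1)/2}}$ multiply correctly across degrees. The genuinely delicate case is odd--odd, where $[u]_n\otimes[v]_m$ lands in the even $K$-group and is represented not by a diagonal string but by the standard projection assembled from the unitaries $u$ and $v$; here one must substitute this projection into $\varphi\#\eta$ and reorganise the resulting expansion, and matching the combination of signs with the constants $\imath,\pi,2$ and the double factorials is the main obstacle. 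I would either push this bookkeeping through directly, or bypass it by invoking that the Chern character from $K$-theory to periodic cyclic homology is multiplicative for external products and that the Connes pairing is compatible with the shuffle and cup products; the explicit even--even computation above then serves precisely to certify that, with the normalisations fixed in the appendix, this abstract compatibility carries no extra constant.
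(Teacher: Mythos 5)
Your even--even computation is essentially the paper's own proof: the paper works in Connes' cycle picture, expands $ef\,(d(ef))^{n+m}=ef(e\,df+f\,de)^{n+m}$, observes that the idempotent identities $e(de)^{k}e=0$ ($k$ odd) kill every word except the concatenations of $AA$- and $BB$-pairs, counts the $\binom{(n+m)/2}{n/2}$ survivors, each equal to $e(de)^n f(df)^m$, and checks that this binomial coefficient is exactly $c_n c_m/c_{n+m}$. Your shuffle count is the same argument in an equivalent chain-level description, and your sanity check $c_2^2/c_4=2=\binom{2}{1}$ is consistent with that.

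The gap is in the odd cases. The paper disposes of them with one structural observation that your proposal lacks: the external product in $K$-theory is compatible with suspension \cite{Connes81}, and the degree-raising suspension maps in cyclic cohomology commute with cup products, so every case --- even--odd and odd--odd alike --- reduces to the even--even computation. Your substitutes do not work as stated. For even--odd, ``the same collapse-and-count'' does not apply verbatim: the product $[e]_0\otimes[v]_1$ is represented not by a diagonal string but by the unitary $e\otimes v+(\one-e)\otimes \one$, whose expansion produces cross terms in $\one-e$ that must be controlled, so the argument needs genuine modification rather than a remark about normalization constants. For odd--odd you yourself flag the bookkeeping as the main obstacle and propose to bypass it by invoking multiplicativity of the Chern character together with compatibility of the Connes pairing with cup products, including the constants; but that assertion, with the normalizations pinned down, \emph{is} the proposition being proved --- indeed the paper states explicitly that it could not locate a proof in the literature, which is the reason this proposition is included at all. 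Invoking it is therefore circular, and as it stands your proposal establishes only the even--even case.
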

\begin{proof}
    A degree $n$ cyclic cocycle over $A$ can equivalently be considered an $n$-dimensional cycle, i.e. a closed graded trace on the degree $n$ elements of the universal differential graded algebra over $A$ \cite[Section 3.1]{Connes94}. When we identify $A$ and $B$ with commuting subalgebras of $A\otimes B$ as $a=a\otimes 1$ and $b=1\otimes b$, the cup product $\varphi\#\eta$ is the graded trace on the universal differential graded algebra over $A\otimes B$ such that
$$(\varphi\#\eta)( (a_0 da_1...da_m) (b_0 db_1...db_n)) \;=\;\varphi(a_0 da_1...da_m) \eta(b_0 db_1...db_n).$$

One may assume that $A$ and $B$ are unital and one can reduce to the case that $n$ and $m$ are both even since the external product in K-theory is compatible with suspensions \cite[Appendix 2]{Connes81} and there are natural degree-raising suspension maps in cyclic cohomology which among other things commute with cup products. Therefore $e$,$f$ are projections and $[e]_0 \otimes [f]_0 = [e\otimes f]_0$.

To compute the pairing with this class one needs to evaluate the trace of the product
\begin{equation}
\label{eq:powerexpand}
ef (d (ef))^{n+m} \;=\; ef( e df + f de)^{n+m}\;=\; ef (A+B)^{n+m}.
\end{equation}
If $k$ is an odd power then the fact that $e$,$f$ are idempotents and $d$ a graded differential imply
$$e (de)^k e\;=\;0\; = \;f (df)^k f$$
and if $k$ is even then
$$e (de)^k e \;=\; e (de)^k, \quad f(df)^k f = f (df)^k.$$
Due to the first identity the only terms that survive when expanding the product in \eqref{eq:powerexpand} are those that are products of factors $AA$ and $BB$ and since the trace vanishes unless there are exactly $n/2$ of the former and $m/2$ of the latter there are precisely $${\left(\frac{n}{2}+\frac{m}{2}\right)!}/({\frac{n}{2}!})$$
terms with non-vanishing trace. Each of those is equal to $e (de)^n f(df)^m$ since $e$, $f$, $de$ and $df$ graded commute. Hence we find
$$(\varphi\#\eta)(ef (d (ef))^{n+m})\;=\;\frac{\left(\frac{n}{2}+\frac{m}{2}\right)!}{\frac{n}{2}!}\varphi(e (de)^n) \eta(f (df)^m).$$
If $c_n$ is the normalization constant for the pairing with cyclic cohomology then this pre-factor is precisely equal to $\frac{c_n c_m}{c_{n+m}}$.
\end{proof}

This justifies that the K-groups $K_n(C_0(\RM^n))$ are labeled by $\Ch_{\RM^n}$ since its generator is the $n$-fold exterior power of the generator of $K_1(C_0(\RM))$. Let us next revisit the torus:

\medskip 

{\noindent \bf Proof (of Proposition~\ref{prop:Chern_torus}).}
On $C^\infty(\SM^1)$ there are two non-trivial cocycles: a $0$-cycle $\int$ which is the normalized Lebesgue integral and a $1$-cocycle $\Ch_{\SM^1}$ representing the normalized Chern cocycle with
$$\langle K_0(C(\SM^1)), \int \rangle\; =\; \ZM, \qquad \langle K_1(C(\SM^1)), \Ch_{\SM^1}\rangle \;=\; \ZM$$
being one-to-one correspondences. Since $C(\TM^d)=C(\SM^1)^{\otimes d}$ the K-groups of $K_i(C(\TM^d)$ are by the Künneth formula generated by exterior powers of the generators of $K_*(C(\SM^1))$  which can uniquely be labeled by the Chern cocycles $\Ch_J$, which are the cup products of appropriate copies of $\int$ and $\Ch_{\SM^1}$.
\hfill $\Box$

Let us also make explicit the boundary map between the boundary of a sphere and its interior needed in the main text:
\begin{proposition}
\label{prop:sphere_boundary_map}
The boundary map $\partial: K_{n}(C(\SM^n))\to K_{n+1}(C_0(\RM^{n+1}))$ of the exact sequence
$$0 \to C_0(\RM^{n+1}) \to C(\DM^{n+1}) \to C(\SM^n) \to 0$$
is determined uniquely by the duality of Chern cocycles
$$\langle [x]_{n}, \Ch_{\SM^n} \rangle\; =\; \langle \partial([x]_n), \Ch_{\RM^{n+1}}\rangle.$$
\end{proposition}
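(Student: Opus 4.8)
The plan is to combine the six-term exact sequence of the defining extension with the contractibility of $\DM^{n+1}$, and then to fix the normalization by relating the two Chern cocycles through the dual boundary map. First I would record the relevant $K$-groups. Since $\DM^{n+1}$ is contractible, $C(\DM^{n+1})$ is homotopy equivalent to $\CM$, so $K_0(C(\DM^{n+1}))\cong\ZM$ and $K_1(C(\DM^{n+1}))=0$. The reduced group $\tilde K_n(C(\SM^n))\cong\ZM$ is detected by $\Ch_{\SM^n}$, the fundamental-class pairing (exactly as for $\SM^1$ inside the proof of Proposition~\ref{prop:Chern_torus}), while $K_{n+1}(C_0(\RM^{n+1}))\cong\ZM$ is generated by the Bott/dual-Dirac element pairing to $1$ with $\Ch_{\RM^{n+1}}$, as recalled above. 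Feeding $K_*(C(\DM^{n+1}))$ into the six-term sequence of
$$0 \to C_0(\RM^{n+1}) \to C(\DM^{n+1}) \to C(\SM^n) \to 0,$$
the vanishing of $K_1(C(\DM^{n+1}))$ together with the injectivity of the rank map $K_0(C(\DM^{n+1}))\to K_0(C(\SM^n))$ forces $\partial$ to restrict to an isomorphism $\tilde K_n(C(\SM^n))\xrightarrow{\cong} K_{n+1}(C_0(\RM^{n+1}))$ of groups each isomorphic to $\ZM$. Consequently $\partial$ is completely determined once one knows that it sends a Chern-generator to $\pm$ the Chern-generator, so the entire content of the proposition reduces to fixing the sign to be $+1$ in the claimed duality.

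To fix the sign I would invoke the standard adjunction between the $K$-theory boundary map and the boundary map in (periodic cyclic) cohomology, $\langle \partial[x]_n, \varphi\rangle = \langle [x]_n, \delta\varphi\rangle$, where $\delta\colon HC^{n+1}(C_0(\RM^{n+1}))\to HC^{n}(C(\SM^n))$ is the connecting map of the dual sequence. The key point is that $\delta\,\Ch_{\RM^{n+1}} = \Ch_{\SM^n}$, which at the level of forms is precisely Stokes' theorem: integrating $f_0\,df_1\cdots df_{n+1}$ over the disk against a lift is computed by the restriction of the form to the boundary sphere. Applying this to $\varphi=\Ch_{\RM^{n+1}}$ yields the stated identity with constant $+1$. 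As an alternative that avoids citing the adjunction in this generality, I would verify the base case $n=0$ by hand: there $\partial$ is the exponential map of $0\to C_0(\RM)\to C([-1,1])\to\CM\oplus\CM\to 0$, which sends the projection $(1,0)$ — pairing to $1$ with $\Ch_{\SM^0}=\mathrm{ev}_+-\mathrm{ev}_-$ — to the winding-number-one Bott unitary on $\RM$, pairing to $1$ with $\Ch_{\RM}$; one then propagates to all $n$ through the suspension isomorphism $\SM^{n+1}=\Sigma\SM^n$, $C_0(\RM^{n+1})^\sim=C(\SM^{n+1})$, using compatibility of suspension with the fundamental Chern cocycles.

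The main obstacle is precisely this bookkeeping of the sign and of the parity-dependent normalization constants in the pairing formula: both $\partial$ and the numerical pairing carry such constants, and one must track them carefully through either the Stokes/adjunction identity or the explicit exponential-map computation in order to be certain the proportionality constant is $+1$ and not $-1$. Everything else — that $\partial$ is an isomorphism between the two copies of $\ZM$, and that the Chern cocycles detect them — is formal and follows from the exactness argument of the first paragraph.
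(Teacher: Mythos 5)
Your proposal is correct in structure but takes a genuinely different route from the paper. The paper makes no abstract exactness argument at all: it fixes the outward-normal orientation, takes the explicit generators $[Q^W_n]_0$ (for $n$ even) and $[U^D_n]_1$ (for $n$ odd) of $K_n(C(\SM^n))$ constructed in \cite{St5}, uses the fact from that reference that the boundary map of the disk extension exchanges them, i.e. $\partial([Q^W_n]_0)=[U^D_{n+1}]_1$ and $\partial([U^D_n]_1)=[Q^W_{n+1}]_0$ viewed through the inclusion $K_{n+1}(C_0(\RM^{n+1}))\to K_{n+1}(C(\SM^{n+1}))$, and then settles the normalization by directly evaluating the pairings of these generators with the sphere Chern cocycles, citing \cite{CSB}; both families of pairings evaluate to the same number $(-1)^{k-1}$, whence the constant in the duality is $+1$. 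Your first paragraph (six-term sequence plus contractibility of the disk, reducing everything to one sign) is correct and is exactly the uniqueness half of the proposition; the difference lies entirely in how the sign is fixed. What your route buys is independence from explicit generator constructions; what the paper's route buys is that the sign --- which is the entire non-formal content here --- is settled by concrete, finite computations already available in the literature rather than by convention tracking.

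One warning about the step you yourself flag as the main obstacle: the adjunction $\langle\partial[x]_n,\Ch_{\RM^{n+1}}\rangle=\langle[x]_n,\delta\,\Ch_{\RM^{n+1}}\rangle$ with $\delta\,\Ch_{\RM^{n+1}}=\Ch_{\SM^n}$ ``by Stokes'' cannot be waved through with constant $+1$. Under the paper's own conventions the analogous suspension duality (Theorem~\ref{th:duality1}, quoted from \cite{SSt}) carries a degree-dependent sign $(-1)^m$ rather than $+1$, and the proposition's constant is $+1$ only because of the specific outward-normal orientation convention; so proving the adjunction with the paper's parity-dependent pairing constants and orientations is not less work than the paper's computation. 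Your fallback --- the explicit $n=0$ exponential-map calculation followed by suspension induction, using Proposition~\ref{prop:cup_product} for multiplicativity of the pairings --- is the more robust way to finish, but there too you must track the orientation mismatch between $\SM^{n+1}\setminus\{N,S\}\simeq\SM^n\times\RM$ and the ordering of factors in the cup product, each of which contributes signs. As written, your proof correctly identifies but does not execute this bookkeeping; since the proposition is true, the bookkeeping does close, but it is the crux of the argument rather than a routine verification.
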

\begin{proof}
        To fix the relative sign we choose the orientations of $\SM^n$ and $\RM^{n+1}$ such that we can think of $\SM^n$ as the boundary of $\RM^{n+1}$  oriented by the outward pointing normal. In \cite{St5} one can find generators $[Q_n^W]_0$ of $K_n(\SM^n)$ for even $n$ and $[U_n^D]_1$ for odd $n$ such that under the natural inclusion $K_{n+1}(\RM^{n+1})\to K_{n+1}(\SM^{n+1})$ one has $\partial([Q_d^W]_0)=[U^D_{n+1}]_1$ in the even and $\partial([U^D_n]_1)=[Q_{n+1}^W]_0$ in the odd case. Since the inclusion preserves orientations and therefore Chern numbers it remains to compute the pairings of the two sets of generators with the Chern cocycles on $\SM^n$. This is essentially done in \cite[Proposition 3]{CSB} and is not a difficult task since the integrand in the Chern cocycle turns out to be proportional to the volume form of $\SM^n$ and hence one only needs to compute the proportionality constant. Careful evaluation of those with our conventions and the explicit generators from \cite{St5} we obtain $\langle [Q_{2k}^W]_0, \Ch_{\SM^{2k}}\rangle=(-1)^{k-1}$ and $\langle [U_{2k+1}^D]_1, \Ch_{\SM^{2k+1}}\rangle=(-1)^{k-1}$.
\end{proof}


\begin{thebibliography}{99}
\bibliographystyle{unsrt}


\bibitem{AMZ}
A.~Alldridge, C.~Max, M.~R.~Zirnbauer,{\sl Bulk-Boundary Correspondence for Disordered Free-Fermion Topological Phases}, Commun. Math. Phys. {\bf 377}, 1761–1821 (2020).

\bibitem{Alldridge}
A.~Alldridge, T.~R.~Johansen, {\sl Spectrum and Analytical Indices of the C*-Algebra of Wiener-Hopf Operators}, Adv. Math. {\bf 218} (1), 163-201 (2008).

\bibitem{Bel95} J. Bellissard, {\sl Change of the Chern number at band crossings}, {\tt arXiv:9504030}.

\bibitem{Bal18} G.~Bal, {\sl Topological protection of perturbed edge states} , Communications in Mathematical Sciences, {\bf 17}, 193–225 (2019).

\bibitem{Bal19} G.~Bal, {\sl Continuous bulk and interface description of topological insulators},  J. Math. Phys. {\bf 60}, 081506 (2019).

\bibitem{Bal21} G.~Bal, {\sl Topological invariants for interface modes}, Commun. Partial. Differ. Equ. {\bf 47} (8), 1636-1679 (2022).

\bibitem{BenedettiPetronio} R. Benedetti, C. Petronio, Lectures on hyperbolic geometry, ( Universitext. SpringerVerlag,
Berlin, 1992).

\bibitem{BBD}
G.~Bal, S.~Becker, A.~Drouot, C.~F.~Kammerer, J.~Lu, A.~B.~Watson, {\sl Edge state dynamics along curved interfaces}, SIAM Journal on Mathematical Analysis {\bf 55} (5), 4219-4254 (2023).

\bibitem{Bal23}
G.~Bal, {\sl Topological charge conservation for continuous insulators}, 
J. Math. Phys. {\bf 64}, 031508 (2023).

\bibitem{BenalcazarScience2017} W. A. Benalcazar, B. A. Bernevig, T. L. Hughes, {\sl Quantized electric multipole insulators}, Science {\bf 357}, 61–66 (2017).  

\bibitem{Bla} B.~Blackadar, {\sl $K$-theory for operator algebras}, (Cambridge University Press, Cambridge, 1998).

\bibitem{BKR}
C.~Bourne, J.~Kellendonk, A.~Rennie, {\sl The K-Theoretic Bulk–Edge Correspondence for Topological Insulators}, Ann. Henri Poincar\'e {\bf 18}, 1833–1866 (2017).

\bibitem{BM19}
C.~Bourne, B.~Mesland, {\sl Index Theory and Topological Phases of Aperiodic Lattices}, Ann. Henri Poincaré {\bf 20}, 1969–2038 (2019). 

\bibitem{CSB} A.~L.~Carey, H.~Schulz-Baldes, {\sl Spectral flow of monopole insertion in topological insulators},
Commun. Math. Phys. {\bf 370}, 895-923 (2019).

\bibitem{Connes81}  A.~Connes, {\sl An analogue of the Thom isomorphism for crossed products of a C$^\ast$ algebra by an action of $\RM$}, Advances in Math. {\bf 39}, 31-55 (1981).

\bibitem{Connes94} A.~Connes, {\sl Noncommutative geometry}, (Academic Press, San Diego, 1994).

\bibitem{Dav} K.~R.~Davidson, {\em $C^*$-Algebras by Example.}, (Am. Math. Soc., Providence, 1996).

\bibitem{Deni1} 
G.~De~Nittis, E.~Guti\'errez, {\sl Quantization of Edge Currents Along Magnetic Interfaces: A K-Theory Approach}. Acta Appl. Math. {\bf 175}, 6 (2021).

 \bibitem{Dani2} 
G. De Nittis, J. Gomez, D.~Polo~Ojito, {\sl On the $K$-theory of magnetic algebras: Iwatsuka case}. {\tt arXiv:2410.04531}

\bibitem{DS2014} C. Debord, G. Skandalis, {\sl Adiabatic groupoid, crossed product by \(\mathbb{R}_+^\ast\) and pseudodifferential calculus,} Adv. Math. {\bf 257}, 66-91 (2014).

\bibitem{Dixmier1982} J. Dixmier, {\sl C*-Algebras}, (North-Holland Publishing Company, 1982).

% \bibitem{Dom} F.~Dombrowski, F.~Germinet, G.~Raikov, {\sl Quantization of edge currents along magnetic barriers and magnetic guides}. Ann. Henri Poincaré {\bf 12}, 1169–1197 (2011).

\bibitem{Drouout21}
A.~Drouot, {\sl Microlocal analysis of the bulk-edge correspondence}, Comm. Math. Phys. {\bf 383}, 2069-2112 (2021).

\bibitem{ENN93}
G.~A.~Elliott, T.~Natsume, R.~Nest, {\sl The Heisenberg group and K-theory}, K-theory {\bf 7}, 409-428 (1993).

\bibitem{ENN96}
G.~A.~Elliott, T.~Natsume, R.~Nest, {\sl The Atiyah-Singer index theorem as passage to the classical limit in quantum mechanics}, Comm. math. phys {\bf 182}, 505-533 (1996).

\bibitem{vEY2019} E. van Erp, R. Yuncken, {\sl A groupoid approach to pseudodifferential calculi,} J. Reine Angew. Math. {\bf 756}, 151-182 (2019)

\bibitem{FellPAMS1962} J. M. G. Fell, {\sl A Hausdorff Topology for the Closed Subsets of a Locally Compact Non-Hausdorff Space}, Proc. American Mathematical Society {\bf 13}, 472-476 (1962).

\bibitem{FHK} A.~Forrest, J.~Hunton, J.~Kellendonk, {\sl Topological Invariants for Projection Method Patterns}, (AMS, Providence, 2002).

\bibitem{GeierPRB2018}
M.~Geier, L.~Trifunovic, M.~Hoskam, P.~W.~Brouwer, {\sl Second-order topological insulators and superconductors with an order-two crystalline symmetry}, Phys. Rev. B {\bf 97}, 205135 (2018). 

\bibitem{HayashiCMP2018} S. Hayashi, {\sl Topological invariants and corner states for Hamiltonians on a three-dimensional lattice}, Commun. Math. Phys. {\bf 364}, 343–356 (2018).

\bibitem{HayashiLMP2021} S. Hayashi, {\sl Classification of topological invariants related to corner states}, Lett Math Phys {\bf 111}, 118 (2021).

\bibitem{HayashiLMP2019}  S. Hayashi, {\sl Toeplitz operators on concave corners and topologically protected corner states}, Lett. Math. Phys. {\bf 109}, 2223–2254 (2019).

\bibitem{HayashiCMP2022}
S.~Hayashi, {\sl An Index Theorem for Quarter-Plane Toeplitz Operators via Extended Symbols and Gapped Invariants Related to Corner States} Commun. Math. Phys. {\bf 400}, 429–462 (2023). 

%\bibitem{HelfferSjostrand88} B.~Helffer, J. Sjostrand, {\sl Analyse semi-classique pour l'\'equation de Harper (avec application \`a l'\'equation de Schr\"odinger avec champ magn\'etique)}, M\'em. Soc. Math. France (N.S.) \textbf{34} (1988) 1--113.

%\bibitem{HelfferSjostrand90} B.~Helffer, J. Sjostrand, {\sl Analyse semi-classique pour l'\'equation de Harper. II. Comportement semi-classique pr\`es d'un rationnel}, M\'em. Soc. Math. France (N.S.) \textbf{40} (1990) 1--139.

%\bibitem{HelfferSjostrand89} B.~Helffer, J. Sjostrand, {\sl Semiclassical analysis for Harper's equation. III. Cantor structure of the spectrum}, M\'em. Soc. Math. France (N.S.) \textbf{39} (1989) 1--124.

\bibitem{Higson2008} N. Higson, {\sl The Tangent Groupoid and the Index Theorem,} in Clay Mathematics Proceedings, Vol. 10, pp. 307–355 (2008).

\bibitem{HughesPRB2011} T.~L.~Hughes, E.~Prodan, B.~A.~Bernevig, {\sl Inversion-symmetric topological insulators}, 
Phys. Rev. B {\bf 83}, 245132 (2011).

\bibitem{Kasparov95} G.~G.~Kasparov, {\sl K-theory, group $C^*$-algebras, and higher signatures (conspectus)}, in {\sl Novikov Conjectures, Index Theorems and Rigidity, Vol. 1}, (Cambridge University Press, Cambridge, 1995).

\bibitem{Kellendonk17}
J.~Kellendonk, {\sl On the $C^*$-Algebraic Approach to Topological Phases for Insulators}, Ann. Henri Poincar\'e {\bf 18}, 2251–2300 (2017).

%\bibitem{KS2004} J. Kellendonk, H. Schulz-Baldes, {\sl Quantization of edge currents for continuous magnetic operators,}  J. Funct. Anal. {\bf 209}, 388-413 (2004).

\bibitem{KS20042} J. Kellendonk, H. Schulz-Baldes, {\sl Boundary maps for $C^*$-crossed products with $\RM$ with an
application to the quantum Hall effect,}   Commun. Math. Phys. {\bf 249}, 611-637 (2004).

\bibitem{Kot} Kotani, M.; Schulz-Baldes, H.; Villegas-Blas, C.: {\sl Quantization of interface currents.} J. Math. Phys. {\bf 55},121901 (2014)

\bibitem{LS} D.~Lenz, P.~Stollmann, {\sl Delone dynamical systems and associated random operators}, in {\sl Operator Algebras and Mathematical Physics}, 267–285, (Theta, Bucharest, 2003)).

%\bibitem{LiuAOP2020} Y. Liu, Y. Liu, E. Prodan, {\sl Braiding flux-tubes in topological quantum and classical lattice models from class-D}, Annals of Physics {\bf 414}, 168089 (2020).

\bibitem{MSS2006} S. T. Melo, T. Schick, E. Schrohe, {\sl A K-theoretic proof of Boutet de Monvel's index theorem for boundary value problems,} J. Reine Angew. Math. {\bf 599}, 217-233 (2006). 

\bibitem{Monthubert}
B.~Monthubert, F.~Pierrot, {\sl Indice analytique et groupoïdes de Lie}, C. R. Acad. Sci. Paris S\'{e}r. I Math. {\bf 325} (2), 193–198  (1997).
	
\bibitem{MuhlyRenault}
P.~S.~Muhly, J.~N.~Renault, {\sl C*-Algebras of Multivariable Wiener-Hopf Operators}, Trans. Am. Math. Soc. {\bf 274} (1), 1-44 (1982).

\bibitem{Nica}
A. Nica, {\sl Some Remarks on the Groupoid Approach to Wiener-Hopf
Operators}, J. Operator Theory {\bf 18} (1), 163-198 (1987).

\bibitem{PST03a}
G.~Panati, H.~Spohn, S.~Teufel, {\sl Space-Adiabatic Perturbation Theory}, Adv. Theor. Math. Phys. {\bf 7}, 145-204 (2003).

\bibitem{PST03b}
G.~Panati, H.~Spohn, S.~Teufel, {\sl Effective dynamics for Bloch electrons: Peierls substitution and beyond},  Comm. Math. Phys. {\bf 242}, 547-578 (2003).
 
 \bibitem{Dani} 
D.~Polo~Ojito, {\sl Interface currents and corner states in magnetic quarter-plane systems}. Adv. Theor. Math. Phys. {\bf27} ({ 6}) 1813-1855  (2023).

\bibitem{DET} D.~Polo~Ojito, E.~Prodan, T.~Stoiber, {\sl 
$C^*$-framework for higher order bulk-boundary correspondence}, {\tt  arXiv:2406.04226v1}.

\bibitem{PSbook}
E.~Prodan, H.~Schulz-Baldes, {\sl Bulk and Boundary Invariants for Complex Topological Insulators: From $K$-Theory to Physics}, (Springer International, Cham, 2016).

\bibitem{ProdanJPA2021} E. Prodan, {\sl Topological lattice defects by groupoid methods and Kasparov’s KK-theory}, J. Phys. A: Math. Theor. {\bf 54}, 424001 (2021).

\bibitem{Raeburn88} I.~Raeburn, {\sl On crossed products and Takai duality}, Proc. Edinburgh Math. Soc. {\bf 31}, 321-330 (1988).

\bibitem{SSt}
H.~Schulz-Baldes, T.~Stoiber, { {\sl Harmonic analysis in operator algebras
		and its applications to index theory and solid state systems}, (Springer, Cham, 2022). }
	
\bibitem{SSt2}
H.~Schulz-Baldes, T.~Stoiber, {\sl Spectral localization for semimetals and Callias operators}, J. Math. Phys. {\bf 64} (8), 081901 (2023).

\bibitem{St5} H.~Schulz-Baldes, T.~Stoiber, {\sl The generators of the K-groups of the sphere}, Expositiones Mathematicae {\bf 41}, 125519 (2023).

\bibitem{SchindlerNeupert}  F.~Schindler, T.~Neupert, {\sl Topological Crystalline Insulators}. In: Bercioux, D., Cayssol, J., Vergniory, M., Reyes Calvo, M. (eds) Topological Matter. Springer Series in Solid-State Sciences, vol 190. Springer, Cham  (2018).

\bibitem{SchindlerSciAdv2018}
F.~Schindler, A.~M.~Cook, M.~G.~Vergniory, Z.~Wang, S.~S.~P.~Parkin, B.~A.~Bernevig, T.~ Neupert, {\sl Higher-order topological insulators}, Sci. Adv. {\bf 4}, eaat0346
(2018).

\bibitem{ShiozakiPRB2014} K. Shiozaki, M. Sato, {\sl Topology of crystalline insulators and superconductors}, Phys. Rev. B {\bf 90}, 165114 (2014).

\bibitem{ShiozakiPRB2016} K. Shiozaki, M. Sato, K. Gomi, {\sl Topology of nonsymmorphic crystalline insulators and superconductors}, Phys. Rev. B {\bf 93}, 195413 (2016).

\bibitem{TeoKane2010}
J.~C.~Y.~Teo, C.~L.~Kane, {\sl 
Topological defects and gapless modes in insulators and superconductors}, Phys. Rev. B {\bf 82}, 115120 (2010). 

\bibitem{Trifunovic} 
L.~Trifunovic, P.~W.~Brouwer, {\sl Higher-Order Bulk-Boundary Correspondence for Topological Crystalline Phases}, Phys. Rev. X {\bf 9}, 011012 (2019).

%\bibitem{Guo} G.~C.~Thiang, H.~Zhang, 
%{\sl Bulk-interface correspondences for one-dimensional topological materials with inversion symmetry.} Proc. Math. Phys. Eng. Sci., {\bf 479} (2023)

\bibitem{Wil} D. P. Williams,  {\sl Crossed Products of $C*$-Algebras}. Am. Math. Soc., Providence (2007).

%Everything below is not cited anywhere but should be




\end{thebibliography}
\end{document}